\DeclareMathOperator{\geom}{Geom}
\DeclareMathOperator*{\E}{E}
\DeclareMathOperator*{\Var}{Var}
\DeclareMathOperator*{\Bin}{Bin}
\DeclareMathOperator*{\Poisson}{Poisson}
\DeclareMathOperator*{\CM}{CM}
\DeclareMathOperator*{\irregularcuckoohashing}{ICH}
\DeclareMathOperator*{\dTV}{\textit{d}_{TV}}
\DeclareMathOperator{\arctanh}{arctanh}
\newcommand{\narrowldots}{\mathinner{\ldotp\!\ldotp\!\ldotp}}
\newcommand{\maxdeg}{M}
\title{A New Impossibility Result for Online Bipartite Matching Problems%
}
\author{Flavio Chierichetti\thanks{Supported in part by a  Google Focused Research Award,   by
BiCi -- Bertinoro international Center for informatics, and by the PRIN project 20229BCXNW funded by the European Union - Next Generation EU, Mission 4 Component 1,
CUP B53D23012910006.}\\
Sapienza University of Rome\\
\texttt{\large flavio@di.uniroma1.it}
\and
Mirko Giacchini\footnotemark[1]\\
Sapienza University of Rome\\
\texttt{\large giacchini@di.uniroma1.it}
\and
Alessandro Panconesi\footnotemark[1]\\
Sapienza University of Rome\\
\texttt{\large ale@di.uniroma1.it}
\and
Andrea Vattani\\
Reddit\\
\texttt{\large andrea.vattani@reddit.com}
}
\date{}
\begin{document}

\maketitle

\begin{abstract}
Online Bipartite Matching with random user arrival is a fundamental problem in the online advertisement ecosystem. Over the last 30 years, many algorithms and impossibility results have been developed for this problem. In particular, the latest impossibility result was established by Manshadi, Oveis Gharan and Saberi~\cite{mos11} in 2011. Since then, several algorithms have been published in an effort to narrow the gap between the upper and the lower bounds on the competitive ratio.

\smallskip

In this paper we show that no algorithm can achieve a competitive ratio better than $1- \frac e{e^e} = 0.82062\ldots$, improving upon the $0.823$ upper bound presented in~\cite{mos11}.
Our construction is simple to state, accompanied by a fully analytic proof, and yields a competitive ratio bound  intriguingly similar to $1 - \frac1e$, the optimal competitive ratio for the fully adversarial Online Bipartite Matching problem. 

\smallskip

Although the tightness of our upper bound remains an open question, we show that our construction is extremal in a natural class of instances.
\end{abstract}

\section{Introduction}

Online Bipartite Matching problems are central to the online advertising ecosystem \cite{mos11,my11,kmt11,mehta13} and have numerous other applications, including efficient packet switching and routing \cite{az06,ap02}, crowdsourcing tasks \cite{tsdwc16},  market clearing problems \cite{bsz06} and ride-sharing platform optimization \cite{dssx21}.  
In this paper, we establish a new upper bound of $1 - e^{1-e}$
on the competitive ratio for several well-studied variants of the problem. %
This result marks the first improvement of the upper bound for this classic problem in more than a decade.

To explain our result in context, we begin by recalling the problem. We are given a bipartite graph where one side of the bipartition consists of a known set of ``ads'' (or ``advertisers''), while the other side consists of ``users'' (or ``ad slots''), whose neighbor sets are initially unknown. Users arrive sequentially, %
and the generic user's neighbors are revealed upon arrival. When a user arrives, the online algorithm can match, irrevocably, the user to one of its available neighbors; the algorithm's aim is to maximize the size of the matching at the end of the sequence.
The competitive ratio of an algorithm, in this case, is defined as the expected size of the matching produced by the algorithm divided by the expected size of the maximum possible matching. (We consider expectations since the algorithm can be randomized, and the input graph can be sampled from a distribution).

A strong motivation for studying this problem comes from online advertising, where edges represent ads that a user is interested in. %
Budget constraints limit how often each ad can be shown; without loss of generality, we assume that
each ad can be matched to at most one user (ads with budgets allowing for multiple impressions can be reproduced multiple times in the graph).
In this setting, a matching corresponds to a set of ads that will be %
shown to users, and %
the competitive ratio indicates how effectively the algorithm maximizes revenue.

This problem was introduced by Karp, Vazirani and Vazirani in a classic paper \cite{kvv90}. In their setup, the graph and the ordering of the users were chosen adversarially, and they presented a simple algorithm called RANKING that achieves a competitive ratio of \( 1 - e^{-1} \) (see \cite{kvv90,gm08}), which was also proven to be tight \cite{kvv90}. %
The assumptions of adversarial graphs and adversarial user orderings have since been relaxed to develop algorithms with better competitive ratios. %

An important variant of the problem assumes that %
the user ordering is sampled uniformly at random, instead of being produced by an adversary \cite{gm08}.
Additionally, to move beyond adversarial graphs, researchers have explored probabilistic models of user behavior \cite{msvv05}. See~\cite{mehta13} for a comprehensive overview.

\smallskip

In particular, consider the power set \( 2^{A} \) of the set of ads $A$, and let \( P \) be a probability distribution over \( 2^A \). When a user arrives, a set \( S \) of ads is sampled according to \( P \), and the ads in \( S \) become the user's neighbors. In this setting, \( P \) may be known or unknown, and both cases have been studied.  This model of random bipartite graphs has been studied extensively and is motivated by modern machine learning-driven methods used to identify promising ads for users.

A particularly interesting special case of this last model, known  as the ``irregular cuckoo hashing'' model,\footnote{We recall that in (a standard variant of) cuckoo hashing \cite{pf04,dw07}, given an integer $k \ge 1$, each item $x$ is hashed to $k$ uniform-at-random slots of a hashtable; the item is stored in an empty slot, if one is available. In irregular cuckoo hashing~\cite{dgmmpr10}, one obtains $k$ by applying another hash function to $x$  --- this way, different items can be hashed to a different number of slots.} %
arises when a sample from \( P \) is obtained by first sampling the degree of a user from a random variable $D$ taking values over the non-negative integers, and then sampling uniformly-at-random with replacement a number of ads  equal to the sampled degree. %
This process generates a multiset of ads, but duplicate ads can be removed, leaving the remaining set as the user's neighbors. %

\smallskip

We can now state our result more precisely. We show that no online algorithm can achieve a competitive ratio greater than $1 - e^{1-e} = 0.82062\ldots$ across all the models discussed above.  %
We establish our bound in the irregular cuckoo hashing model and so, by definition, the bound extends to all the other models and is the strongest known bound for all those variants. %
The previous strongest upper bound, established in 2011 by Manshadi, Oveis Gharan, and Saberi~\cite{mos11}, was $0.823$. (Table~\ref{tab:bounds} summarizes the state of the art). While our improvement is small, it is comparable in magnitude to many previous algorithmics advancements for this problem. 

The %
upper bound of $0.823$ on the competitive ratio from~\cite{mos11} relies on the irregular cuckoo hashing constructions of Dietzfelbinger et al.~\cite{dgmmpr10} and was derived through numerical optimization, without yielding a closed-form expression.  
In contrast, our construction enables an exact analytical computation of its optimal competitive ratio. Furthermore, our %
construction uses simple rational probabilities, whereas the optimal probabilities in~\cite{mos11} remain unspecified and likely irrational. 

\begingroup
\renewcommand{\arraystretch}{1.5}
\begin{table}[t]
    \centering
    \small
    \begin{tabular}{|c|c|c|c|}
        \hline
        \multirow{2}{*}{\textbf{Setting}} & \multicolumn{3}{c|}{\textbf{Competitive Ratio}} \\
        \cline{2-4}
        & Best Algorithm & 
        Our UB & Previous Best UB  \\
        \hline
        \hline
        {\footnotesize Irregular Cuckoo Hashing}&\multirow{2}{*}{$0.716$ \cite{hsy22}}  & \multirow{4}{*}{$1 - \frac{e}{e^e} = 0.820\!\narrowldots$} & \multirow{4}{*}{$0.823$ \cite{mos11}}    \\
        \cline{1-1}
        {\footnotesize IID Users (Known Distribution)} & &  & \\
        \cline{1-2}
        {\footnotesize IID Users (Unknown Distribution)} & \multirow{2}{*}{$0.696$ \cite{my11}}&  &  \\
        \cline{1-1}
        {\footnotesize Adversarial Graph, UAR Ordering} &  &  &  \\
        \hline
        \hline
        {\footnotesize Adversarial Graph and Ordering} & $1-\frac1e = 0.632\!\narrowldots$ \cite{kvv90,gm08} & \cellcolor{gray!20} &$1-\frac1e = 0.632\!\narrowldots$ \cite{kvv90}   \\
        \hline
    \end{tabular}
    \caption{Bounds for various Online Bipartite Matching problem variants. Each row corresponds to a different variant, listed in order from the easiest to the hardest. %
    }
    \label{tab:bounds}
\end{table}
\endgroup

\smallskip

The random graphs used to prove our bound are based on a simple random variable \( D \) taking values over the non-negative integers:  
\[
\Pr\left[D=0\right] = \Pr\left[D=1\right] = 0,  \quad \Pr\left[D=d\right] = \frac{1}{d \cdot (d-1)}, \text{ for } d \ge 2,
\]
or, equivalently, $\Pr\left[D > d\right] = \nicefrac1d$ for each positive integer $d$.
As discussed earlier, when a user arrives, a value \( d \) is drawn from \( D \), and \( d \) ads are chosen uniformly at random (with replacement) as neighbors. The graph consists of \( n \) ads and \( n \) users. As we will clarify later, this distribution greedily ensures the ``maximum possible'' number of vertices of degree \( d \) for  \( d = 0,1,2,\ldots \), while maintaining a quasi-complete matching, i.e., one matching a $1-o(1)$ fraction of the users.

Intuitively, selecting user neighbors uniformly at random while keeping user degrees low impairs the performance of any online algorithm: over time, newly arriving users are likely to find all their randomly assigned neighbors already matched. However, establishing a strong upper bound also requires ensuring that the maximum matching remains large --- this is the main technical challenge in our analysis.

\smallskip

Whether our upper bound is tight for some or all of the four problem variants mentioned above remains an open question. 
Notably, the bound involves the Euler's number, in a manner analogous to the optimal competitive ratio of $1-\frac1e$ of the fully-adversarial Online Bipartite Matching problem.
While we cannot determine whether our upper bound can be improved, we show that it possesses a certain extremal property and that it is the strongest bound possible for a  natural class of instances.

\smallskip
\noindent
{\bf Paper Organization.} Section~\ref{sec:overview} gives an overview of our main construction, of its relation to previous upper bounds,  and of the techniques we used to study it. Section~\ref{sec:relatedwork} reviews related work. Section~\ref{sec:preliminaries} introduces key tools, techniques, and notation. In Section~\ref{sec:impossibility}, we present our construction and prove the main result: the \( 1 - {e}^{1-e} \) upper bound on the competitive ratio of Online Bipartite Matching Problems. Section~\ref{sec:tightness} establishes the extremality of our construction, showing that it greedily maximizes the number of low-degree users while ensuring a quasi-complete matching. Section~\ref{sec:uneven} extends our construction to arbitrary user-to-ad ratios, demonstrating that equibipartite graphs yield the strongest bound. We conclude with open questions in Section~\ref{sec:conclusion}.

\smallskip

All the proofs missing from the main body of this paper can be found in its Appendix.

\section{Overview}\label{sec:overview}
Our irregular cuckoo hashing instance contains $n$ users and $n$ ads, and assigns probability $\frac1{d \cdot (d-1)}$ to each user-degree $d \ge 2$ --- in other words, it posits that the probability that a user has degree strictly larger than the generic positive integer $d$ is exactly $\nicefrac1d$.
We obtained this distribution by greedily maximizing the number of users of degree $d =0,1,2,3,\ldots$ while guaranteeing that the random graph admits a quasi-complete matching from the user side --- a matching that pairs all users except for at most $o(n)$ of them. %
Observe that, in a cuckoo hashing instance, the smaller the user-degrees the less likely it is for an online algorithm to match users. The goal of our distribution is to minimize the number of online matches while guaranteeing that the graph admits a quasi-complete matching --- in other words, the distribution aims to minimize the numerator of the competitive ratio while %
making its denominator as large as possible. (We remark that most previous upper bounds were also based on  graphs admitting quasi-complete matchings; see, e.g., \cite{bk10,mos11,gm08,kvv90}.) %

\smallskip

{\bf Our Main Construction.} It is easy to see that, in order for a  cuckoo hashing instance containing $n$ ads, and no more than $n$ users, to be quasi-complete from the user side, the instance must contain at most $o(n)$ users of degree $0$ and $o(n)$ users of degree $1$. %
Furthermore, classical results can be used to prove that, if one aims to be quasi-complete from the user side, and all users have degree $2$, then the maximum number of  users is $c_2 \cdot n \pm o(n)$, for $c_2 = \frac12$.\footnote{The feasibility of $c_2 = \frac12$ can be obtained as a corollary of a classical result of Erd\"os and R\'enyi \cite{er60} which states that, for all small enough $\epsilon > 0$, $G\left(n,\left(\frac12-\epsilon\right)\cdot n\right)$ has, with probability $1-o(1)$, a $1-o(1)$ fraction of its vertices in connected components that are trees. By taking each vertex of $G\left(n,\left(\frac12-\epsilon\right) \cdot n\right)$ to be an ad, and each of its edges $\{u,v\}$ to be a user of degree $2$ connected to the ads $u$  and $v$, one deduces the claim from the observation that the edges of a forest can be oriented so that no vertex has in-degree larger than $1$.} %
Now, suppose that we must have $c_2 \cdot n \pm o(n)$ users of degree $2$. %
What is the largest number of users of degree $3$ that we can add while  still guaranteeing that there exists a matching covering a $1-o(1)$ fraction of the users? We  prove that the answer %
is ``$c_3 \cdot  n \pm o(n)$, for $c_3 = \frac16$''. Next, if we are bound to have $c_2 \cdot n \pm o(n)$ users of degree $2$ and $c_3 \cdot n \pm o(n)$ users of degree $3$, what is the maximum $c_4$ such that we can add $c_4 \cdot n \pm o(n)$ users of degree $4$ and  match a $1-o(1)$ fraction of users? And so on and so forth.

\smallskip

In general, we  set $c_d = \frac1{d \cdot (d-1)}$ %
for each degree $d \ge 2$, and $c_0 = c_1 = 0$. %
These fractions result in a total of $n$ users, %
to be matched to the $n$ ads.
To establish that these $c_d$'s give rise to a graph admitting a quasi-complete matching we couple the  irregular cuckoo hashing graph with a random graph in the configuration model having  the same degree distribution on the user side, and the same Poissonian degree distribution  on the ads  side; then, by means of the Karp-Sipser algorithm \cite{ks81} (which we analyze with a variant of the techniques presented in~\cite{bg15} by Balister and Gerke), we establish that this configuration model graph admits a quasi-complete matching; using the coupling, we derive the existence of a quasi-complete matching in the irregular cuckoo hashing graph, as well. %
Finally, we upper bound the fraction of ads matched by any online algorithm with $1- \frac e{e^e} = 0.82062\ldots$, thus proving our new upper bound on the competitive ratio of several Online Bipartite Matching problems. 

\smallskip

As an extra step, aimed at establishing the {\em extremality} of our construction, we show that if one increases any of the $c_d$'s by any constant  $\epsilon > 0$, then a constant fraction of the vertices of degree at most $d$ become unmatchable. (In particular, if the extra $\epsilon$ fraction assigned to degree $d$ is taken from vertices of higher degree, then the quasi-complete matching disappears).

\smallskip

\begin{figure}[t]
\includegraphics[width=\textwidth]{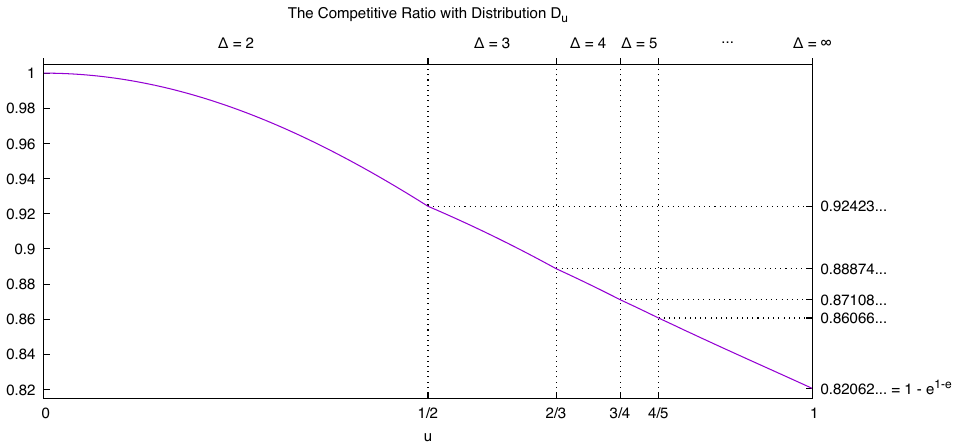}
\caption{The Competitive Ratios induced by the $D_u$ instances, obtained numerically for $u > 0$. The plot shows how $u = 1$ (which corresponds to the case of unbounded user-degree) gives the strongest upper bound. In Theorem~\ref{thm:optimal_u}, we  analytically establish the optimality of $u = 1$.\label{fig:DuCR}}
\end{figure}

{\bf A Class of Constructions.} Our construction has users of unbounded  degree, and contains the same number of users and ads. To get a more thorough understanding of the construction, and of the problem, we  study a class of constructions that generalizes our main one; the instances in this class have $n$ ads, and  $u \cdot n$ users. Just like  our main construction, we  populate the generic instance in the class by  greedily adding users with the smallest possible degree, under the constraint that the instance admits a quasi-complete matching. For a given $u \in (0,1)$, the generalized instance $D_u$ has a maximum user degree of $\left\lceil\frac1{1-u}\right\rceil$. We prove that, for all $u \in (0,1)$, these instances give a weaker upper bound on the competitive ratio than the case  $u = 1$, that is,  the case with $n$ users and $n$ ads (see Figure~\ref{fig:DuCR} and Theorem~\ref{thm:optimal_u}).

\smallskip

{\bf Previous Constructions.} We point out that studying  instances of varying user degrees is necessary to improve over the state of the art. In particular, if an irregular cuckoo hashing instance has $n$ users and $n$ ads, and all the users have degree $1$, any greedy online algorithm will necessarily return a maximum matching (which matches a fraction of $1 - \frac1e = 0.63212\ldots$ users). Thus, the best upper bound provable with this instance is $1$. If, instead, each user has degree $2$, a greedy online algorithm will return a matching with, roughly, $\tanh(1) \cdot n = 0.76159\ldots \cdot n$ edges.         The maximum matching  has size approximately equal to $\left(1 + \frac12 W\left(-2 e^{-2}\right)+\frac14 W\left(-2 e^{-2}\right)^2\right) \cdot n = 0.83809\ldots \cdot n$ (see, e.g., \cite{khk11}; here, $W$ is the Lambert W function), so that the best upper bound provable with this instance is $0.90871\ldots$. Finally,  if each user has degree $3$ or more, one can show that the number of  users matched by any greedy online algorithm is at least $0.82304\ldots \cdot n$. 
It follows that having a uniform user degree in  equibipartite graphs makes it impossible to improve upon---or even match---the  $0.823$ upper bound of Manshadi et al.~\cite{mos11}.

\smallskip

Indeed, Manshadi et al.~\cite{mos11} derived their upper bound by employing a mixture of users with varying degrees. In their construction, the user degree distribution is supported on $\{2,3,n\}$.\footnote{Specifically, Manshadi et al.~\cite{mos11} have $n$ ads and $n$ users, and assign probability $0.40517$ to user-degree $2$, $0.40517$ to user-degree $3$, and $0.18966$ to user-degree $n$. Using the results in~\cite{dgmmpr10}, they prove that the graph admits a maximum matching of size $n - o(n)$. They conclude their argument by numerically computing the size of the online matching, which they find to be no larger than $0.823n$.%
} Manshadi et al.'s analysis leverages on an irregular cuckoo hashing construction of Dietzfelbinger et al.~\cite{dgmmpr10}, whose parameters were obtained numerically; %
as a result, the optimal fractions of vertices of degrees $2$ and $3$ %
in the bipartite matching result of~\cite{mos11} are not given explicitly (and might very well be irrational numbers). %
Our extremal construction, instead, allows us to compute the competitive ratio analytically and exactly, and is made up of simple, rational, probabilities.

\section{Related Work}\label{sec:relatedwork}

Online Bipartite Matching problems have a long history. We mention here the results that are most relevant to our paper; see the book of Mehta~\cite{mehta13}, or the recent survey of Huang, Tang and Wajc~\cite{htw24}, for a more comprehensive introduction to these problems. 

As we mentioned before, if the graph and the ordering are adversarial, then a competitive ratio of $1 - \frac1e$ can be achieved via the RANKING algorithm of Karp, Vazirani and Vazirani \cite{kvv90} (see, Goel and Mehta~\cite{gm08}, or Birnbaum and Mathieu~\cite{bm08}, for proofs; simpler analyses were given in, e.g., Devanur, Jain and Kleinberg~\cite{djk13} and Eden et al.~\cite{effs21}): this algorithm begins by sampling a uniform-at-random ordering of the ads; whenever a user arrives, if at least one ad is available for that user, the user is matched to the first available ad in the ordering. As shown in~\cite{kvv90}, the competitive ratio of the RANKING algorithm, $1- \frac1e$, is optimal in the fully-adversarial %
case.

\smallskip

Determining the optimal competitive ratio if the set of users is permuted uniformly-at-random is, possibly, the most important open problem in this area.
The best known upper bound on the competitive ratio of this problem was proved by Manshadi, Oveis Gharan and Saberi~\cite{mos11,mos12} and has value $0.823$ --- this upper bound holds regardless of whether the bipartite graph is generated adversarially or randomly. The RANKING algorithm is still the best known algorithm  for this  problem  on adversarial graphs; Mahdian and Yan\cite{my11} show that it achieves a competitive ratio of at least $0.696$, and Karande, Mehta and Tripathi~\cite{kmt11} show that its competitive ratio is not larger than $0.727$.

\smallskip

The case of randomized bipartite graphs has also been studied extensively;
in this setting the adversary produces a distribution over subsets of ads, and each user samples independently a set of ads from that distribution. In the ``known distribution'' case, this distribution is assumed to be known to the algorithm; the ``unknown distribution'' variant has also been studied extensively. As  mentioned in the previous paragraph, Mahdian and Yan~\cite{my11} gave an algorithm for randomly-permuted adversarial graphs  with a competitive ratio of $0.696$; this algorithm applies to both the known, and the unknown, distribution cases.
Several improvements on the competitive ratio for the known distribution case  have been proved in the last fifteen years.
In 2011, Manshadi et al.~\cite{mos11},  proposed  an algorithm with a competitive ratio of $0.702$. In 2014, Jaillet and Lu\cite{jl14} had an algorithm achieving a competitive ratio of $0.706$.
Then, in 2021, Huang and Shu~\cite{hs21} gave an algorithm with a competitive ratio of $0.711$; finally,  in 2022, Huang, Shu and Yan\cite{hsy22} achieved a competitive ratio of $0.716$. %

\smallskip

The best impossibility result so far --- which holds in the cases of adversarial graph with UAR ordering, of random graphs with unknown distribution, and known distribution, as well as in the irregular cuckoo hashing case --- is the aforementioned upper bound of $0.823$ published by Manshadi et al. in 2011~\cite{mos11}.
Our result improves the bounds in each of these four cases from $0.823$ to $1-\frac e{e^e} = 0.82062\ldots$. We remark that our numerical improvement, of value roughly $0.002$, is of the same order of all the algorithmic improvements mentioned in the previous paragraph.

\smallskip

Several special cases of online bipartite matching problems have been studied in the last few years. In particular, the ``integral arrival rates'' special case of the ``known distribution'' setting requires that each set in the support of the distribution has an integral expected number of occurrences.\footnote{We observe that our construction, like the  $0.823$ construction of Manshadi et al.~\cite{mos11}, is an equibipartite irregular cuckoo hashing instance with a maximum user degree larger than $1$. Consequently, the total number of ``user types'' in the support of the distribution is at least quadratic in the number of users. Thus, these two constructions do not exhibit integral arrival rates.} For this case, the best known algorithm --- due to Brubach et al. \cite{bssx16} --- achieves a competitive ratio of $0.7299$; the best known upper bound for the integral arrival rates case has value $1-e^{-2} = 0.864\ldots$, and is due to Manshadi et al.~\cite{mos12}.

\smallskip

We conclude this section by noting that the interest in online bipartite matching is further evidenced by the attention given in recent years to several related problems. These include some of its weighted \cite{hsy22,knr22,fhtz22}, and fairness-aware \cite{mx24},  variants; in particular, %
Huang et al. \cite{hsy22} show an upper bound of $0.703$ for the (known distribution) edge-weighted online bipartite matching problem, whereas Ma and Xu \cite{mx24} show an upper bound of $\sqrt{3} - 1 = 0.732\ldots$ for a variant whose goal is to maximize the minimum matching rate across different groups of users. Other related problems include fractional variants and their rounding schemes \cite{tz24, nsw25, ss18, bnw23,bsvw24}, variants with different probing models \cite{bm25,bmr21,bmr22}, those with time-dependent user distributions \cite{bdpsw25,tww22,ppsw21,bdm22}, with users arriving in batches \cite{fns21,fns24,jm22} or with different arrival models \cite{bst19,gkmsw19,efgt22}, variants that leverage machine-learned advice \cite{cglb24,agkk23,jm22}, as well as generalized selection tasks \cite{hjpsz24,ghhnyz21}.

\section{Preliminaries}\label{sec:preliminaries}

We present here the objects, the tools and the notation that we will use in this paper.

\smallskip

{\bf  Notation.} Given a predicate $P$, the Iverson bracket $[P]$ has value $1$ if $P$ is true, and value $0$ otherwise. With a slight abuse of notation, for a positive integer $n$, we use $[n]$ to denote the set $[n] = \{1,\dots, n\}$.
The support  of a multiset $S$ is the set of elements whose multiplicity in $S$ is at least $1$.
 For an integer $k\geq 1$, we let $H_k=\sum_{i=1}^k \frac1i$ be the $k$-th harmonic number. Finally, we let $\Phi(z,s,\alpha)=\sum_{k=0}^\infty \frac{z^k}{(k+\alpha)^s}$ be the Lerch transcendent, a series that generalizes %
 the Riemann $\zeta$ function, %
and which converges if $\alpha > 0$, $|z| < 1$ and $s \in \mathbf{R}$. 

\smallskip

{\bf Bipartite (Multi)Graphs and Matchings.} An undirected bipartite (multi)graph $G(V,\hat{V},E)$ is defined by two disjoint sets of vertices $V$ and $\hat{V}$, and by a (multi)set of edges $E$ having a support set which is a subset of $\{\{v,\hat{v}\} \mid v \in V \wedge \hat{v} \in \hat{V}\}$. If $E$ coincides with its support set, then $G(V,\hat{V},E)$ has no parallel edges, and we say that $G(V,\hat{V},E)$ is a simple graph. We will typically use $V=\{v_1, v_2, \dots\}$ to denote the set of users, and $\hat{V}=\{\hat{v}_1, \hat{v}_2, \dots\}$ to denote the set of ads. We will also use $n_i$ (resp., $\hat{n}_i$) to denote the number of vertices of degree $i\geq 0$ in $V$ (resp., $\hat{V}$). 

\smallskip

A matching $M$ in the (multi)graph $G(V,\hat{V},E)$ is a subset $M \subseteq E$ containing pairwise disjoint edges.
We say that $G(V,\hat{V},E)$ admits a {\em complete matching} if there exists a matching $M \subseteq E$ of $G(V,\hat{V},E)$ such that $|M| = \min(|V|,|\hat{V}|)$, and that it admits a {\em perfect matching} if there exists a matching $M \subseteq E$ such that $|M| = |V| = |\hat{V}|$.

\smallskip

Given a bipartite multigraph $G(V,\hat{V},E)$, if $E'$ is the support set of $E$, we have that $G(V,\hat{V},E')$ is a bipartite simple graph. %
Observe that each matching of $G(V,\hat{V},E')$ is also a matching of $G(V,\hat{V},E)$; moreover, given an arbitrary matching $M$ of $G(V,\hat{V},E)$, we can transform $M$ into a matching of $G(V,\hat{V},E')$ by substituting, for each $e \in M$, the edge $e$ with its unique parallel edge $e' \in E'$.

\smallskip

We  say that an infinite sequence of bipartite (multi)graphs $G(V_1,\hat{V}_1,E_1), G(V_2,\hat{V}_2,E_2), \ldots$ of increasing side sizes ($\min(|V_1|,|\hat{V}_1|) < \min(|V_2|,|\hat{V}_2|) < \cdots$), and with maximum matchings $M_1 \subseteq E_1, M_2 \subseteq E_2,\ldots$,   admits a {\em quasi-complete matching} if $|M_i| \ge (1-o(1)) \cdot \min(|V_i|, |\hat{V}_i|)$ --- that is, if for each $\epsilon > 0$ there exists $i^{\star}$ such that, for each $i \ge i^{\star}$, $|M_i| \ge (1-\epsilon) \cdot \min(|V_i|, |\hat{V}_i|)$.

\smallskip

{\bf The Karp-Sipser Algorithm.}
The Karp-Sipser algorithm~\cite{ks81} is a greedy algorithm for computing a matching of a graph. A run of this algorithm is  divided in two phases; for our purposes, only the first phase will be necessary. During the first phase, the algorithm looks for any vertex of degree $1$; as long as such a vertex exists, the algorithm adds its incident edge $e$ to a set $S$,  removes the two endpoints of $e$ from the graph, and iterates. If no such vertex exists, the first phase ends. It is easy to prove that, at any point during the execution of the first phase, the set $S$ can be extended to a maximum matching of the original graph. In fact, it is also easy to prove that if the original graph $G(V,\hat{V},E)$ is bipartite and, at any point during the execution of the first phase, $V_0 \subseteq V$ (resp. $\hat{V}_0\subseteq \hat{V}$) is  the subset of $V$ (resp. $\hat{V}$) containing  vertices that  have current degree $0$ and which are still unmatched, then the maximum matching cannot be larger than $\min\left(|V| - |V_0|, |\hat{V}| - |\hat{V}_0|\right)$.%

\smallskip

The Karp-Sipser algorithm was introduced to bound the size of the maximum matching in sparse $G(n,p)$ random graphs; since its introduction, the algorithm has been used
to bound the size of the maximum matching of many other random graph models.

\smallskip

{\bf The Irregular Cuckoo Hashing Model.} For simplicity of exposition, the random graph models that we will consider in this paper %
might produce non-simple graphs, that is, multigraphs with parallel edges; these can be easily  transformed into simple graphs through the removal of parallel edges. One of the two  random graph models that we will consider  is the irregular cuckoo hashing model, which is defined as follows:
\begin{definition}[Irregular Cuckoo Hashing Model]
Let $n, m$ be two positive integers, and $P$ be a probability distribution over the non-negative integers. The Irregular Cuckoo Hashing random multigraph $\irregularcuckoohashing(n,m,P)$ is defined as follows. 
This bipartite graph has user set $V$ and ads set $\hat{V}$, with $n = |V|$ and $m = |\hat{V}|$. Each user $v_i \in V$ samples independently an integer $d_i$ from $P$ and, for each $j = 1,2,\ldots,d_i$, samples independently an ad $\hat{v}_{i,j}$ uniformly at random from $\hat{V}$. The multiset of neighbors of $v_i$ is  $\{\hat{v}_{i,1},\ldots,\hat{v}_{i,d_i}\}$. (Note that $\irregularcuckoohashing(n,m,P)$ might not be simple).
\end{definition}
\smallskip

{\bf The Configuration Model.} In order to analyze the irregular cuckoo hashing graph of our construction, we will  couple it with a random graph in the configuration model.
\begin{definition}[Configuration Model]
Consider two sequences of non-negative integers $\mathbf{d}=(d_1, \dots, d_n)$ and $\mathbf{\hat{d}}=(\hat{d}_1, \dots, \hat{d}_m)$ such that $\sum_{i=1}^n d_i = \sum_{i=1}^m \hat{d}_i$. To sample a bipartite graph $G=(V, \hat{V}, E)$ from the configuration model $\CM(\mathbf{d}, \mathbf{\hat{d}})$, for each $i\in[n]$ (resp. $j\in [m]$), assign to vertex $v_i$ (resp. $\hat{v}_j$) $d_i$ (resp. $\hat{d}_j$) configuration points, and then sample a uniform at random perfect bipartite matching between the configuration points. (Note that $\CM(\mathbf{d}, \mathbf{\hat{d}})$ might not be simple).
\end{definition}
To compute the size of the maximum matching in our configuration model distribution, we will study the behavior of the first phase of the Karp-Sipser algorithm using the techniques presented, e.g., by Balister and Gerke in~\cite{bg15} (see also \cite{afp98,bf11}).
In particular, we will show that our distribution admits a quasi-complete matching through the following  modification\footnote{The Theorems of Balister and Gerke require the degree distributions to have finite support. Observe that any degree distribution with a finite mean can be truncated to a finite prefix  altering the distribution's expectation  by no more than any  $\epsilon > 0$; in the setting of the configuration model, this truncation results in a change of no more than an $\epsilon$ fraction in the maximum matching size. Balister and Gerke use this approach to apply their Theorems to generic distributions of finite average degree. In our case, however, the ads degree distribution makes it possible to apply several algebraic shortcuts that become infeasible once that distribution is transformed into one having finite support. Therefore, to simplify our analysis, we reproved certain Lemmas and Theorems of Balister and Gerke~\cite{bg15}, so to accommodate distributions with unbounded support.} of a result in \cite{bg15}:
\begin{theorem}\label{thm:bg}%
Let $\rho\in(0,1]$ be any constant. Let $f(x)=\sum_{i=0}^\infty z_i\cdot x^i$, and $\hat{f}(x) = \sum_{i=0}^\infty \hat{z}_i\cdot x^i$, where $(z_i)_{i=0}^\infty, (\hat{z}_i)_{i=0}^\infty$ are non-negative sequences such that $f'(1)=\hat{f}'(1)=\mu < \infty$ and $f'(2), \hat{f}'(2) < \infty$.
Consider any configuration model $\CM(\mathbf{d}, \mathbf{\hat{d}})$, with $|\mathbf{d}|=f(1)n\pm o(n)$,  $|\mathbf{\hat{d}}|=\hat{f}(1)n \pm o(n)$, and such that %
(i) $n_i=(z_i\pm o(n^{-\rho}))n$ for $i\geq 0$,
(ii) $\hat{n}_i=(\hat{z}_i \pm o(n^{-\rho}))n$ for $i\geq 0$,
(iii) $|E|=(\mu \pm o(n^{-\rho}))n$,
and (iv) for $\maxdeg=\maxdeg(n)=\Theta(\log n)$, $\sum_{i=\maxdeg}^\infty i^2 \cdot z_i = o(n^{-\rho})$ and $\sum_{i=\maxdeg}^\infty i^2 \cdot \hat{z}_i = o(n^{-\rho})$,
where $n_i$ (resp. $\hat{n}_i$) is the number of vertices of degree $i$ in $\mathbf{d}$ (resp. $\mathbf{\hat{d}}$). 

Let $w_2, \hat{w}_1 \in [0,1]$ be the smallest solutions to the simultaneous equations $w_2=1 - \frac{f'(1-\hat{w}_1)}{\mu}$, $\hat{w}_1 = \frac{\hat{f}'(w_2)}{\mu}$. Then, with probability $1-o(1)$, the first phase of Karp-Sipser, on the graph $G(V, \hat{V},E)\sim\CM(\mathbf{d}, \mathbf{\hat{d}})$, returns a matching of size at least
\[
(f(1) - f(1-\hat{w}_1) - f'(1-\hat{w}_1)\hat{w}_1 - o(1))n.
\]
Moreover, with probability $1-o(1)$, after the first phase of Karp-Sipser, at least $(\hat{f}(w_2) - f(1) + f(1-\hat{w}_1)+f'(1-\hat{w}_1)\hat{w}_1 - o(1))n$ vertices in $\hat{V}$ will be isolated (i.e., of degree 0 and unmatched). Equivalently, with probability $1-o(1)$, the maximum matching cannot be larger than,
\[
(\hat{f}(1) - \hat{f}(w_2) + f(1) - f(1-\hat{w}_1) - f'(1-\hat{w}_1)\hat{w}_1 + o(1))n.
\]
\end{theorem}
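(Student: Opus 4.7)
The plan is to follow the deferred-decision and differential-equation methodology of Balister and Gerke~\cite{bg15}, while carefully accommodating the fact that the degree distributions generated by $f$ and $\hat{f}$ can have unbounded support. I would first reformulate the first phase of Karp-Sipser on a configuration-model graph as a process that reveals the random matching of configuration points only when needed: whenever a vertex of current degree $1$ is picked, we expose its unique configuration-point partner, identify its neighbor, remove both endpoints, and then re-pair all the freed configuration points uniformly at random among the surviving ones on the opposite side. This preserves the $\CM(\mathbf{d},\mathbf{\hat{d}})$ distribution, and the algorithmic state at any time is fully described by the vector $(n_i(t), \hat{n}_j(t))_{i,j \ge 0}$ counting surviving vertices of each degree on each side.

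I would then run a Wormald-style analysis on this state vector. Writing the expected one-step changes in each coordinate as rational functions of the state, the appropriately rescaled trajectories converge to the solution of an ODE system that admits a clean parametrization through the generating functions $f$ and $\hat{f}$: introducing parameters $y,x \in [0,1]$ that track the fractions of surviving configuration points on the two sides, the surviving user-degree and ad-degree distributions become proportional to $f(y)$ and $\hat{f}(x)$ respectively. The stopping condition of Phase~1, namely the disappearance of all degree-$1$ vertices, then manifests as the pair of fixed-point equations $w_2 = 1 - f'(1-\hat{w}_1)/\mu$ and $\hat{w}_1 = \hat{f}'(w_2)/\mu$, with $w_2, \hat{w}_1$ being the \emph{smallest} nonnegative solutions (larger roots correspond to stopping the peeling earlier than forced). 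From the final ODE state one reads off the number of matched edges as the total of users removed from $V$, namely $(f(1) - f(1-\hat{w}_1) - f'(1-\hat{w}_1)\hat{w}_1 \pm o(1))\,n$, where the two subtracted terms count surviving isolated users and surviving users still carrying some half-edges. For the matching upper bound I would invoke the standard Karp-Sipser invariant: vertices of $\hat{V}$ that are isolated at the end of Phase~1 are unmatchable by any completion, and a direct count from the final state yields the claimed expression.

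The main technical obstacle, and the reason~\cite{bg15} must be reproved rather than applied as a black box, is the unbounded support of the degree sequences. Wormald's differential equation method requires bounded one-step increments and Lipschitz drifts, both of which break down when arbitrarily large degrees are allowed. Hypothesis (iv) of the theorem is designed precisely for this: truncating both sides at $\maxdeg = \Theta(\log n)$ bounds all per-step increments by $O(\log n)$ and makes the Azuma-Hoeffding step of the method yield $o(1)$ fluctuations, while the second-moment tail condition $\sum_{i\ge \maxdeg} i^2 z_i = o(n^{-\rho})$ (and its analogue for $\hat{z}_i$) guarantees that truncation changes the total number of edges by at most $o(n^{1-\rho})$ and hence perturbs both the algorithmic and the optimal matching sizes by at most $o(n)$. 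I would therefore establish concentration of the ODE trajectories on the truncated model, then transfer the conclusion to the untruncated model through these tail bounds, so that the high-probability statements in the theorem survive the extension to unbounded support.
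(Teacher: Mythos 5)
Your proposal is sound in outline but takes a genuinely different route from the paper. The paper does \emph{not} run a differential-equation analysis: it follows Balister and Gerke's local tree-recursion method. After the same truncation step you propose (removing configuration points of vertices of degree above $\maxdeg=\Theta(\log n)$, using hypothesis (iv) to show only $o(n^{1-\rho'})$ points are lost), it shows that all but $o(n)$ vertices have tree-like $t$-neighborhoods for $t=o(\log n/\log\log n)$, classifies vertices as lonely/popular/normal by a deterministic recursion on these trees, proves that popular vertices in $V$ get matched in the first $\lfloor t/2\rfloor$ rounds while lonely vertices in $\hat V$ either become isolated or are matched to popular vertices, computes the lonely/popular probabilities via the iterates $\hat w_1^{(i)}, w_2^{(i)}$ converging to the smallest fixed point, and obtains concentration from a second-moment bound plus Chebyshev rather than Azuma--Hoeffding. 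Your Wormald-style peeling analysis is the classical alternative (it is essentially the Aronson--Frieze--Pittel route, which the paper cites) and would yield the same fixed-point equations and the same generating-function formulas; its advantage is that the final counts (matched users, isolated unmatched ads) are read off directly from the terminal ODE state, whereas the tree-recursion approach gets them combinatorially from the lonely/popular counts. What the paper's approach buys is that it sidesteps the two delicate points your plan glosses over: (a) after truncation the state vector has $\Theta(\log n)$ coordinates, so you cannot invoke Wormald's theorem with a fixed number of variables and must either track generating-function functionals or use a version allowing growing dimension with Lipschitz control of the drifts; and (b) the identification of the process's stopping time with the \emph{smallest} root of the fixed-point system requires a separate argument in the ODE framework, whereas in the tree recursion the iterates $(\hat w_1^{(i)}, w_2^{(i)})$ form a monotone sequence that converges to the smallest solution by construction. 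Neither point is a fatal gap, but both would need to be addressed explicitly for your route to be complete.
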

\begin{toappendix}
    \section{Adaptation of the Analysis of Balister and Gerke to Unbounded Degree Distributions}\label{app:bg}
In this section we adapt the analysis of \cite{bg15} for the first phase of the Karp-Sipser algorithm, so to allow unbounded maximum degree. Note that the original analysis can be used by first limiting the maximum degree to a constant and by truncating the generating functions $f$ and $\hat{f}$ defined below;
in our case, however, the original distribution makes it possible to apply several algebraic shortcuts that become infeasible once the distribution is transformed into one having finite support. Therefore, we found it simpler to modify Balister and Gerke's analysis so to  accommodate  distributions with unbounded support.

\smallskip

The analysis of \cite{bg15} considers a slight variation of the first phase of Karp-Sipser. The algorithm proceeds in rounds; in particular, all degree-1 vertices of round $t$ are analyzed before moving to round $t+1$. The degree-1 vertices of round $t+1$ are those that were generated in round $t$. This slight modification still has the property that, at any point, the current matching can be extended to a maximum one. 

Define the generating functions $f(x)=\sum_{i=0}^\infty z_i \cdot x^i$ and $\hat{f}(x)=\sum_{i=0}^\infty \hat{z}_i\cdot x^i$, where $(z_i)_{i=0}^\infty$ and $(\hat{z}_i)_{i=0}^\infty$ are two non-negative sequences, and we adopt the standard convention $0^0=1$. We assume that $f'(1)=\hat{f}'(1)=\mu < \infty$, and $f'(2) = \sum_{i=1}^\infty z_i \cdot i \cdot 2^{i-1} < \infty$, $\hat{f}'(2) = \sum_{i=1}^\infty \hat{z}_i \cdot i \cdot 2^{i-1} < \infty$. Note that this also implies $f''(1)<\infty$, $\hat{f}''(1)<\infty$, $\sum_{i=1}^\infty z_i \cdot i \cdot 2^i < \infty$, and $\sum_{i=1}^\infty \hat{z}_i \cdot i \cdot 2^i < \infty$. Consider any configuration model $\CM(\mathbf{d}, \mathbf{\hat{d}})$, with $|\mathbf{d}|=f(1)n\pm o(n)$,  $|\mathbf{\hat{d}}|=\hat{f}(1)n \pm o(n)$, and such that:
\begin{enumerate}
\item $n_i=(z_i\pm o(n^{-\rho}))n$ for $i\geq 0$
\item $\hat{n}_i=(\hat{z}_i \pm o(n^{-\rho}))n$ for $i\geq 0$
\item $|E|=(\mu \pm o(n^{-\rho}))n$
\item for $\maxdeg=\maxdeg(n)=\Theta(\log n)$, $\sum_{i=\maxdeg}^\infty i^2 \cdot z_i = o(n^{-\rho})$ and $\sum_{i=\maxdeg}^\infty i^2 \cdot \hat{z}_i = o(n^{-\rho})$
\end{enumerate}
where $\rho\in(0,1]$ is a constant, and $n_i$ (resp. $\hat{n}_i$) is the number of vertices of degree $i$ in $\mathbf{d}$ (resp. $\mathbf{\hat{d}}$).

As a first step, similarly to \cite{bg15}, we limit the maximum degree to $\maxdeg$. Note that, however, we will not modify the definitions of $f$ and $\hat{f}$. Specifically, we remove all configuration points of the vertices having degree larger than $\maxdeg$. Therefore, on one side of the bipartite graph, we remove a number of configuration points equal to 
\begin{align*}
\sum_{i=\maxdeg+1}^\infty i n_i &= |E| - \sum_{i=0}^\maxdeg i n_i = n\sum_{i=0}^\infty iz_i - n\sum_{i=0}^\maxdeg i z_i\pm o(n^{1-\rho} \maxdeg^2) \leq o(n^{1-\rho} \maxdeg^2) \leq o(n^{1-\rho'}),
\end{align*}
for some new constant $0<\rho'<\rho$. After removing $o(n^{1-\rho'})$ configuration points from the other side of the bipartite graph, and further removing $o(n^{1-\rho'})$ configuration points to ensure that the sum of the degrees remains the same on the two sides; we end up with a configuration model having a maximum degree of $\maxdeg=O(\log n)$, but since we only removed $o(n^{1-\rho'})$ edges, properties 1-4 above still hold (albeit with a different constant $\rho$) and the size of the maximum matching is the same up to an $o(n)$ error. For simplicity, we will keep on using $\rho$ in place of $\rho'$.

Let us sample a graph $G(V,\hat{V},E)$ from such configuration model. A vertex $v\in V\cup \hat{V}$ is called $t$-good if its $t$-neighborhood (i.e., $v$ and all the vertices at distance at most $t$ from $v$) is a tree, otherwise it is called $t$-bad. (Note that a vertex with parallel edges is also considered $t$-bad.) The following result only relies on $\maxdeg^{t} = c^{o(\log n)}$ for some constant $c$.\footnote{In \cite{bg15} they have $t=o(\log n)$, we need to decrease it to $o\left(\frac{\log n}{\log\log n}\right)$ since our maximum degree is super-constant.}

\begin{lemma}[\hspace{1sp}\cite{bg15}, Lemma 3.4]
If $t=o\left(\frac{\log n}{\log\log n}\right)$, with probability $1-o(1)$, the number of $t$-bad vertices is $o(n)$.
\end{lemma}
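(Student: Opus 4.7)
The plan is to reveal the $t$-neighborhood of a fixed vertex $v$ by a breadth-first exposure of the configuration model, and to bound by a union bound the probability that some pairing during this exposure creates a cycle or a parallel edge (the only two ways the neighborhood can fail to be a tree). The key quantitative ingredient is that the maximum degree is $\maxdeg = \Theta(\log n)$, so the $t$-neighborhood contains at most $\maxdeg^{t}$ vertices and at most $O(\maxdeg^{t+1})$ configuration points are revealed during the exploration.

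First I would set up the exposure: starting from $v$, process its configuration points one at a time; for each unmatched configuration point encountered on the current frontier, reveal the partner it is paired with according to the uniform perfect matching. Because only $O(\maxdeg^{t+1}) = o(n)$ configuration points have been revealed at any moment, the number of still-unmatched configuration points is $\mu n \pm o(n) = \Theta(n)$ throughout the exposure. Conditional on the history, the next pairing is uniform over the remaining unmatched configuration points, so the probability that it lands inside the set of already-exposed configuration points (creating either a cycle back into the tree or a parallel edge to a previously used partner) is at most
\[
\frac{O(\maxdeg^{t+1})}{\Theta(n)} = O\!\left(\frac{\maxdeg^{t+1}}{n}\right).
\]

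Next I would apply a union bound over the $O(\maxdeg^{t+1})$ pairings performed during the exploration, obtaining that the probability that $v$ is $t$-bad is at most $O(\maxdeg^{2(t+1)}/n)$. Using $\maxdeg = \Theta(\log n)$, we have $\maxdeg^{2(t+1)} = \exp(O(t\log\log n))$, and the hypothesis $t = o(\log n/\log\log n)$ makes this $n^{o(1)}$. Hence the probability that any fixed vertex is $t$-bad tends to zero. Summing over the $\Theta(n)$ vertices, the expected number of $t$-bad vertices is $o(n)$, and a Markov's inequality argument then yields that the number of $t$-bad vertices is $o(n)$ with probability $1-o(1)$.

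The only delicate step is the conditional probability estimate: one must argue cleanly that exposing the neighborhood one pairing at a time leaves, at every stage, $\Theta(n)$ unmatched configuration points to serve as the uniform sample space, and that the set of ``forbidden'' endpoints (those causing a cycle or parallel edge) has size $O(\maxdeg^{t+1})$. Both follow from the fact that the exposure reveals only $o(n)$ configuration points and that every such point contributes at most $\maxdeg$ further configuration points to the forbidden set. Once this is in place, the union bound and Markov step are routine.
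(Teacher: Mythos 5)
Your argument is correct and is the standard breadth-first exposure proof of local tree-likeness in the configuration model; the paper itself does not reprove this lemma but imports it directly from Balister and Gerke, whose proof proceeds in essentially the same way. Your quantitative accounting --- per-vertex bad probability $O(\maxdeg^{2(t+1)}/n) = n^{-1+o(1)}$ under $\maxdeg = \Theta(\log n)$ and $t = o(\log n/\log\log n)$, followed by Markov --- is exactly the reason the paper's footnote tightens the hypothesis from $t = o(\log n)$ (which suffices for constant $\maxdeg$) to $t = o\left(\frac{\log n}{\log\log n}\right)$ for super-constant degrees, so your proposal is consistent with the intended argument.
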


Fix a $t$-good vertex $v$. Consider the tree rooted at $v$. Starting from $v$, we number the levels of the tree from 1 to $t+1$. Consider the vertices different from the root, we classify them as follows. All vertices at level $t+1$ are $v$-normal. Consider a vertex $u$ at level $\ell$. If all its children at level $\ell+1$ are $v$-popular, then $u$ is $v$-lonely (this considers also the case where there are no children at level $\ell+1$). If at least one of its children is $v$-lonely, then $u$ is $v$-popular. Otherwise, $u$ is $v$-normal. The root $v$ is lonely wrt $t$ if all its children are $v$-popular (or it has no children), it is popular wrt $t$ if at least two of its children are $v$-lonely, otherwise it is normal wrt $t$. The following result only uses the classification of the vertices.

\begin{lemma}[\hspace{1sp}\cite{bg15}, Lemma 3.5]\label{lem:bg-3-5}
Suppose $v$ is $t$-good, for $t\geq 1$. If $v$ is popular wrt $t$, it will be matched within the first $\lfloor t/2 \rfloor$ rounds of the first phase of the Karp-Sipser Algorithm. If $v$ is lonely wrt $t$, in the first $\lfloor t/2\rfloor$ rounds, it will either become isolated (i.e., of degree $0$ and unmatched) or it will be matched to a vertex $u$ which is popular wrt $t-1$.
\end{lemma}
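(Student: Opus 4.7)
The plan is to prove the lemma by induction on $t$. The base case $t=1$ is essentially vacuous: $T_v$ has only levels $1$ and $2$, every level-$2$ vertex is $v$-normal by definition, so no vertex can be a $v$-lonely child of $v$; thus $v$ popular wrt $1$ is impossible, and $v$ lonely wrt $1$ forces $v$ to have no children at all, meaning $v$ is already isolated at round $\lfloor 1/2\rfloor = 0$.

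For the inductive step ($t \geq 2$), I would first prove the following auxiliary claim about non-root vertices by a sub-induction on $t - \ell$: for every vertex $u$ at level $\ell \in [2,t]$ of $T_v$, (i) if $u$ is $v$-popular then $u$ is matched within $\lceil (t-\ell)/2\rceil$ rounds, and (ii) if $u$ is $v$-lonely then within $\lfloor (t-\ell)/2\rfloor$ rounds, $u$ is matched, isolated, or has only its parent (at level $\ell-1$) as its unique remaining neighbor. The two parts of this sub-induction reinforce each other. A $v$-lonely $u$ has only $v$-popular children, each of which is matched within $\lceil(t-\ell-1)/2\rceil = \lfloor(t-\ell)/2\rfloor$ rounds by (i), so by that round $u$ has lost every child and is left only with its parent, proving (ii). Conversely, a $v$-popular $u$ has at least one $v$-lonely child $w$, and (ii) applied to $w$ implies that within $\lfloor(t-\ell-1)/2\rfloor$ rounds $w$ is matched, isolated, or has only $u$ as remaining neighbor, which yields $u$ matched in one more round for a total of $\lceil(t-\ell)/2\rceil$, proving (i). The sub-induction bases $\ell=t$ and $\ell=t-1$ are immediate: no level-$t$ vertex can be $v$-popular (its lonely witnesses would live at level $t+1$, which is $v$-normal by definition), and a $v$-lonely vertex at level $t$ is necessarily a leaf with only its parent as neighbor.

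Given the auxiliary claim, the main lemma follows in two short arguments. If $v$ is popular wrt $t$, pick a $v$-lonely child $u_1$ at level~$2$ (two are guaranteed); part~(ii) says that within $\lfloor(t-2)/2\rfloor$ rounds $u_1$ is matched, isolated, or reduced to ``only $v$ as neighbor'', and each outcome yields $v$ matched within $\lfloor(t-2)/2\rfloor + 1 = \lfloor t/2\rfloor$ rounds: isolation of $u_1$ implies $v$ was already removed, the ``only $v$'' state causes $u_1$ to match $v$ next round, and the ``matched to a descendant'' case is handled by running the same argument on the second guaranteed $v$-lonely child $u_2$. If $v$ is lonely wrt $t$, all its children are $v$-popular and by~(i) each is matched within $\lceil(t-2)/2\rceil \le \lfloor t/2\rfloor$ rounds, so $v$ is either isolated (all children match their own descendants) or matched to some child $u$. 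For the refinement that such a $u$ is popular wrt $t-1$, I would re-root at $u$ and observe that classifications on branches not crossing the edge $\{u,v\}$ coincide with the $v$-classifications (the bottom-up rule is local): $u$'s $v$-popularity supplies one $u$-lonely $T_v$-child of $u$, and the fact that $v$ is lonely wrt $t$ (all siblings of $u$ being $v$-popular) makes all of $v$'s other children $u$-popular in the re-rooted tree, so $v$ itself becomes $u$-lonely there, supplying the second lonely child required for $u$ popular wrt $t-1$.

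The main obstacle is the careful parity book-keeping needed to get $\lfloor t/2\rfloor$ (rather than $\lceil t/2\rceil$) as the correct round count, which propagates as the interleaving of $\lfloor\cdot/2\rfloor$ and $\lceil\cdot/2\rceil$ in the auxiliary claim. A secondary difficulty is the re-rooting argument for the refinement in the lonely-root case: one must verify that flipping the parent-child direction along the single edge $\{u,v\}$ does not alter the classifications on subtrees that avoid that edge, and then combine the newly-``lonely'' status of $v$ in $u$'s re-rooted tree with an already-$u$-lonely $T_v$-child of $u$ to cross the two-lonely-children threshold defining ``popular wrt $t-1$''.
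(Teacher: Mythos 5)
The paper does not actually prove this lemma --- it imports it verbatim from Balister and Gerke \cite{bg15} with the remark that it ``only uses the classification of the vertices'' --- so I can only judge your argument on its own terms. Your overall architecture (a bottom-up induction interleaving a claim for popular vertices with a claim for lonely vertices, with the parity book-keeping $P(\ell)=\lceil (t-\ell)/2\rceil$, $L(\ell)=\lfloor (t-\ell)/2\rfloor$) is the right one, and your base cases and the derivation of (ii) from (i) are fine. But there is a genuine gap at the heart of the sub-induction, in the step deriving (i) from (ii). Your claim (ii) only guarantees that a $v$-lonely $w$ is ``matched, isolated, or reduced to its parent edge,'' without saying \emph{to whom} it is matched. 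If $w$ is matched to one of its \emph{own children} (which happens whenever some child of $w$ becomes degree~$1$ pointing up at $w$ and is processed), then the edge $\{u,w\}$ is destroyed without $u$ being matched, and (i) fails for $u$. You notice this escape only in the popular-root case, and your proposed fix --- ``run the same argument on the second lonely child'' --- is circular: the second lonely child is subject to exactly the same escape, and in the non-root case of (i), where only one lonely child is guaranteed, no fallback exists at all. The fact that this escape cannot occur is precisely the content one must prove: claim (ii) has to be strengthened to say that a $v$-lonely vertex is never matched to a descendant \emph{before its parent is matched} (equivalently, its parent edge is destroyed only by the parent being matched or by the vertex being matched to the parent). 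This strengthened invariant closes the induction --- a popular child $c$ of $w$ can never become degree~$1$ pointing at $w$ while unmatched, because its own lonely witness $d$ can, by the inductive hypothesis, only leave $c$'s neighborhood by being matched to $c$ --- and it bottoms out at lonely leaves, which have no descendants. Without it, your induction does not close.

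A second, smaller issue concerns the refinement that the partner $u$ of a lonely root is popular with respect to $t-1$. Re-rooting at $u$ shifts the truncation level asymmetrically: the branches of $u$ inside $T_v$ keep the same cutoff, but the branches hanging off $v$ are truncated two levels earlier (level $t-1$ of $v$'s tree rather than $t+1$). Since declaring an extra layer of vertices ``normal'' can demote popular vertices to normal and hence destroy loneliness one level up, your assertion that $v$'s other children ``remain popular'' in the re-rooted tree --- and hence that $v$ is $u$-lonely, supplying the second lonely child --- does not follow from locality alone and needs a separate argument about how the classification behaves under truncation. You flag this as a point to verify, but as written the verification is asserted rather than supplied.
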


Define inductively, $\hat{w}_1^{(1)} = w_2^{(1)} = 0$, and $\hat{w}_1^{(t)} = \frac{\hat{f}'(w_2^{(t-1)})}{\mu}$, $w_2^{(t)} = 1 - \frac{f'(1 - \hat{w}_1^{(t-1)})}{\mu}$ for $t>1$. Let $(\hat{w}_1, w_2)$ be the smallest non-negative solution of the simultaneous equations $\hat{w}_1 = \frac{\hat{f}'(w_2)}{\mu}$, $w_2 = 1 - \frac{f'(1 - \hat{w}_1)}{\mu}$. As $t$ goes to infinity, $(\hat{w}_1^{(t)}, w_2^{(t)})$ converges to $(\hat{w}_1, w_2)$, since it is a bounded increasing sequence. Indeed, the next Lemma shows that $\hat{w}_1^{(i)}$ and ${w}_2^{(i)}$ are probabilities. We need to adapt the original proof of \cite{bg15} for this Lemma because we use the original, non-truncated, generating functions $f$ and $\hat{f}$. Specifically, we will use that $\sum_{i=0}^M z_i \cdot x^i = f(x) \pm o(n^{-\rho})$ for $x\in[0,1]$, and similarly for $f'$, $f''$, $\hat{f}$, $\hat{f}'$, and $\hat{f}''$.

\begin{lemma}[Adaptation of \cite{bg15}, Theorem 3.6]\label{lem:bg-adaptation-3-6}
Let $t=o\left(\frac{\log n}{\log\log n}\right)$, and let $v$ be a $t$-good vertex. For all $i\in\{1,\dots, t\}$, the following holds:
\begin{enumerate}
\item The probability that a vertex $u\in \hat{V}$ at distance $t+1-i$ from $v$ is $v$-lonely wrt $t$, conditioned on its existence and the tree rooted at $v$ except for the branch rooted at $u$, is $\hat{w}_1^{(i)} \pm O(\maxdeg^{t+i} \cdot n^{-\rho})$
\item The probability that a vertex $u\in V$ at distance $t+1-i$ from $v$ is $v$-popular wrt $t$, conditioned on its existence and the tree rooted at $v$ except for the branch rooted at $u$, is $w_2^{(i)} \pm O(\maxdeg^{t+i} \cdot n^{-\rho})$
\end{enumerate}
The results are true even conditioning on at most $O(\maxdeg^t)$ edges outside of the $t$-neighborhood of $v$.
\end{lemma}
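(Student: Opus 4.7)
The plan is to prove the two claims simultaneously by induction on $i\in\{1,\dots,t\}$, closely following the template of Balister and Gerke~\cite{bg15} but replacing each finite generating-function manipulation with the full infinite series $f$ and $\hat f$, and absorbing every truncation error via assumption (iv).

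The base case $i=1$ is immediate: a vertex at distance $t+1-1=t$ from $v$ sits at level $t+1$ of the rooted $t$-neighborhood of $v$, and by definition every such vertex is $v$-normal, so the probability of being $v$-lonely (resp.\ $v$-popular) is exactly $0=\hat w_1^{(1)}=w_2^{(1)}$, with no error term.

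For the inductive step, assume both claims hold for $i-1$ and consider $u\in\hat V$ at distance $t+1-i$ from $v$. I would condition on the currently exposed part of the tree around $v$ together with the $O(\maxdeg^t)$ extra exposed edges; in the configuration model, the remaining unused configuration points are matched uniformly at random. Given the existence of the edge from $u$ to its parent, the degree of $u$ is size-biased, and by assumptions (ii) and (iii) on $\hat n_k$ and $|E|$, for each $1\le k\le\maxdeg$,
\[
\Pr[d_u=k] \;=\; \frac{k\hat n_k - O(\maxdeg^t)}{\mu n - O(\maxdeg^t)} \;=\; \frac{k\hat z_k}{\mu}\pm O(n^{-\rho}).
\]
By $t$-goodness, the $d_u-1$ children of $u$ lie in $V$ at distance $t+1-(i-1)$ from $v$, and when I expose each of their subtrees in turn the inductive hypothesis applies, with previously exposed siblings folded into the $O(\maxdeg^t)$ extra conditioning. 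Each child is therefore $v$-popular with probability $w_2^{(i-1)}\pm O(\maxdeg^{t+i-1}n^{-\rho})$, and dependence between siblings contributes an extra $O(\maxdeg^{t+1}/n)$ per pair via the shrinking pool of configuration points. Since $u$ is $v$-lonely iff all its children are $v$-popular, a routine product-then-sum bound yields
\[
\Pr[u\text{ is }v\text{-lonely}] \;=\; \sum_{k=1}^{\maxdeg}\frac{k\hat z_k}{\mu}\bigl(w_2^{(i-1)}\bigr)^{k-1}\pm O(\maxdeg^{t+i}n^{-\rho}) \;=\; \frac{\hat f'(w_2^{(i-1)})}{\mu}\pm O(\maxdeg^{t+i}n^{-\rho}),
\]
where the truncated tail $\sum_{k>\maxdeg}k\hat z_k(w_2^{(i-1)})^{k-1}\le \maxdeg^{-1}\sum_{k\ge\maxdeg}k^2\hat z_k=o(n^{-\rho})$ is absorbed by assumption (iv). Since the right-hand side equals $\hat w_1^{(i)}$, the first claim follows. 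The second claim is handled symmetrically: for $u\in V$ at distance $t+1-i$ from $v$, the event that $u$ is \emph{not} $v$-popular coincides with the event that none of its children in $\hat V$ is $v$-lonely, and the analogous computation gives $\Pr[u\text{ is }v\text{-popular}]=1-f'(1-\hat w_1^{(i-1)})/\mu\pm O(\maxdeg^{t+i}n^{-\rho})=w_2^{(i)}\pm O(\maxdeg^{t+i}n^{-\rho})$.

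The main technical obstacle I anticipate is the careful accounting of conditional dependencies: each exposure depletes the pool of unused configuration points, and I would need to verify that across the $O(\maxdeg^t)$ pre-exposed edges, the at most $\maxdeg$ siblings of $u$, and the recursive exposures inside each child's subtree, all the corrections remain within $O(\maxdeg^{t+i}n^{-\rho})$. The only genuinely new ingredient compared to~\cite{bg15} is the treatment of the infinite-support generating functions, which is achieved by the uniform tail bound $\sum_{k\ge\maxdeg}k^2 z_k,\sum_{k\ge\maxdeg}k^2\hat z_k=o(n^{-\rho})$ from assumption (iv); this lets one replace the truncated sums by $f$, $\hat f$, and their derivatives on $[0,1]$ at no extra asymptotic cost, while keeping the multiplicative error of the induction bounded by $\maxdeg^{t+i}n^{-\rho}=n^{o(1)-\rho}$.
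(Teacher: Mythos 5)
Your proposal is correct and follows essentially the same route as the paper's proof: the same joint induction on $i$ with the trivial base case at level $t+1$, the size-biased degree of $u$ conditioned on the exposed tree plus $O(\maxdeg^t)$ extra edges, the product over children via the inductive hypothesis, and the identification of the truncated sum with $\hat f'(w_2^{(i-1)})/\mu$ (resp.\ $1-f'(1-\hat w_1^{(i-1)})/\mu$) up to a tail absorbed by assumption (iv). The only details left implicit — that the degree-distribution error is really $O(\maxdeg^t n^{-\rho})$ rather than $O(n^{-\rho})$, and that propagating the $\pm O(\maxdeg^{t+i-1}n^{-\rho})$ error through the $k$-fold product costs a factor controlled by $\hat f''(1)$, resp.\ $f''(1)$ — are exactly the "routine" bounds the paper carries out explicitly, and they close without difficulty.
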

\begin{proof}
Both results are true for $i=1$, as all vertices at level $t+1$ are $v$-normal. Let $i>1$. We start by proving the first point. Fix a vertex $u\in \hat{V}$ at distance $t+1-i$, the probability that it is $v$-lonely is:
\[
\sum_{k=0}^{\maxdeg-1} \Pr[u\text{ has children }u_1,\dots,u_k] \prod_{j=1}^k \Pr[u_j\text{ is $v$-popular} \mid u_1,\dots,u_{j-1}\text{ are $v$-popular}].
\]
Considering the conditioning of at most $O(\maxdeg^t)$ edges, the probability that $u$ has degree $k+1$ is:
\begin{align*}
\frac{(k+1)(\hat{z}_{k+1}\pm o(n^{-\rho}))n - O(\maxdeg^t)}{(\mu\pm o(n^{-\rho}))n - O(\maxdeg^t)} &= \frac{(k+1)\hat{z}_{k+1}}{\mu} \pm O\left(\frac{\maxdeg^t}{n}\right) \pm o\left(\frac{k}{n^{\rho}}\right) = \frac{(k+1)\hat{z}_{k+1}}{\mu} \pm O\left(\frac{\maxdeg^t}{n^{\rho}}\right).
\end{align*}

Now, the probability of $u_j$ being $v$-popular is by inductive hypothesis $w_2^{(i-1)} \pm O(\maxdeg^{t+i-1} n^{-\rho})$. Note also that $\left(w_2^{(i-1)} \pm O(\maxdeg^{t+i-1} n^{-\rho})\right)^k = (w_2^{(i-1)})^k \pm k\cdot O(\maxdeg^{t+i-1} n^{-\rho})$, since we can discard lower order terms. Putting things together, the previous probability becomes:

\begin{align*}
&\sum_{k=0}^{\maxdeg-1} \frac{(k+1)\hat{z}_{k+1}}{\mu} \cdot \left(w_2^{(i-1)} \pm O\left(\frac{\maxdeg^{t+i-1}}{ n^{\rho}}\right)\right)^k \pm O\left(\frac{
\maxdeg^{t+1}}{n^\rho}\right)\\
& = \sum_{k=0}^{\maxdeg-1} \frac{(k+1)\hat{z}_{k+1}}{\mu} \cdot \left(w_2^{(i-1)}\right)^k \pm O\left(\frac{\maxdeg^{t+i-1}}{ n^{\rho}}\right) \sum_{k=0}^{\maxdeg-1} \frac{k(k+1)\hat{z}_{k+1}}{\mu} \pm O\left(\frac{
\maxdeg^{t+1}}{n^\rho}\right)\\
& = \frac1\mu \left( \sum_{k=1}^\infty k\hat{z}_{k} \left(w_2^{(i-1)}\right)^{k-1} -\sum_{k=\maxdeg+1}^\infty k\hat{z}_{k} \left(w_2^{(i-1)}\right)^{k-1}\right) \pm O\left(\frac{
\maxdeg^{t+1}}{n^\rho}\right) \\
&\quad \pm O\left(\frac{\maxdeg^{t+i-1}}{ n^{\rho}}\right) \frac1\mu \left( \sum_{k=2}^{\infty} (k-1)k\hat{z}_{k} - \sum_{k=\maxdeg+1}^{\infty} (k-1)k\hat{z}_{k} \right)\\
&= \frac1\mu \left( \hat{f}'(w_2^{(i-1)}) -o(n^{-\rho})\right) \pm O\left(\frac{\maxdeg^{t+i-1}}{ n^{\rho}}\right) \frac1\mu \left( \hat{f}''(1) - o(n^{-\rho}) \right) \pm O\left(\frac{
\maxdeg^{t+1}}{n^\rho}\right)\\
& = \frac{\hat{f}'(w_2^{(i-1)})}{\mu} \pm O\left(\frac{
\maxdeg^{t+i}}{n^\rho}\right)\\
& = \hat{w}_1^{(i)} \pm O\left(\frac{
\maxdeg^{t+i}}{n^\rho}\right),
\end{align*}
where we used that $\hat{f}''(1) \leq O(\maxdeg)$. This proves the first point. The second point is proved similarly. Fix a vertex $u\in V$ at distance $t+1-i$. By inductive hypothesis, the probability that a vertex $u_j$ at distance $t+1-(i-1)$ is $v$-lonely is $\hat{w}_1^{(i-1)} \pm O(\maxdeg^{t+i-1}n^{-\rho})$. The probability that $u$ is $v$-popular is:
\begin{align*}
&1 - \sum_{k=0}^{\maxdeg-1} \Pr[u\text{ has children }u_1,\dots,u_k] \prod_{j=1}^k (1 - \Pr[u_j\text{ is $v$-lonely} \mid u_1,\dots,u_{j-1}\text{ are not $v$-lonely}])\\
&= 1 - \frac1\mu \sum_{k=0}^{\maxdeg-1} (k+1)z_{k+1} \cdot \left(1 - \hat{w}_1^{(i-1)} \pm O\left(\frac{\maxdeg^{t+i-1}}{n^{\rho}}\right)\right)^k \pm O\left(\frac{\maxdeg^{t+1}}{n^{\rho}}\right)\\
& = 1 - \frac1\mu \sum_{k=0}^{\maxdeg-1} (k+1)z_{k+1} \cdot (1-\hat{w}_1^{(i-1)})^k \pm O\left(\frac{\maxdeg^{t+i-1}}{n^{\rho}}\right)\frac1\mu \sum_{k=0}^{\maxdeg-1} k(k+1)z_{k+1}\pm O\left(\frac{\maxdeg^{t+1}}{n^{\rho}}\right)\\
& = 1 - \frac{f'(1-\hat{w}_1^{(i-1)})}{\mu} + o(n^{-\rho}) \pm O\left(\frac{\maxdeg^{t+i-1}}{n^{\rho}}\right)\frac1\mu\left(f''(1) - o(n^{-\rho}) \right) \pm O\left(\frac{\maxdeg^{t+1}}{n^{\rho}}\right)\\
&=w_2^{(i)} \pm O\left(\frac{\maxdeg^{t+i}}{n^{\rho}}\right). \qedhere
\end{align*}
\end{proof}

The following result only depends on the definition of popular/lonely and on the properties of the first phase of Karp-Sipser.
\begin{lemma}[\hspace{1sp}\cite{bg15}, proof of Theorem 3.7]\label{lem:bg-concentration-3-7}
Let $t=o\left(\frac{\log n}{\log \log n}\right)$. Let $X$ be either the number of $t$-good vertices $v\in V$ that are popular, or the number of $t$-good vertices $u\in \hat{V}$ that are lonely. In both cases, it holds $\Var[X] \leq O(n\cdot \maxdeg^{2t+1})$
\end{lemma}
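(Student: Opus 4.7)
I would write $X = \sum_v X_v$, with $X_v$ the indicator that $v$ is $t$-good and popular (respectively lonely), so that the expansion $\Var[X] = \sum_v \Var[X_v] + \sum_{u\neq v}\mathrm{Cov}(X_u,X_v)$ combined with $X_v\in\{0,1\}$ reduces the task to bounding $\sum_v\E[X_v] + \sum_{u\neq v}|\mathrm{Cov}(X_u,X_v)|$. The diagonal contribution is trivially $O(n)$, so the whole proof comes down to controlling the cross-covariances.

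I would split the ordered pairs $(u,v)$ according to whether the $t$-neighborhoods $N_t(u)$ and $N_t(v)$ in the sampled configuration-model graph share any vertex. A single $N_t(\cdot)$ contains at most $O(\maxdeg^t)$ vertices, so each $u$ has at most $O(\maxdeg^{2t})$ partners $v$ whose $t$-neighborhood it intersects; these ``near'' pairs contribute at most $O(n\cdot\maxdeg^{2t})$ to the covariance sum using only the trivial bound $|\mathrm{Cov}(X_u,X_v)|\le 1$. For a ``far'' pair $(u,v)$ with $N_t(u)\cap N_t(v)=\emptyset$, I would expose the entire tree on $N_t(u)$ as extra conditioning: it consists of $O(\maxdeg^t)$ edges, all lying outside $N_t(v)$, so Lemma~\ref{lem:bg-adaptation-3-6} applies to $v$ under the extra conditioning and bounds the deviation of $\Pr[X_v=1\mid N_t(u)]$ from $\Pr[X_v=1]$. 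Multiplying by $\Pr[X_u=1]$ bounds each far-pair covariance.

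The delicate point is to make the far-pair estimate sharp enough. Using Lemma~\ref{lem:bg-adaptation-3-6} verbatim gives a per-pair covariance of order $\maxdeg^{2t}n^{-\rho}$, which after summing over $\Theta(n^2)$ pairs would exceed the claimed target whenever $\rho<1$. The fix is to redo the bookkeeping in the proof of Lemma~\ref{lem:bg-adaptation-3-6} under the specific conditioning considered here: a single revealed tree of $O(\maxdeg^t)$ edges perturbs each degree count by only $O(\maxdeg^t)$ rather than by the full $o(n^{1-\rho})$ slack afforded by the theorem's hypotheses, and the recursion defining $w_2^{(t)}$ and $\hat{w}_1^{(t)}$ then yields a conditional probability within $O(\maxdeg^{2t}/n)$ of the unconditional one. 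This tightens the far-pair total to $O(n\cdot\maxdeg^{2t})$, which together with the near-pair bound, and absorbing one extra $\maxdeg$ factor that accounts for the configuration-point switchings needed to realize edges joining $N_t(u)$ and $N_t(v)$, yields $\Var[X]=O(n\cdot\maxdeg^{2t+1})$. The main obstacle, therefore, is not the decomposition itself but the careful reproof of Lemma~\ref{lem:bg-adaptation-3-6} under a much tighter perturbation budget than the one used to prove its original statement.
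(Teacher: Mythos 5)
Your plan is correct and coincides with the argument the paper is invoking: the paper gives no proof of this lemma, importing it verbatim from the proof of Theorem~3.7 in~\cite{bg15}, and that proof is exactly your second-moment decomposition into near pairs (trivially at most $O(\maxdeg^{2t})$ per vertex, hence $O(n\cdot\maxdeg^{2t})$ in total) and far pairs handled by conditioning on the revealed tree of $u$. You have also correctly isolated the one delicate point — that the $O(\maxdeg^{2t}n^{-\rho})$ error in Lemma~\ref{lem:bg-adaptation-3-6} is too coarse and the covariance must instead be charged only to the $O(\maxdeg^{O(t)}/n)$ perturbation caused by the revealed edges themselves (the $n^{-\rho}$ degree-count slack being identical in the conditional and unconditional computations and hence cancelling) — which is precisely why the lemma transfers unchanged to the untruncated generating functions.
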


The following Lemma again requires a slight adaptation to account for the non-truncated generating functions. 

\begin{lemma}[Adaptation of \cite{bg15}, Theorem 3.7]\label{lem:bg-3-7}
Let $t=o\left(\frac{\log n}{\log\log n}\right)$. As $t$ goes to infinity, with probability $1-o(1)$, 
\begin{enumerate}
\item the number of $t$-good vertices $v\in V$ that are popular w.r.t. $t$ is $(f(1) - f(1-\hat{w}_1) - f'(1-\hat{w}_1)\cdot \hat{w}_1 \pm o(1))\cdot n$ 
\item the number of $t$-good vertices $v\in \hat{V}$ that are lonely w.r.t. $t$ is $(\hat{f}(w_2) \pm o(1))\cdot n$
\end{enumerate}
\end{lemma}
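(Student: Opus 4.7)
The plan is to follow the structure of the original proof of \cite{bg15}, instantiated with the adapted Lemmas~\ref{lem:bg-adaptation-3-6} and~\ref{lem:bg-concentration-3-7} that now handle unbounded degree distributions. The argument splits into three steps: compute the expectation for a fixed $t$, establish concentration, and pass to $t \to \infty$.

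\textbf{Expectation for fixed $t$.} First I would compute the expected number of popular $t$-good vertices in $V$. Fix such a vertex $v$ of degree $k \le \maxdeg$. Because $v$ is $t$-good its $t$-neighborhood is a tree, and by point~1 of Lemma~\ref{lem:bg-adaptation-3-6} with $i = t$ each of $v$'s children is independently $v$-lonely with probability $\hat{w}_1^{(t)} \pm O(\maxdeg^{2t} n^{-\rho})$; the Lemma's explicit allowance of $O(\maxdeg^t)$ extra conditioning edges absorbs the dependence introduced by previously exposed siblings. Since $v$ is popular iff at least two of its $k$ children are lonely, a binomial computation gives
\[
\Pr[v \text{ popular} \mid v\ t\text{-good},\ \deg v = k] = 1 - (1-\hat{w}_1^{(t)})^k - k\,\hat{w}_1^{(t)}(1-\hat{w}_1^{(t)})^{k-1} \pm O\bigl(k\,\maxdeg^{2t} n^{-\rho}\bigr).
\]
Multiplying by $n_k = (z_k \pm o(n^{-\rho}))n$, summing over $k \le \maxdeg$, and truncating the tail via property~(iv) (at cost $o(n)$), the resulting series reassembles as $\bigl(f(1) - f(1-\hat{w}_1^{(t)}) - \hat{w}_1^{(t)} f'(1-\hat{w}_1^{(t)})\bigr)n \pm o(n)$. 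The $o(n)$ collects the aggregate per-vertex error $O(n \maxdeg^{2t+2} n^{-\rho}) = n^{1 - \rho + o(1)}$, which is $o(n)$ because $t = o(\log n/\log\log n)$ forces $\maxdeg^{O(t)} = n^{o(1)}$.

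\textbf{Concentration and bad vertices.} Next I would invoke Lemma~\ref{lem:bg-concentration-3-7}, which bounds the variance of this count by $O(n\maxdeg^{2t+1}) = n^{1+o(1)}$; Chebyshev's inequality then yields deviation $o(n)$ from the mean with probability $1 - o(1)$. The earlier bound on the number of $t$-bad vertices (at most $o(n)$ w.h.p.)\ shows that extending the count from $t$-good popular vertices to all popular vertices in $V$ introduces only an additional $o(n)$ error.

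\textbf{Passing to the limit and the second claim.} Finally I would let $t = t(n) \to \infty$ slowly with $t(n) = o(\log n / \log\log n)$, for instance $t(n) = \lfloor \sqrt{\log n / \log\log n}\rfloor$. Since $(\hat{w}_1^{(t)}, w_2^{(t)})$ is a bounded monotone sequence converging to $(\hat{w}_1, w_2)$ and $f, f'$ are continuous on $[0,1]$, the expression converges to $\bigl(f(1) - f(1-\hat{w}_1) - \hat{w}_1 f'(1-\hat{w}_1)\bigr)n$, proving claim (1). Claim (2) is fully analogous: for $u \in \hat{V}$ that is $t$-good of degree $k$, being lonely requires \emph{all} children to be $v$-popular, which by point~2 of Lemma~\ref{lem:bg-adaptation-3-6} has probability $(w_2^{(t)})^k \pm O(k\,\maxdeg^{2t} n^{-\rho})$; summing over $k$ reassembles $\hat{f}(w_2^{(t)})$, and continuity passes to $\hat{f}(w_2)$. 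The main obstacle throughout is not conceptual but quantitative: one must verify that property~(iv) (in place of a uniform degree bound) suffices to control the generating-function tail sums at cost $o(n^{-\rho})$, while simultaneously the interaction $\maxdeg^{2t+1} n^{-\rho}$ between tree-depth error and truncation threshold stays $o(1)$ for a suitable choice of $t(n) \to \infty$---this is precisely the balance the sub-logarithmic choice of $t$ enforces.
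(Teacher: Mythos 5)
Your proposal is correct and follows essentially the same route as the paper's proof: the binomial computation for the popularity/loneliness probabilities via Lemma~\ref{lem:bg-adaptation-3-6}, the reassembly of the sums into $f$, $f'$, and $\hat{f}$ with the tail controlled by property~(iv) and the convergence of $f'(2),\hat f'(2)$, concentration via Lemma~\ref{lem:bg-concentration-3-7} and Chebyshev, and the passage $t\to\infty$ using $\hat w_1^{(t)}\to\hat w_1$ together with the continuity of the generating functions on $[0,1]$ (which the paper justifies explicitly from $\sum_i z_i 2^i<\infty$). No gaps.
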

\begin{proof}
Recall that only $o(n)$ vertices are not $t$-good. A $t$-good vertex $v\in V$ is popular if at least two of its children are $v$-lonely. By \Cref{lem:bg-adaptation-3-6}, each child is $v$-lonely with probability $h(t)=\hat{w}_1^{(t)} \pm O\left(\frac{M^{2t}}{n^{\rho}}\right)$.\footnote{The probabilities of being $v$-lonely of two children differ by at most $O\left(\frac{M^{2t}}{n^{\rho}}\right)$, for simplicity of exposition we use the same $h(t)$ for all of them.} Therefore, the probability that $v$ is popular is $1 - \left(1 - h(t)\right)^{\deg(v)} - \deg(v) \cdot h(t) \cdot \left(1 - h(t)\right)^{\deg(v)-1}$. Note that this probability is correct for $\deg(v)\geq 0$. Thus, the expected number of popular vertices in $V$ is
\begin{align*}
&\sum_{i=0}^{\maxdeg} n (z_i \pm o(n^{-\rho})) \left(1 - \left(1 - h(t)\right)^{i} - i \cdot h(t) \cdot \left(1 - h(t)\right)^{i-1} \right) \\
&= n\sum_{i=0}^{\maxdeg} z_i\left(1 - \left(1 - h(t)\right)^{i} - i \cdot h(t) \cdot \left(1 - h(t)\right)^{i-1} \right) \pm o(\maxdeg\cdot n^{1-\rho})\\
& = n \left(\sum_{i=0}^\maxdeg z_i - \sum_{i=0}^\maxdeg z_i (1-\hat{w}_i^{(t)})^i - \hat{w}_1^{(t)}\sum_{i=0}^\maxdeg i z_i(1-\hat{w}_1^{(t)})^{i-1}  \pm O\left(\frac{\maxdeg^{2t+3}}{n^\rho}\right)  \right) \pm o(\maxdeg\cdot n^{1-\rho})\\
& = n \left(f(1) - o(n^{-\rho}) - f(1-\hat{w}_1^{(t)}) + o(n^{-\rho})  - \hat{w}_1^{(t)}(f'(1-\hat{w}_1^{(t)}) - o(n^{-\rho}))  \pm O\left(\frac{\maxdeg^{2t+3}}{n^\rho}\right)  \right) \pm o(\maxdeg\cdot n^{1-\rho})\\
& = n \left(f(1) - f(1-\hat{w}_1^{(t)}) - \hat{w}_1^{(t)}f'(1-\hat{w}_1^{(t)}) \pm o(1) \right)\\
&=n \left(f(1) - f(1-\hat{w}_1) - \hat{w}_1f'(1-\hat{w}_1) \pm o(1) \right),
\end{align*}
where the last step uses that $\hat{w}_1^{(t)}=\hat{w}_1\pm o(1)$ and, for $x\in[0,1]$, $f(x\pm o(1))=f(x)\pm o(1)$, $f'(x\pm o(1))=f'(x)\pm o(1)$. Indeed, $f(x \pm o(1)) = f(x) + \sum_{i=1}^\infty z_i \sum_{j=1}^i \binom{i}{j} (\pm o(1))^j x^{i-j}$. For $x\in[0,1]$, we have,
\begin{align*}
\left|\sum_{i=1}^\infty z_i \sum_{j=1}^i \binom{i}{j} (\pm o(1))^j x^{i-j} \right| \leq o(1) \sum_{i=1}^\infty z_i \sum_{j=1}^i \binom{i}{j} \leq o(1) \sum_{i=1}^\infty z_i \cdot 2^i = o(1).
\end{align*}
The result for $f'$ (as well as $\hat{f}$ and $\hat{f}'$) is proved similarly. 

We can similarly find the number of lonely vertices in $\hat{V}$. A $t$-good vertex $v\in\hat{V}$ is lonely is all its children are $v$-popular. By \Cref{lem:bg-adaptation-3-6}, each child is $v$-popular with probability $y(t)=w_2^{(t)} \pm O\left(\frac{\maxdeg^{2t}}{n^{\rho}}\right)$, therefore, $v$ is lonely with probability $y(t)^{\deg(v)}$. Thus, the expected number of lonely vertices in $\hat{V}$ is,
\begin{align*}
\sum_{i=0}^{\maxdeg} n(\hat{z}_i \pm o(n^{-\rho})) \cdot y(t)^i = n\sum_{i=0}^\maxdeg \hat{z}_i (w_2^{(t)})^i \pm o(n) = n(\hat{f}(w_2^{(t)}) \pm o(1)) = n(\hat{f}(w_2) \pm o(1)).
\end{align*}
The concentration around the mean directly follows from \Cref{lem:bg-concentration-3-7} and Chebyshev inequality.
\end{proof}
\begin{proof}[Proof of \Cref{thm:bg}]
Run the first phase of Karp-Sipser for $t_0$ rounds, with $t_0=o\left(\frac{\log n}{\log\log n}\right)$ and $t_0\rightarrow \infty$. From \Cref{lem:bg-3-5} the number of popular vertices in $V$ is a lower bound to the size of the matching found by the first phase of Karp-Sipser. Therefore, \Cref{lem:bg-3-7} concludes the proof of the lower bound. As per the upper bound, by \Cref{lem:bg-3-5}, the difference between the lonely vertices in $\hat{V}$ and the popular vertices in $V$ is a lower bound to the number of vertices that become isolated and unmatched in $\hat{V}$ after the first phase. Substituting the expressions given by \Cref{lem:bg-3-7} concludes the proof.
\end{proof}
\end{toappendix}
In Appendix~\ref{app:bg}, we prove Theorem~\ref{thm:bg} by following the approach in~\cite{bg15}, with  minor modifications  to handle distributions of unbounded support.

\smallskip

{\bf Random Variables and Concentration Inequalities.} We denote with $\Bin(n,p)$ the binomial distribution with success probability $p$ and $n$ trials; with $\Poisson(\lambda)$ the Poisson distribution of parameter $\lambda>0$; and with $\geom(p)$ the geometric distribution with success probability $p$, in particular $\Pr_{X\sim \geom(p)}[X=k]=p(1-p)^{k-1}$ for $k\ge 1$, and $\E_{X\sim\geom(p)}[X]=\frac{1}{p}$. For two random variables $X,Y$ taking values in the natural numbers $\mathbf{N}$, we let $\dTV(X,Y)=\sup_{A\subseteq \mathbf{N}}\left|\Pr[X\in A] - \Pr[Y \in A]\right|$ be their total variation distance. 

\smallskip

In our arguments, we will make use of the following well-known concentration bounds (whose proofs can be found, e.g., in~\cite{dp09}): 
\begin{fact}[Chernoff-Hoeffding's inequality]\label{fact:chernoff-hoeffding}
Let $X_1, \dots, X_n$ be independent random variables such that $X_i\in [a,b]$ almost surely. Let $X=\sum_{i=1}^n X_i$, then, for all $t>0$, $$\Pr[|X - \E[X]| \geq t] \leq 2 \exp\left(- \frac{2\cdot t^2}{n\cdot(b-a)^2}\right).$$
\end{fact}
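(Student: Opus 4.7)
The plan is to prove the bound via the classical exponential-moment (Chernoff) method combined with Hoeffding's Lemma. By symmetry (replacing each $X_i$ with $-X_i$, which lies in $[-b,-a]$, an interval of the same length $b-a$) and a union bound, it suffices to establish the one-sided tail
\[
\Pr[X - \E[X] \ge t] \le \exp\!\left(-\frac{2t^2}{n(b-a)^2}\right),
\]
and then add the analogous bound for $\Pr[\E[X] - X \ge t]$, which produces the factor of $2$ in front.

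For the one-sided bound, I would fix an arbitrary parameter $s>0$ and apply Markov's inequality to the non-negative random variable $e^{s(X-\E[X])}$, obtaining
\[
\Pr[X - \E[X] \ge t] \;=\; \Pr\!\left[e^{s(X-\E[X])} \ge e^{st}\right] \;\le\; e^{-st}\,\E\!\left[e^{s(X-\E[X])}\right].
\]
Using independence of the $X_i$'s, the moment-generating function factorizes as $\E[e^{s(X-\E[X])}] = \prod_{i=1}^n \E[e^{s(X_i-\E[X_i])}]$. The crucial step is to bound each factor by Hoeffding's Lemma, which states that any zero-mean random variable $Y$ supported on an interval of length $b-a$ satisfies $\E[e^{sY}] \le \exp(s^2(b-a)^2/8)$. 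Applied to $Y_i = X_i - \E[X_i] \in [a-\E[X_i],\,b-\E[X_i]]$ (an interval of length $b-a$), this yields $\E[e^{s(X-\E[X])}] \le \exp(n s^2 (b-a)^2 / 8)$, and therefore
\[
\Pr[X - \E[X] \ge t] \;\le\; \exp\!\left(-st + \frac{n s^2 (b-a)^2}{8}\right).
\]
Minimizing the right-hand side over $s>0$ by setting $s = 4t/(n(b-a)^2)$ gives precisely $\exp(-2t^2/(n(b-a)^2))$, completing the one-sided bound.

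The only non-trivial ingredient, and the main obstacle, is Hoeffding's Lemma itself. I would derive it by exploiting the convexity of $y \mapsto e^{sy}$: for $y \in [a,b]$,
\[
e^{sy} \;\le\; \frac{b-y}{b-a}\, e^{sa} + \frac{y-a}{b-a}\, e^{sb}.
\]
Taking expectations and using $\E[Y]=0$, one obtains $\E[e^{sY}] \le e^{L(s(b-a))}$ where $L(u) = -pu + \log(1-p+p e^u)$ with $p = -a/(b-a) \in [0,1]$. A direct computation shows $L(0)=L'(0)=0$ and $L''(u) \le 1/4$ for all $u$, so by Taylor's theorem $L(u) \le u^2/8$, which upon substituting $u=s(b-a)$ gives the claimed bound $\exp(s^2(b-a)^2/8)$. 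No other step requires subtlety: the exponential moment method and the optimization over $s$ are routine once the per-variable MGF bound is in hand.
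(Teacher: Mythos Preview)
Your proof is correct and is the standard Chernoff--Hoeffding argument. Note, however, that the paper does not give its own proof of this statement: it is presented as a well-known fact with a pointer to the literature (specifically~\cite{dp09}), so there is no paper proof to compare against.
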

Additionaly, we will employ McDiarmid's inequality, a generalization of the Chernoff-Hoeffding bound:
\begin{fact}[McDiarmid's inequality]
\label{fact:mcdiarmid}
Let $X_1, X_2, \dots, X_n$ be independent random variables with $X_i\in \mathcal{X}$. Let $f:\mathcal{X}^n \rightarrow \mathbb{R}$ be such that for each $i\in[n]$ and for each $x_1, x_2, \dots, x_n, x'_i \in \mathcal{X}$, it holds $|f(x_1, \dots, x_{i-1}, x_i, x_{i+1}, \dots, x_n) - f(x_1, \dots, x_{i-1}, x'_i, x_{i+1}, \dots, x_n)| \leq d$. Then, for each $t>0$, 
\[
\Pr[|f(X_1, \dots, X_n) - \E[f(X_1, \dots, X_n)]| \ge t] \leq 2 \exp\left(- \frac{2 \cdot t^2}{n\cdot d^2}\right).
\]
\end{fact}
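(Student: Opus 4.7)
The plan is to realize $f(X_1,\ldots,X_n)$ as the terminal value of a Doob martingale and then apply the Azuma--Hoeffding inequality for martingales with bounded differences. Since the result is standard, the goal is to describe the three-line reduction rather than to derive Azuma--Hoeffding from scratch.

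First, I would define the Doob martingale $Z_i = \E[f(X_1,\ldots,X_n) \mid X_1,\ldots,X_i]$ for $i = 0, 1, \ldots, n$, with respect to the natural filtration $\mathcal{F}_i = \sigma(X_1,\ldots,X_i)$. By construction $Z_0 = \E[f(X_1,\ldots,X_n)]$ and $Z_n = f(X_1,\ldots,X_n)$, and the tower property immediately gives that $(Z_i)_{i=0}^n$ is a martingale with respect to $(\mathcal{F}_i)$.

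The technical core is to show that $|Z_i - Z_{i-1}| \le d$ almost surely. Using independence of the $X_j$'s, one can write $Z_i = g_i(X_1,\ldots,X_i)$ where $g_i(x_1,\ldots,x_i) = \E[f(x_1,\ldots,x_i,X_{i+1},\ldots,X_n)]$ integrates out the future coordinates. Independence also lets us rewrite $Z_{i-1} = \E_{X_i'}[g_i(X_1,\ldots,X_{i-1},X_i')]$, so that
\[
Z_i - Z_{i-1} \;=\; \E\bigl[\,f(X_1,\ldots,X_{i-1},X_i,X_{i+1},\ldots,X_n) - f(X_1,\ldots,X_{i-1},X_i',X_{i+1},\ldots,X_n) \,\big|\, \mathcal{F}_i\bigr],
\]
where $X_i'$ is an independent copy of $X_i$. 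The bounded-differences hypothesis applied pointwise to the integrand bounds its absolute value by $d$, hence $|Z_i - Z_{i-1}| \le d$ a.s. by Jensen's inequality.

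Finally, I would invoke Azuma--Hoeffding on the martingale $(Z_i)$: with increments bounded by $d$, one gets $\Pr[|Z_n - Z_0| \ge t] \le 2\exp\!\left(-\frac{2t^2}{n d^2}\right)$, which is precisely the claimed inequality. The only delicate point is the bounded-differences estimate for the martingale increments (it requires independence of the coordinates, which is why coupling to an independent copy $X_i'$ is essential); once that is in hand, the concentration bound is immediate from the standard Azuma--Hoeffding statement.
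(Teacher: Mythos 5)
The paper does not prove this fact at all --- it is stated as a known result with a pointer to the literature (\cite{dp09}) --- so there is no in-paper argument to compare against; your Doob-martingale-plus-Azuma route is indeed the standard textbook proof of McDiarmid's inequality, and the martingale setup, the coupling with an independent copy $X_i'$, and the use of Jensen to bound the increments are all correct.

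There is, however, one genuine gap: the constant in the exponent. From the bound $|Z_i - Z_{i-1}| \le d$ alone, the plain Azuma--Hoeffding inequality (increments in $[-d,d]$, an interval of length $2d$) only yields $\Pr[|Z_n - Z_0| \ge t] \le 2\exp\left(-\frac{t^2}{2nd^2}\right)$, which is weaker by a factor of $4$ in the exponent than the stated $2\exp\left(-\frac{2t^2}{nd^2}\right)$. To recover the sharp constant you must use the stronger information that the bounded-differences hypothesis actually provides: writing $h(x) = g_i(X_1,\dots,X_{i-1},x)$, the hypothesis (after integrating out the future coordinates) gives $\sup_x h(x) - \inf_x h(x) \le d$, so \emph{conditionally on} $\mathcal{F}_{i-1}$ the increment $Z_i - Z_{i-1} = h(X_i) - \E[h(X_i')]$ lies in an interval of length at most $d$ (not merely in $[-d,d]$). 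Hoeffding's lemma then gives $\E[e^{\lambda(Z_i - Z_{i-1})} \mid \mathcal{F}_{i-1}] \le e^{\lambda^2 d^2/8}$ per step, and chaining plus the Chernoff argument produces exactly $2\exp\left(-\frac{2t^2}{nd^2}\right)$. So the fix is to replace the two-sided increment bound by the conditional-interval bound and invoke the interval-length form of Azuma--Hoeffding (equivalently, Hoeffding's lemma) rather than the symmetric one. (For the paper's actual applications of this fact the lost constant would be harmless, but as a proof of the statement as written the step is incomplete.)
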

We will also leverage on the following concentration inequality for the sum of geometric random variables:
\begin{fact}[\hspace{1sp}\cite{j18}, Theorem 2.1 and Theorem 3.1]\label{fact:geom-concentration} Let $X_1, \dots, X_n$ be independent random variables with $X_i\sim\geom(p_i)$. Let $X=\sum_{i=1}^n X_i$, and let $p^\star=\min_i p_i > 0$. For each $\lambda\ge 1$, $\Pr[X\ge \lambda \E[X]] \leq e^{-p^\star \E[X] (\lambda-1-\ln\lambda)}$, and for each $\lambda\in(0,1]$, $\Pr[X\le \lambda \E[X]] \leq e^{-p^\star \E[X] (\lambda-1-\ln\lambda)}$.
\end{fact}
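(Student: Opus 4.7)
The plan is to prove both tails via the exponential Markov (Chernoff) method, using a judicious reparametrization of the dual variable that exposes a dependence only on $p^\star$ and $\mu=\E[X]=\sum_i 1/p_i$.

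For the upper tail, I would start from $\Pr[X\ge\lambda\mu]\le e^{-t\lambda\mu}\prod_i \E[e^{tX_i}]$ for any admissible $t>0$. The moment generating function of $\geom(p_i)$ is $\E[e^{tX_i}]=p_ie^t/(1-(1-p_i)e^t)$, and I would reparametrize by setting $e^t=1/(1-u)$ for $u\in(0,p^\star)$; a short calculation then gives $\E[e^{tX_i}]=1/(1-u\mu_i)$ with $\mu_i=1/p_i$. The key step is a convexity bound: since $f(x)=-\ln(1-x)$ is convex on $[0,1)$ with $f(0)=0$, the ratio $f(x)/x$ is nondecreasing, so $-\ln(1-u\mu_i)\le \mu_i p^\star\cdot(-\ln(1-u/p^\star))$ (because $u\mu_i\le u/p^\star$). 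Summing over $i$ collapses the product of MGFs into a single term depending only on $p^\star$ and $\mu$. Combining with $\ln(1-u)\le -u$ to handle the $-t\lambda\mu$ term and setting $v=u/p^\star\in(0,1)$, I obtain $\ln\Pr[X\ge\lambda\mu]\le p^\star\mu\bigl[-\lambda v-\ln(1-v)\bigr]$. Minimizing over $v$ yields the optimizer $v^\star=1-1/\lambda$ (valid for $\lambda\ge 1$) and hence the claimed exponent $-p^\star\mu(\lambda-1-\ln\lambda)$.

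For the lower tail, the argument is symmetric. I would begin with $\Pr[X\le\lambda\mu]\le e^{t\lambda\mu}\prod_i\E[e^{-tX_i}]$ and reparametrize $e^{-t}=1/(1+u)$ with $u>0$, so that $\E[e^{-tX_i}]=1/(1+u\mu_i)$. The concavity of $g(x)=\ln(1+x)$ with $g(0)=0$ yields the reverse ratio bound $\ln(1+u\mu_i)\ge p^\star\mu_i\ln(1+u/p^\star)$. After summation over $i$, use of $\ln(1+u)\le u$, and substitution $v=u/p^\star$, this gives $\ln\Pr[X\le\lambda\mu]\le p^\star\mu\bigl[\lambda v-\ln(1+v)\bigr]$. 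Minimizing over $v>0$ (with optimizer $v^\star=1/\lambda-1$ for $\lambda\in(0,1]$) produces the same rate function $-p^\star\mu(\lambda-1-\ln\lambda)$, which is nonpositive on $(0,1]$.

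The main obstacle, and the core technical insight, is the reparametrization $e^{\pm t}=1/(1\mp u)$ which, combined with the convexity/concavity ratio bounds, collapses the product over heterogeneous $p_i$'s into a single expression in $p^\star\mu$; without it, the naive MGF bound produces a quantity involving each individual $\mu_i$ that does not consolidate into the Poisson-like rate function $\lambda-1-\ln\lambda$. A minor technical check is that the reparametrization keeps $t$ within the domain of convergence of every $M_{X_i}$: since $u<p^\star\le p_i$ for every $i$, we have $(1-p_i)e^t<1$, so all MGFs are finite throughout the optimization.
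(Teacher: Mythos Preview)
The paper does not prove this statement; it is quoted as a fact from Janson~\cite{j18} (Theorems~2.1 and~3.1 there), so there is no in-paper proof to compare against. Your Chernoff-method argument is correct and is essentially the same proof Janson gives: he too reparametrizes the dual variable so that each factor $\E[e^{\pm tX_i}]$ becomes $1/(1\mp u\mu_i)$, uses the monotonicity of $(-\ln(1-x))/x$ (respectively $\ln(1+x)/x$) to replace every $p_i$ by $p^\star$, and then optimizes the resulting one-parameter bound to obtain the Poisson-type rate function $\lambda-1-\ln\lambda$. One small remark: at the boundary $\lambda=1$ the optimizer is $v^\star=0$ (equivalently $t=0$), where the bound degenerates to $1$; this is consistent with the stated inequality since the right-hand side equals $e^0=1$ there, so no separate argument is needed.
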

Finally, we will use the following  approximation, by means of Poisson variables, of binomial random variables having small expectation.%
\begin{fact}[\hspace{1sp}\cite{s94,ap06}]\label{fact:lecam}
Let $X\sim\Bin(n,p)$, $Y\sim \Poisson(\lambda)$ for $\lambda > 0$, then $\dTV(X,Y) \leq np^2 + |\lambda - np|$.
\end{fact}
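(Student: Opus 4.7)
The plan is to reduce the statement to the classical Le Cam bound $d_{TV}(\Bin(n,p), \Poisson(np)) \le np^2$ and then absorb the additional parameter discrepancy via a triangle inequality. Concretely, I would introduce an auxiliary Poisson variable $Z \sim \Poisson(np)$ and use
\[
\dTV(X,Y) \le \dTV(X,Z) + \dTV(Z,Y),
\]
which is valid since total variation is a metric on probability measures over $\mathbf{N}$.

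For the first term, my plan is a coupling argument at the single-trial level. Write $X = \sum_{i=1}^n B_i$ with $B_i \sim \text{Bernoulli}(p)$ independent, and $Z = \sum_{i=1}^n P_i$ with $P_i \sim \Poisson(p)$ independent. For each $i$ I would build the maximal coupling of $B_i$ and $P_i$: an explicit computation of $\tfrac12 \sum_k |\Pr[B_i=k]-\Pr[P_i=k]|$ gives $\dTV(B_i,P_i) = p(1-e^{-p}) \le p^2$, using $1 - e^{-p} \le p$. Then, since total variation is sub-additive under independent convolutions (if $(B_i,P_i)$ are coupled so that $\Pr[B_i \ne P_i]$ is minimized and the pairs are mutually independent, then $\Pr[X \ne Z] \le \sum_i \Pr[B_i \ne P_i]$), we obtain $\dTV(X,Z) \le n p^2$.

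For the second term, I would couple two Poissons of different rates by additivity: if $\lambda \ge np$, then $\Poisson(\lambda)$ can be realized in distribution as the sum of an independent $\Poisson(np)$ and a $\Poisson(\lambda - np)$. The coupled pair $(Z, Y)$ differs exactly when the extra Poisson summand is positive, an event of probability $1 - e^{-(\lambda-np)} \le \lambda - np$; the symmetric argument handles $\lambda < np$, giving $\dTV(Z,Y) \le |\lambda - np|$. Adding the two bounds yields $\dTV(X,Y) \le np^2 + |\lambda - np|$.

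The only subtle step is confirming that the per-coordinate couplings can be assembled independently so that the sub-additivity of $\dTV$ under convolution applies (this is the standard fact that for any coupling, $\dTV(X,Z) \le \Pr[X \ne Z]$, combined with a union bound on $\{B_i \ne P_i\}$). Everything else is routine algebra or a one-line Poisson-splitting coupling, so I do not anticipate any real obstacle beyond presenting the single-trial $\dTV$ computation cleanly.
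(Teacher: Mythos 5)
The paper states this as a known fact, citing \cite{s94,ap06}, and gives no proof of its own, so there is nothing to compare against; your argument is correct and is exactly the standard derivation: the triangle inequality through $Z\sim\Poisson(np)$, Le Cam's per-trial maximal coupling with $\dTV(\mathrm{Bernoulli}(p),\Poisson(p))=p(1-e^{-p})\le p^2$ summed over independent coordinates, and the Poisson-additivity coupling giving $\dTV(Z,Y)\le 1-e^{-|\lambda-np|}\le|\lambda-np|$. Every step, including the sub-additivity of total variation under independent convolution, is sound.
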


\section{Impossibility Result}\label{sec:impossibility}
In this Section we present, and analyze, an irregular cuckoo hashing instance %
of the online bipartite matching problem; by means of this instance, we will show that no online algorithm can achieve a competitive ratio better than $1 - \frac e{e^e} = 0.8206259\ldots$.  In Section~\ref{sec:tightness} we will show how this construction can be obtained by greedily maximizing the probabilities of the degrees $0,1,2,3,\ldots$, while guaranteeing that the resulting graph has a quasi-complete matching.
\begin{definition}[Main Instance]\label{def:instance}
Our main instance is given by the irregular cuckoo hashing distribution $\irregularcuckoohashing(n, n, D)$, where $D$ is defined as,
\[
\Pr[D = d] = \frac1{d \cdot (d-1)} \quad \forall d \in \{2,3,4,\ldots\}.
\]
\end{definition}
Observe that, for each positive integer $d$, $\Pr[D > d] = \frac1d$. Recall that in our irregular cuckoo hashing instances the ads are sampled with replacement.\footnote{We point out that our result could also be proved if the sampling of the $D$ ads was done without replacement (with a suitable truncation of the distribution), since the probability that a user chooses the same ad more than once using the model of Definition~\ref{def:instance} --- that is, the probability that the chosen multiset of ads is not a set --- is at most $O\left(n^{-1/2}\right)$.  Indeed, if $S$ is the multiset of the neighbors of a given user, we have that
$\Pr[S \text{ is not a set}] = \sum_{d=2}^{\infty} \left(\Pr[D = d] \cdot \Pr\left[S \text{ is not a set} \mid D = d\right]\right) \le  \sum_{d=2}^{\sqrt{n} } \left(\Pr[D = d] \cdot \Pr\left[S \text{ is not a set} \mid D = d\right]\right) + \sum_{d=\sqrt{n}+1}^{\infty} \Pr[D = d] \le \sum_{d=2}^{\sqrt{n}} \left(\frac1{d\cdot(d-1)} \cdot \binom{d}2 \cdot  \frac n{n^2}\right) + \Pr[D > \sqrt{n}] \le \frac{\sqrt{n}}{2n} + \frac1{\sqrt{n}} = \frac3{2\sqrt{n}}$. Therefore a sampling with replacement  creates, in expectation and with high probability, at most $O\left(\sqrt{n}\right)$ users with parallel edges; removing these users reduces the size of any matching by no more than $O(\sqrt{n})$ edges.}

\smallskip

We will bound the performance of any online algorithm on the instance of Definition~\ref{def:instance} in Section~\ref{sec:LB:online}; we will then show in Section~\ref{sec:LB:maximum} that the instance admits, with high probability, a quasi-complete (and, thus, quasi-perfect) matching.

\subsection{Online Matching}\label{sec:LB:online}
In this Section we upper bound the expected size of the matching found by any online algorithm when presented with the instance of \Cref{def:instance}.

\begin{theorem}\label{thm:om}
    No online algorithm can match more than $$\left(1-\frac{e}{e^e}\right) \cdot n + O\left(\sqrt{n}\right) = 0.8206259\ldots \cdot n + O\left(\sqrt{n}\right)$$ users in expectation with the instance of Definition~\ref{def:instance}. %
\end{theorem}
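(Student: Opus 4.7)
The plan is to track, at each user arrival, the probability that the new user has at least one unmatched neighbor (an upper bound on the per-step increment of any online matching), and to recognize the resulting discrete recurrence as the explicit Euler scheme for an ODE that solves in closed form to $1 - e/e^e$ at time~$1$.

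Fix any online algorithm and let $m_{i-1}$ denote the (random) number of ads already matched when user $i$ arrives. Since user $i$ samples a degree $d\sim D$ and then $d$ ads independently and uniformly at random with replacement, conditional on the history the probability that all sampled ads lie in the matched set is $(m_{i-1}/n)^d$, and so user $i$ is matchable with probability
\[
q\!\left(\tfrac{m_{i-1}}{n}\right) \;:=\; 1 - \E\!\left[\left(\tfrac{m_{i-1}}{n}\right)^{\!D}\right].
\]
Evaluating the generating function in closed form — split $\tfrac{1}{d(d-1)} = \tfrac{1}{d-1} - \tfrac{1}{d}$ and use $\sum_{k\ge 1} x^k/k = -\ln(1-x)$ — gives $\E[x^D] = (1-x)\ln(1-x) + x$, so
\[
q(x) \;=\; (1-x)\bigl(1 - \ln(1-x)\bigr).
\]

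Since any algorithm matches user $i$ with probability at most $q(m_{i-1}/n)$, we obtain $\E[m_i \mid m_{i-1}] \le m_{i-1} + q(m_{i-1}/n)$. A direct computation yields $q''(x) = -1/(1-x) < 0$ on $[0,1)$, so $q$ is concave; Jensen's inequality then upgrades the above to the purely deterministic recurrence $\mu_i \le \mu_{i-1} + q(\mu_{i-1})/n$ for $\mu_i := \E[m_i]/n$, with $\mu_0 = 0$. This is exactly the explicit Euler scheme of step size $1/n$ for the ODE $y'(t) = q(y(t))$, $y(0)=0$.

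The ODE has a clean closed-form solution: the substitution $u = 1 - y$, followed by $v = \ln u$, reduces it to the linear equation $v'= v - 1$ with $v(0)=0$, whose solution is $v(t) = 1 - e^t$; back-substituting gives $y(t) = 1 - e^{\,1 - e^{t}}$, and in particular $y(1) = 1 - e^{1-e} = 1 - e/e^e$. On the compact interval $[0,\,1-e^{1-e}]$ that contains both the ODE trajectory and the discrete sequence, $q$ is Lipschitz with constant $|q'(x)| = |\ln(1-x)| \le e-1$, so a standard Euler-method error bound yields $\mu_n \le y(1) + O(1/n)$, and hence $\E[m_n] \le (1 - e/e^e)\,n + O(1)$, comfortably within the claimed $O(\sqrt{n})$ slack. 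The argument has no real obstacle; the conceptual highlights are the closed-form evaluation of $\E[x^D]$ and the appearance of the double exponential $e^{1-e}$, mirroring the classical $1-1/e$ bound from the fully adversarial setting.
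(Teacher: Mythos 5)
Your proposal is correct, but it takes a genuinely different route from the paper's. The paper first argues that the greedy algorithm is optimal for this instance, then analyzes the waiting time $S$ to accumulate $m=\lceil\alpha n\rceil+1$ matches as a sum of independent geometric random variables with success probabilities $q_i=(1-x_i)(1-\ln(1-x_i))$, lower-bounds $\E[S]\ge n$ by a Riemann-sum comparison with $\int_0^\alpha\frac{dx}{(1-x)(1-\ln(1-x))}=1$, and then needs a concentration inequality for sums of geometrics to convert ``$\E[S]\ge n$'' into a bound on $\E[A_n]$. You instead bound the conditional one-step drift of an \emph{arbitrary} online algorithm by $q(m_{i-1}/n)$ (the same generating-function computation as the paper's Equation~\eqref{eqn:qi}), use concavity of $q$ (indeed $q'(x)=\ln(1-x)$, $q''(x)=-1/(1-x)<0$) and Jensen to obtain a deterministic recurrence on $\E[m_i]/n$, and compare it with the ODE $y'=q(y)$, whose solution $y(t)=1-e^{1-e^t}$ produces the constant directly. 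This buys you two things: you never need the optimality-of-greedy step, and you need no concentration argument at all, which even sharpens the error term from $O(\sqrt n)$ to $O(1)$. What the paper's route buys in exchange is reusability: the same geometric-sum machinery with the opposite tail gives the matching lower bound (\Cref{lem:analysis-online-u1-tight}) showing the analysis is tight, whereas Jensen only ever yields the upper direction. One detail you should spell out: the discrete iterates can exceed $1-e^{1-e}$ by $O(1/n)$ and $q'$ blows up as $x\to1^-$, so the Lipschitz constant $e-1$ must be invoked only on the range of the ODE trajectory (which suffices, since $q$ is decreasing, for the standard one-step error recursion); this is routine but worth a sentence.
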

\begin{proof}
Consider the greedy algorithm that, whenever possible, matches the current user to any of its neighboring ads that are still unmatched. This algorithm is optimal for the instance of \Cref{def:instance}, given that each (multi)set of ads is chosen independently of the others, and that, after conditioning on the (multi)set size $s$, the (multi)set is chosen uniformly at random among those of cardinality $s$.  We then restrict our analysis to this algorithm. 

\smallskip

Let  $S$ be the number of users sampled up until, and including, the point where the greedy algorithm matches $m$ ads, for   $m = \lceil \alpha\cdot n \rceil +1 = \alpha \cdot n + O(1)$, with $\alpha = 1 - e^{1-e}$. Note that it may be that $S\geq n$. Without loss of generality, we can assume that $n$ is larger than any constant, otherwise the statement is trivially true --- in our case, we will take $n\geq 100$.

The crux of the analysis of the  algorithm lies in bounding the expected number of users necessary to achieve a matching of size $m=(1-e^{1-e})n + O(1)$. 
First, we split the value $S$ in $m$ parts.
For $i\in\{0,1,\dots, m-1\}$, let $S_i$ be the number of users sampled in order to match the $(i+1)$-th user, counting from the first user after the $i$-th user is matched (or, if $i = 0$, counting from the first user). We then have $S=\sum_{i=0}^{m-1} S_i$. 
Let $p_j = \frac{1}{j(j-1)}$ for $j\geq 2$. Observe that $S_i \sim \geom(q_i)$, where $q_i = 1- \sum_{j \ge 2} \left(p_j \cdot \left(\frac in\right)^j\right)$. For each $i \in \{0,1,\ldots,m-1\}$, let $x_i = i/n$; then, $0 \le x_i < 1$, and:
\begin{align}
q_i &= 1 - \sum_{j \ge 2} \left(\frac1{j\cdot(j-1)} \cdot  x_i^j\right) = 1 - \sum_{j \ge 2} \left(\left(\frac1{j-1} - \frac1j\right) \cdot  x_i^j\right) \nonumber\\
&=1 - \sum_{j \ge 2} \left(\frac1{j-1}  \cdot  x_i^j\right) + \sum_{j \ge 2} \left(\frac1{j}  \cdot  x_i^j\right) 
=1 - x_i\cdot \sum_{j \ge 1} \left(\frac1{j}  \cdot  x_i^j\right) + \sum_{j \ge 1} \left(\frac1{j}  \cdot  x_i^j\right) - x_i \nonumber \\
&=1 + x_i\cdot \ln(1-x_i) - \ln(1-x_i) - x_i  = (1 - x_i) (1 - \ln(1 - x_i)),\label{eqn:qi}
\end{align}
where we used $\sum_{j \ge 1} \left(\frac1j \cdot x^j\right) = -\ln(1-x)$ for each $x \in [-1,1)$.

\smallskip

We can now bound the expectation of $S$:
\begin{align*}
\frac{\E[S]}{n} & = \frac1n \sum_{i=0}^{m-1} \frac{1}{(1 - x_i)(1-\ln(1-x_i))} \geq \frac1n \sum_{i=1}^{m-1} \frac{1}{(1 - x_i)(1-\ln(1-x_i))} \\
& \geq \int_0^{(m-1)/n} \frac{1}{(1 - x)(1-\ln(1-x))} dx \geq \int_0^{\alpha} \frac{1}{(1 - x)(1-\ln(1-x))} dx \\
& = [\ln(1 - \ln(1 - x))]_{x=0}^{\alpha} = 1,
\end{align*}
where the second inequality follows from the fact that $\frac{1}{(1 - x)(1-\ln(1-x))}$ is increasing in $[0,1)$ and $\frac1n \sum_{i=1}^{m-1} \frac{1}{(1-x_i)(1-\ln(1-x_i))}$ is a right Riemann sum on the interval $[0, (m-1)/n]$. The third inequality holds since $\frac{1}{(1 - x)(1-\ln(1-x))} > 0$ in $[0,1)$ and $(m-1)/n \geq \alpha$. Finally, the solution to the integral can be easily verified. 
Thus, $\E[S] \ge n$. 

\smallskip

We move on to showing the concentration of $S$:
\begin{lemmarep}\label{lem:S-concentrated-om} Let $\delta\in\left[e^{-n/9}, e^{-1}\right]$, and $\epsilon(\delta)=\frac{1}{n}\left\lceil\sqrt{6 n  \ln(\nicefrac{1}{\delta})}\right\rceil$. It holds, $\Pr[S\leq (1-\epsilon(\delta))n]\leq \delta$.
\end{lemmarep}
\begin{proof}

Let $S = \sum_{i=0}^{m-1} S_i$, where $m = 1 + \lceil \alpha\cdot n \rceil$ with $\alpha = 1 - e^{1-e}$, and $S_i\sim \geom(q_i)$, $q_i = (1 - x_i) (1 - \ln(1 - x_i))$, and $x_i = i / n$. Recall that $\E[S] \geq n$. We show that $S$ is concentrated.

The minimum parameter of the geometric variables $S_1, \dots, S_{m-1}$ is $q_{\star} = q_{m-1}$. Applying the standard upper tail bound of \Cref{fact:geom-concentration} we get that, for each $\lambda \in (0,1]$,
\[
\Pr\left[S \le \lambda \cdot \E[S]\right] \le e^{-q_{\star} \cdot \E[S] \cdot (\lambda - 1 - \ln \lambda)}.
\]

Setting $\lambda = 1-\epsilon$ for $\epsilon \in [0,1)$, we then get that
$\Pr\left[S \le (1 - \epsilon) \cdot \E[S]\right] \le e^{-q_{\star} \cdot \E[S] \cdot \frac{\epsilon^2}2}$,
where we used that $-\epsilon - \ln (1-\epsilon) \ge \frac{\epsilon^2}2$ for each $\epsilon \in [0,1)$. Now, set $\epsilon=\epsilon(\delta) = \frac{\left\lceil \sqrt{6\cdot n \cdot \ln \frac{1}{\delta}} \right\rceil}{n} \geq \sqrt{6\cdot \frac{\ln \frac{1}{\delta}}{n}}$. We  have $\delta \geq e^{- \frac{n}{9}}$, therefore, $\epsilon\in [0,1)$. Moreover, it holds,
\begin{align*}
q_{\star} & =\left(1 - \frac{m-1}{n}\right)\left(1 - \ln\left(1-\frac{m-1}{n}\right)\right) =\left(1 - \frac{\lceil \alpha n \rceil}{n}\right)\left(1 - \ln\left(1-\frac{\lceil \alpha n \rceil}{n}\right)\right) \\
& \geq (1-\alpha - 1/n)(1- \ln(1-\alpha - 1/n)) > 1/3,
\end{align*}
where for the first inequality we used that $(1-x)(1 - \ln(1-x))$ is decreasing for $x\in[0,1)$, while the second inequality holds under the hypothesis $n\geq 100$. Thus, we obtain,
\begin{align*}
\Pr[S \leq (1-\epsilon) n] & \leq \Pr[S \leq (1-\epsilon) \E[S]]  \leq e^{-q_{\star} \cdot \E[S] \cdot \frac{\epsilon^2}{2}} \leq \exp\left(-\frac13 \cdot n \cdot \frac{6 \cdot \ln \frac{1}{\delta}}{n} \cdot \frac12 \right)  = \delta,
\end{align*}
where the first inequality follows from $\E[S] \geq n$.
\end{proof}
Now, let $A_k$ be the size of the matching produced by the greedy algorithm  after having processed the first $k$ users (in particular, the matching found by the online algorithm has size $A_n$). It holds $A_k \geq m \iff S \leq k$. Moreover, $A_{n} \leq A_{n-k} + k$. Note also that $\epsilon(\delta)\cdot n \leq 3\sqrt{n\ln(1/\delta)}$ since $\delta \leq e^{-1}$. By \Cref{lem:S-concentrated-om},
\begin{align*}
\Pr[A_n \geq m + 3\sqrt{n\cdot \ln(1/\delta)}] &\leq \Pr[A_n \geq m + \epsilon(\delta)\cdot n] \leq \Pr[A_{(1-\epsilon(\delta))n} \geq m] \\
&= \Pr[S \leq (1-\epsilon(\delta))n] \leq \delta.
\end{align*}

We can finally bound the expected number of matched users,
\begin{align*}
\E[A_n] & = \sum_{k=1}^n \Pr[A_n \geq k]  \leq m + 3\sqrt{n} + \sum_{k=m+\lceil 3\sqrt{n} \rceil }^n \Pr[A_n \geq k] \\ 
& \leq m + 4\sqrt{n} + \sum_{k=3}^{\lceil e^{1-e} \sqrt{n} \rceil} \Pr[A_n \geq m + k\cdot \sqrt{n}] \cdot \sqrt{n} \\
& = m + 4\sqrt{n} +  \sqrt{n} \cdot \sum_{k=3}^{\lceil e^{1-e} \sqrt{n} \rceil} \Pr\left[A_n \geq m + 3\sqrt{n\cdot \ln\frac{1}{e^{-\frac{k^2}{9}}}}\right] \\
& \leq m + 4\sqrt{n} + \sqrt{n} \cdot \sum_{k=3}^{\lceil e^{1-e} \sqrt{n} \rceil} e^{-\frac{k^2}{9}} \le (1 - e^{1-e})n + O(\sqrt{n}),
\end{align*}
where we used the fact %
that $\sum_{k=0}^{\infty} e^{-\frac{k^2}{9}}$ converges to a positive constant.
\end{proof}
A similar argument can be used  to prove that our analysis is tight up to lower order terms.
\begin{lemmarep}\label{lem:analysis-online-u1-tight}
The online greedy algorithm matches at least $\left(1-\frac{e}{e^e}\right)\cdot n - O(\sqrt{n})$ ads in expectation, with the instance of \Cref{def:instance}. %
\end{lemmarep}
\begin{proof}
Let $\alpha=1-e^{1-e}$, $m=\lfloor \alpha n \rfloor$. Let $x_i=\frac{i}{n}$, $q_i=(1-x_i)(1-\ln(1-x_i))$ for $i\in\{0,\dots, m-1\}$. Without loss of generality, we assume $n\geq 100$. Note that $x_i\in[0,1)$. Suppose we keep sampling users (regardless of the value of $n$) until $m$ of them get matched; let $S$ be the total number of  sampled users. We have $q_{m-1}\geq (1-\alpha)(1-\ln(1-\alpha))>\frac{1}{3}$,
 and,
\begin{align*}
\frac{\E[S]}{n} &= \frac{1}{n} \sum_{i=0}^{m-1} \frac{1}{(1-x_i)(1-\ln(1-x_i))}  \leq \int_0^{m/n} \frac{1}{(1-x)(1-\ln(1-x))} \, dx \\
& \leq \int_0^{\alpha} \frac{1}{(1-x)(1-\ln(1-x))} \, dx = 1.
\end{align*}
Thus, $\E[S]\leq n$. Similarly, by lower bounding the sum with an integral, we also have $\E[S]\geq \alpha n$. For $\delta\in[e^{-\frac{n\alpha^3}{12}},e^{-\alpha/12}]$, let $\epsilon=\epsilon(\delta)=\sqrt{\frac{12}{\alpha n}\cdot \ln(\frac{1}{\delta})} \in(0,1)$. By \Cref{fact:geom-concentration}, and since $\epsilon-\ln(1+\epsilon) \geq \frac{\epsilon^2}{4}$ for $\epsilon\in(0,1)$, we have,
\begin{align*}
\Pr[S\geq (1+\epsilon)\E[S]] \leq e^{-q_{m-1}\cdot \E[S] \cdot \frac{\epsilon^2}{4}} \leq \delta.
\end{align*}
Let $A_k$ be the number of ads matched after sampling $k$ users. We have,
\begin{align*}
\Pr[A_n \geq m - \sqrt{\nicefrac{12}{\alpha}\cdot n \cdot \ln(\nicefrac{1}{\delta})}] &= \Pr[A_n \geq m - \epsilon n] \geq \Pr[A_{\lfloor (1+\epsilon)n \rfloor} \geq m]\\
&= \Pr[S\leq (1+\epsilon)n] \geq \Pr[S\leq (1+\epsilon)\E[S]]  \\
&\geq 1 - \delta.
\end{align*}
We can finally compute the expectation of $\E[A_n]$. 
\begin{align*}
\E[A_n] & = \sum_{k=1}^n \Pr[A_n \geq k] \geq \sum_{k=1}^{\lfloor m-\sqrt{n} \rfloor} \Pr[A_n \geq k] \geq \sum_{k=1}^{\left\lfloor \frac{\lfloor m - \sqrt{n} \rfloor}{\sqrt{n}+1} \right\rfloor} (\sqrt{n}-1)  \cdot \Pr[A_n \geq m - k\sqrt{n}]\\
& = \sum_{k=1}^{\left\lfloor \frac{\lfloor m - \sqrt{n} \rfloor}{\sqrt{n}+1} \right\rfloor} (\sqrt{n}-1)  \cdot \Pr\left[A_n \geq m - \sqrt{\frac{12}{\alpha} \cdot n\cdot \ln\left(\frac{1}{e^{-k^2\alpha/12}}\right)}\right]\\
& \geq \sum_{k=1}^{\left\lfloor \frac{\lfloor m - \sqrt{n} \rfloor}{\sqrt{n}+1} \right\rfloor} (\sqrt{n}-1) (1 - e^{-k^2\alpha/12}) \\
&= (\sqrt{n}-1)\left\lfloor \frac{\lfloor m - \sqrt{n} \rfloor}{\sqrt{n}+1} \right\rfloor - (\sqrt{n}-1)\sum_{k=1}^{\left\lfloor \frac{\lfloor m - \sqrt{n} \rfloor}{\sqrt{n}+1} \right\rfloor} e^{-k^2\alpha/12}\\
& = m - \Theta(\sqrt{n})
\end{align*}
where we used that $\sum_{k=1}^\infty e^{-k^2\alpha/12}$ converges to a positive constant.
\end{proof}

\subsection{Maximum Matching}\label{sec:LB:maximum}
We now prove that the instance of \Cref{def:instance} admits a matching of $(1-o(1)) \cdot n$ edges. 

We will analyze our instance using the Karp-Sipser algorithm; %
first, we will reduce our instance to a configuration model; then, we will study this configuration model with a modification of the analysis in \cite{bg15} of the first phase of the Karp-Sipser algorithm. In order to implement this plan, we need to modify our instance so that its average degree is bounded.
\begin{definition}\label{def:modified-instance}
Given any integer $\Delta \geq 2$, we consider the irregular cuckoo hashing distribution $\irregularcuckoohashing(n, n, D_{\Delta})$, where $D_\Delta$ is defined as
\begin{align*}
\Pr[D_\Delta = 0]  = \frac{1}{\Delta}, &\quad \text{ and } \quad
\Pr[D_\Delta = i]  = \frac{1}{i\cdot(i-1)} \quad \forall i \in \{2,\dots, \Delta\}.
\end{align*}
\end{definition}
Observe that, to sample an instance of \Cref{def:modified-instance}, one can first sample an instance of \Cref{def:instance} and then remove all edges incident on users of degree larger than $\Delta$. We will prove the following result in Section~\ref{sec:red_conf_mod}.
\begin{lemma}\label{lem:max-matching-cut-distr}
With probability $1-o(1)$, the instance of \Cref{def:modified-instance} admits a matching of size $\left(1 - \frac{1}{\Delta}\right)n - o(n)$.
\end{lemma}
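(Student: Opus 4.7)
The plan is to couple the cuckoo hashing instance of Definition~\ref{def:modified-instance} with a bipartite configuration model and then apply Theorem~\ref{thm:bg}. The crucial observation is that, conditional on the user-side degree sequence $\mathbf{d}$ (sampled i.i.d.\ from $D_\Delta$) and on the ad-side degree sequence $\hat{\mathbf{d}}$ (the multiset of ads chosen by all users), the multigraph produced by the cuckoo hashing process has exactly the same law as $\CM(\mathbf{d}, \hat{\mathbf{d}})$: both are uniform over bipartite multigraphs with those degree sequences. Consequently, any matching lower bound proved in the configuration model transfers to the cuckoo hashing instance, since parallel edges can be collapsed without shrinking a matching.

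Next, I would verify the hypotheses of Theorem~\ref{thm:bg} with the generating functions $f(x) = \frac{1}{\Delta} + \sum_{i=2}^{\Delta} \frac{x^i}{i(i-1)}$ (so $z_i = \Pr[D_\Delta = i]$), $\hat{f}(x) = e^{\mu(x-1)}$ (so $\hat{z}_i = e^{-\mu}\mu^i/i!$), and $\mu = \E[D_\Delta] = H_{\Delta-1}$. The user-side concentration $n_i = z_i n \pm O(\sqrt{n \log n})$ is immediate from Fact~\ref{fact:chernoff-hoeffding} (the $d_i$'s are i.i.d.\ and bounded by $\Delta$), as is the concentration $|E| = \mu n \pm O(\sqrt{n \log n})$. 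For the ad side, Fact~\ref{fact:lecam} (Le Cam) gives that each single ad's degree is within total variation $O(1/n)$ of $\Poisson(\mu)$, so $\E[\hat{n}_i] = \hat{z}_i n \pm O(1)$; concentration of $\hat{n}_i$ around its mean then follows from Fact~\ref{fact:mcdiarmid} (McDiarmid), since resampling one user's choices changes $\hat{n}_i$ by at most $2\Delta$. Taking any $\rho < \tfrac{1}{2}$ fits conditions (i)--(iii). Condition (iv) is trivial on the user side (bounded degree) and follows from super-exponential decay of the Poisson tails on the ad side.

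It remains to locate the smallest non-negative solution of $w_2 = 1 - f'(1-\hat{w}_1)/\mu$ and $\hat{w}_1 = \hat{f}'(w_2)/\mu = e^{\mu(w_2-1)}$. Substituting the second into the first gives the scalar equation $w_2 = g(w_2)$ with $g(w_2) := 1 - f'\!\bigl(1 - e^{\mu(w_2-1)}\bigr)/\mu$. Since $f'(y) = \sum_{j=1}^{\Delta-1} y^j/j$, a short computation using the geometric identity $\sum_{j=0}^{\Delta-2}(1-u)^j = (1-(1-u)^{\Delta-1})/u$ yields $g'(w_2) = 1 - \bigl(1 - e^{\mu(w_2-1)}\bigr)^{\Delta-1}$, which is strictly less than $1$ on $[0,1)$ and equals $1$ at $w_2 = 1$. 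Integrating, $g(w_2) > w_2$ on $[0,1)$, so the unique --- and therefore smallest --- fixed point in $[0,1]$ is $w_2 = 1$ (and $\hat{w}_1 = 1$). Plugging this into the matching lower bound of Theorem~\ref{thm:bg} gives
\[
\bigl(f(1) - f(0) - f'(0) \cdot 1 - o(1)\bigr) n = \Bigl(1 - \tfrac{1}{\Delta} - o(1)\Bigr) n,
\]
which, via the coupling above, establishes the lemma.

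The main technical obstacle is sharpening the ad-side concentration of $\hat{n}_i$ --- whose coordinates are \emph{not} independent --- to fit condition (i) of Theorem~\ref{thm:bg}; the McDiarmid-with-bounded-differences argument sketched above is the natural tool, but care is needed to separate the Poisson approximation error (which contributes only at the level of $\E[\hat{n}_i]$) from the true deviation. The uniqueness of the fixed point, while short, relies on the precise algebraic form of $f'$ and in particular on the truncation of the series at degree $\Delta$; I would double-check it carefully to rule out spurious solutions near the boundary.
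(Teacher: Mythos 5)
Your proposal is correct and follows essentially the same route as the paper: reduce to the configuration model by conditioning on the degree sequences (the paper's Lemma~\ref{lem:reduction-conf-model}), establish concentration of $|E|$, $n_i$, $\hat n_i$ via Chernoff--Hoeffding, McDiarmid and Le Cam (the paper's Observation~\ref{obs:graph-concentrated}), and apply Theorem~\ref{thm:bg} with the same $f$, $\hat f$, $\mu=H_{\Delta-1}$, concluding $w_2=\hat w_1=1$ and a matching of size $(1-\nicefrac1\Delta-o(1))n$. Two minor remarks: your justification of the coupling (``both are uniform over bipartite multigraphs'') is imprecise --- neither law is uniform over multigraphs, though the conditional laws do coincide, which is exactly what the paper proves --- and your monotonicity argument for the uniqueness of the fixed point ($g'<1$ on $[0,1)$, hence $g>\mathrm{id}$ there) is a valid, slightly different packaging of the paper's algebraic reduction to $\sum_{i\ge\Delta}\frac{(1-e^{(w_2-1)H_{\Delta-1}})^i}{i}=0$.
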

The existence of a quasi-complete matching in our original instance can be easily derived from Lemma~\ref{lem:max-matching-cut-distr}. %
\begin{theoremrep}\label{thm:mm}
With probability $1-o(1)$, the instance of \Cref{def:instance} admits a matching of size $(1-o(1))n$.
\end{theoremrep}
\begin{proof}
Consider the instance $G(n)$ of~\Cref{def:instance} with $n$ users. Let $G_{\Delta}(n)$ be a copy of $G(n)$ where all the edges incident on users whose degree is larger than $\Delta$ in $G(n)$ are removed. Then $G_{\Delta}(n)$ is distributed as the instance of~\Cref{def:modified-instance}. Let $M^{\star}(n)$ (resp. $M_{\Delta}^{\star}(n)$) be the size of the maximum matching in $G(n)$ (resp., $G_{\Delta}(n)$). Note that $M^{\star}(n) \geq M_{\Delta}^{\star}(n)$ for any $\Delta$. %
For any $\epsilon \in (0,1)$, by selecting $\Delta = \left\lceil 2/\epsilon \right\rceil$, we have that, by \Cref{lem:max-matching-cut-distr} and by using $M^{\star}(n) \geq M_{\Delta}^{\star}(n)$, there exists $n_\epsilon$ such that for all $n \geq n_\epsilon$, $\Pr\left[M^{\star}(n) \geq (1-\epsilon)n\right] \geq 1 - \epsilon$. Therefore, by definition, $M^{\star}(n) \geq (1-o(1))n$ with probability $1-o(1)$.
\end{proof}
We also note that our main Theorem follows from Theorems~\ref{thm:om} and~\ref{thm:mm}:
\begin{theorem}\label{thm:mainlb}
The optimal competitive ratio %
for the online matching problem with irregular cuckoo hashing  distributions %
is no better than $1 - \frac e{e^e} + o(1) \approx 0.8206259\ldots$.
\end{theorem}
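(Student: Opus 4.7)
\medskip
\noindent\textbf{Proof plan for Theorem~\ref{thm:mainlb}.}
The plan is to exhibit the instance of Definition~\ref{def:instance} as the witness and then combine the two structural results already proved in Section~\ref{sec:impossibility}: Theorem~\ref{thm:om}, which bounds from above the expected size of any online matching, and Theorem~\ref{thm:mm}, which bounds from below the size of the maximum matching. No new probabilistic argument is required: everything reduces to dividing the numerator bound by the denominator bound while being careful about the $o(1)$ terms arising from the high-probability statement in Theorem~\ref{thm:mm}.

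First, I would fix the instance $\irregularcuckoohashing(n,n,D)$ and let $\mathrm{ALG}_n$ denote the (random) size of the matching produced by any online algorithm, and $\mathrm{OPT}_n$ the (random) size of a maximum matching in the sampled graph. Theorem~\ref{thm:om} gives $\E[\mathrm{ALG}_n] \le (1 - e/e^e)\,n + O(\sqrt{n})$ for every online algorithm. For the denominator, Theorem~\ref{thm:mm} guarantees an event $\mathcal{E}_n$ with $\Pr[\mathcal{E}_n] = 1-o(1)$ on which $\mathrm{OPT}_n \ge (1-o(1))\,n$; taking expectations and discarding the non-negative contribution of $\mathcal{E}_n^c$ yields
\[
\E[\mathrm{OPT}_n] \;\ge\; \Pr[\mathcal{E}_n]\cdot(1-o(1))\,n \;=\; (1-o(1))\,n.
\]

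Combining the two bounds, the competitive ratio achieved on this instance is
\[
\frac{\E[\mathrm{ALG}_n]}{\E[\mathrm{OPT}_n]} \;\le\; \frac{(1 - e/e^e)\,n + O(\sqrt{n})}{(1-o(1))\,n} \;=\; \left(1 - \frac{e}{e^e}\right) + o(1).
\]
Since this holds for every online algorithm, and since the irregular cuckoo hashing model lies at the top of the hierarchy summarized in Table~\ref{tab:bounds}, the same upper bound transfers to the other three problem variants listed in the excerpt.

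There is essentially no obstacle: the only sanity check is that the $o(1)$ loss in the denominator, together with the $O(\sqrt{n}) = o(n)$ slack in the numerator, combine into a single $o(1)$ additive term in the final ratio, which indeed they do because $(1-e/e^e)<1$ and thus multiplying numerator and denominator by $1/(1-o(1)) = 1+o(1)$ only inflates the leading constant by $o(1)$. This completes the proof of Theorem~\ref{thm:mainlb}.
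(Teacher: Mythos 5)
Your proposal is correct and follows exactly the paper's (essentially one-line) argument: Theorem~\ref{thm:mainlb} is obtained by combining the online upper bound of Theorem~\ref{thm:om} with the quasi-complete matching guarantee of Theorem~\ref{thm:mm} on the instance of Definition~\ref{def:instance}, and your careful bookkeeping of the $o(1)$ terms in the denominator is precisely the step the paper leaves implicit.
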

Clearly, Theorem~\ref{thm:mainlb} directly applies to the cases of ``IID users with known distribution'', ``IID users with unknown distribution'' and ``adversarial graphs whose users are permuted uniformly at random''.

\smallskip

Moreover, given that Theorem~\ref{thm:mm} guarantees that the maximum matching has size $(1-o(1)) \cdot n$ with probability at least $1-o(1)$, our upper bound of $1-\frac{e}{e^e} + o(1)$ holds for both the ratios $\frac{\E[\text{ALG}]}{\E[\text{OPT}]}$ and $\E\left[\frac{\text{ALG}}{\text{OPT}}\right]$,\footnote{%
If $\text{OPT} = 0$, the generic online algorithm  necessarily returns the maximum (empty) matching;  it is typical to define $\frac{\text{ALG}}{\text{OPT}} = 1$ in this borderline case. Moreover, if $\E[\text{OPT}] = 0$ then necessarily $\Pr[\text{OPT} = 0] = 1$,  so that it is natural to define $\frac{\E[\text{ALG}]}{\E[\text{OPT}]} = 1$ in this other case, as well.}  that is, for both the ratio-of-the-expectations and the expectation-of-the-ratio definitions of competitive ratio.

\subsubsection{Reduction to the Configuration Model}\label{sec:red_conf_mod}
In this Subsection, we aim to prove~\Cref{lem:max-matching-cut-distr}.
We start by showing that, if we condition on the degrees of the vertices, then our instance is distributed like a configuration model.
The following Lemma is the analogue of many that have been proved for various random graph models; a famous example is that of random regular graphs~\cite{jlr11}.
\begin{lemmarep}\label{lem:reduction-conf-model}%
For any non-negative integers $\mathbf{d}=(d_1, d_2, \dots, d_n),$ $\mathbf{\hat{d}}=(\hat{d}_1, \hat{d}_2, \dots, \hat{d}_n)$ such that $\sum_{i=1}^n d_i = \sum_{i=1}^n\hat{d}_i$, and for any bipartite graph $G$, it holds,
\[
\Pr_{X \sim \irregularcuckoohashing(n,n,P)}\left[X = G \ \middle| \ \deg(v_i)=d_i\wedge \deg(\hat{v}_i)=\hat{d}_i\text{ for each $i\in[n]$}\right] = \Pr_{X \sim \CM(\mathbf{d}, \mathbf{\hat{d}})}[X = G],
\]
where $P$ is such that $\Pr[P=d_i]>0$ for each $i\in[n]$, so that the event $\{\deg(v_i)=d_i\wedge\deg(\hat{v}_i)=\hat{d}_i\text{ for each $i\in[n]$}\}$ happens with positive probability. 
\end{lemmarep}
\begin{appendixproof}
Let us denote the bipartite graph $G$ with an adjacency matrix $(g_{ij})_{i,j\in[n]}$, where $g_{ij}$ is the number of edges between $v_i\in V$ and $\hat{v}_j \in \hat{V}$. For simplicity, let us use $L$ in place of $\irregularcuckoohashing(n,n,P)$. Both probabilities are zero unless $\sum_{j\in[n]}g_{ij}=d_i$ for each $i$, and $\sum_{i\in[n]} g_{ij} = \hat{d}_j$ for each $j$. Otherwise, if both equations hold, we have: 
\begin{align*}
\Pr_{X \sim L}\left[X = G \ \middle| \ \deg(v_i)=d_i \wedge \deg(\hat{v}_i)=\hat{d}_i\ \forall i\in[n]\right] & = \frac{\Pr_{X \sim L}\left[X = G \right]}{\Pr_{X \sim L}\left[\deg(v_i)=d_i \wedge \deg(\hat{v}_i)=\hat{d}_i \ \forall i\in[n]\right]}
\end{align*}
Let $p_i = \Pr_{X\sim L}[\deg(v_i)=d_i]=\Pr[P=d_i]$. Conditioning on the degrees of $V$, the degrees of $\hat{V}$ follow a multinomial distribution, therefore:
\begin{align*}
\Pr_{X \sim L}\left[\deg(v_i)=d_i \wedge \deg(\hat{v}_i)=\hat{d}_i \ \forall i\right] & = \Pr_{X \sim L}\left[\deg(v_i)=d_i\ \forall i\right] \cdot \Pr_{X \sim L}\left[\deg(\hat{v}_i)=\hat{d}_i \ \forall i \ \middle|\ \deg(v_i)=d_i\ \forall i\right] \\
& = \left(\prod_{i\in[n]} p_{i}\right)  \cdot \frac{\left(\sum_{i\in[n]} d_i \right)!}{\prod_{i\in[n]}n^{\hat{d}_i} \cdot \hat{d}_i!}.
\end{align*}
Since the neighborhood of each $v_i$ is sampled independently, we have:
\begin{align*}
\Pr_{X\sim L}[X = G] & = \Pr_{X \sim L}[x_{ij} = g_{ij}\ \forall i,j] = \prod_{i\in[n]} \Pr_{X \sim L}[x_{ij} = g_{ij}\ \forall j] \\
 & = \prod_{i\in[n]} p_{i} \cdot \Pr_{X \sim L}\left[x_{ij} = g_{ij}\ \forall j \mid \deg(v_i)=d_i\right] \\ 
 & = \prod_{i\in[n]} \left( p_{i} \cdot \frac{d_i!}{n^{d_i} \prod_{j\in[n]}g_{ij}!} \right)
\end{align*}
where the last equality follows by using $\sum_{j\in[n]}g_{ij}=d_i$ and by the multinomial distribution. Using $\sum_{i\in[n]} d_i = \sum_{i\in[n]} \hat{d}_i$ and simplifying, we obtain:
\[
\Pr_{X \sim L}\left[X = G \ \middle| \ \deg(v_i)=d_i \wedge \deg(\hat{v}_i)=\hat{d}_i\ \forall i\in[n]\right] = \frac{\prod_{i\in[n]} d_i! \hat{d}_i!}{\left(\sum_{i\in[n]} d_i\right)! \cdot \prod_{i\in[n]}\prod_{j\in[n]}g_{ij}!}.
\]

Consider now the configuration model. There are $\left(\sum_{i\in[n]}d_i\right)!$ possible matchings between the configuration points, each having the same probability. Let us count the number of matchings that result into graph $G$. We have $\binom{\hat{d}_1}{g_{11}}$ ways to select the configuration points from $\hat{v}_1$ to be attached to $v_1$, and similarly for $\hat{v}_2, \dots, \hat{v}_n$. Once chosen the configuration points to be matched with $v_1$, we have $d_1!$ ways to permute them. Then, there are $\binom{\hat{d}_1 - g_{11}}{g_{21}}$ ways to select the configuration points from $\hat{v}_1$ to be attached to $v_2$, and so on. Repeating this process we can see that the number of good matchings is:
\begin{align*}
\prod_{i\in[n]} \left( d_i! \prod_{j\in[n]} \binom{\hat{d}_j - \sum_{k=1}^{i-1}g_{kj}}{g_{ij}} \right) & = \frac{\prod_{i\in[n]}d_i!}{\prod_{i\in[n]}\prod_{j\in[n]} g_{ij}!} \cdot \prod_{i\in[n]}\prod_{\substack{j\in[n] :\\g_{ij}>0}} \left(\hat{d}_j - \sum_{k=1}^{i-1} g_{kj}\right) \dots \left(\hat{d}_j - \sum_{k=1}^{i} g_{kj} + 1\right) \\
& = \frac{\prod_{i\in[n]}d_i!}{\prod_{i\in[n]}\prod_{j\in[n]} g_{ij}!} \cdot \prod_{j\in[n]}\prod_{i\in[\hat{d}_j]} (\hat{d}_j - i + 1)\\
& = \frac{\prod_{i\in[n]} d_i! \hat{d}_i!}{\prod_{i\in[n]}\prod_{j\in[n]} g_{ij}!}
\end{align*}
where all the binomial coefficients are well defined since $\sum_{i\in[n]} g_{ij} = \hat{d}_j$. This concludes the proof.
\end{appendixproof}
We now prove that, in our instance, the number of vertices with a certain degree is concentrated around the mean. This will allow us to restrict our attention to a specific configuration model. Define, for $i\geq 0$,
\begin{align}
&z_i = \begin{cases} \frac{1}{\Delta} & \text{if $i=0$} \\ 0 & \text{if $i=1$ or $i>\Delta$} \\ \frac{1}{i(i-1)} & \text{if $2 \leq i \leq \Delta$} \end{cases}  \quad & \hat{z}_i & =\frac{H_{\Delta-1}^i e^{-H_{\Delta-1}}}{i!}. \label{def:zi-main-instance}
\end{align}
The following Observation can be proved by applying the concentration bounds in \Cref{sec:preliminaries}.
\begin{observationrep}\label{obs:graph-concentrated}
Consider an irregular cuckoo hashing instance $\irregularcuckoohashing(n,n,P)$ such that $P$ is upper bounded by a constant $\Delta>0$. Let $p_i=\Pr[P=i]$ for $0\leq i\leq \Delta$, $\mu = \E[\deg(v)] = \sum_{i=0}^\Delta i p_i$, and let $\hat{p_i}=\frac{e^{-\mu} \cdot \mu^i}{i!}$, $i\geq 0$, be a Poisson distribution with mean $\mu$. We have that, with probability $1-o(1)$, 
\begin{enumerate}
    \item $|E|=(\mu \pm o(n^{-\nicefrac15})) n$,
    \item $n_i=(p_i \pm o(n^{-\nicefrac15}))n$ for $0 \leq i \leq \Delta$,
    \item $\hat{n}_i = (\hat{p}_i \pm o(n^{-\nicefrac15}))n$ for $i\geq 0$.
\end{enumerate}
\end{observationrep}
\begin{appendixproof}
Note $|E|=\sum_{i\in[n]} \deg(v_i)$, then, $E\left[|E|\right] = \sum_{i=1}^n \sum_{d=0}^{\Delta} d\cdot p_d = n\cdot \mu$. Since $\deg(v_i)\in[0,\Delta]$, applying \Cref{fact:chernoff-hoeffding} we get:
\[
\Pr\left[\middle| |E| - \E[|E|] \middle| \geq n^{3/4}\right] \leq 2 \exp\left(-\frac{2 \sqrt{n}}{\Delta^2}\right) = o(1).
\]
Similarly, letting $X_i^{(d)}=\begin{cases} 1 & \text{if $\deg(v_i)=d$} \\ 0 & \text{otherwise} \end{cases}$, we have $n_{d} = \sum_{i=1}^n X_i^{(d)}$, and $\E[n_{d}] = n\cdot p_d$, then:
\[
\Pr\left[\exists\ 0\leq d \leq \Delta \text{ s.t.: } \middle| n_d - \E[n_d] \middle| \geq n^{3/4} \right] \leq \sum_{d=0}^{\Delta} \Pr\left[\middle| n_d - \E[n_d] \middle| \geq n^{3/4} \right] \leq 2(\Delta+1)\cdot e^{-2\sqrt{n}} = o(1).
\]
To prove the third statement, condition on $|E|=(\mu \pm o(1))n$ that, as we proved, happens with probability $1-o(1)$. For $i\in[|E|]$, let $Y_i \in [n]$ denote the vertex in $\hat{V}$ selected with the $i$-th edge. For $0\leq d \leq |E|$, $\hat{n}_d$ can be expressed as a function $f_d$ of the $Y_i$'s: 
\[
f_d(Y_1, Y_2, \dots, Y_{|E|}) = \sum_{i=1}^n \left[\left(\sum_{j=1}^{|E|}[Y_j = i]\right) = d\right],
\] 
where $[P]$ is the Iverson bracket for a predicate $P$. Note that the function $f_d$ changes by at most 2 if applied on two inputs differing only in the $i$-th coordinate, that is, for all $1\leq i \leq |E|$ and $y_1, \dots, y_{|E|}, y'_i$:
\[
| f_d(y_1, \dots, y_{i-1}, y_i, y_{i+1}, \dots, y_{|E|}) - f_d(y_1, \dots, y_{i-1}, y'_i, y_{i+1}, \dots, y_{|E|})| \leq 2.
\]
Then, given that the $Y_i$'s are independent, we can apply \Cref{fact:mcdiarmid}:
\begin{align*}
\Pr\left[ \exists\ 0\leq d \leq |E| \text{ s.t.: } \middle| \hat{n}_d - \E[\hat{n}_d] \middle| \geq n^{3/4} \right] & \leq \sum_{d=0}^{|E|} \Pr\left[ \middle| \hat{n}_d - \E[\hat{n}_d] \middle| \geq n^{3/4} \right] \\
& \leq 2 (|E|+1) \exp\left(-\frac{\sqrt{n}}{2\cdot \mu \pm o(1)}\right)\\
& = o(1),
\end{align*}
where in the last two steps we use that $|E|=(\mu \pm o(1))n$. 

Now, observe that for a vertex $\hat{v}_i \in \hat{V}$, $\deg(\hat{v}_i) \sim \Bin(|E|, 1/n)$, therefore $\E[\hat{n}_d] = \sum_{i=1}^n \Pr[\deg(\hat{v}_i) = d]$. By \Cref{fact:lecam}, we have, 
\[
\left|\E[\hat{n}_d] - n \hat{p}_d\right| \leq n\cdot \dTV(\Bin(|E|,1/n), \Poisson(\mu)) \leq \mu + 2n^{3/4}, 
\]
where we used that $|E|=(\mu \pm o(1))n$, and in particular, $n\mu - n^{3/4} \leq |E|\leq n\mu + n^{3/4}$. By triangle inequality, we get,
\[
\Pr\left[ \exists\ 0\leq d \leq |E| \text{ s.t.: } \middle| \hat{n}_d - n\hat{p}_d \middle| \geq 3n^{3/4} + \mu \right] \leq o(1). \qedhere
\]
\end{appendixproof}
Note that \Cref{obs:graph-concentrated} applies to our graph with $p_i=z_i$ and $\hat{p_i}=\hat{z_i}$. Therefore, we can now leverage on Theorem~\ref{thm:bg}, that is, a variant of the result in~\cite{bg15}, which uses the Karp-Sipser algorithm to bound the maximum matching in  configuration models whose degrees follow the conditions of \Cref{obs:graph-concentrated}. Note that Theorem~\ref{thm:bg} requires a bounded average degree; we made sure that our random graph has bounded average degree by limiting its maximum user degree. Formally, 
\begin{lemma}\label{lem:karp-sipser-conf-model}
For any constant $\Delta\geq 2$, let $\mathbf{d}\in\mathbb{Z}_{\geq0}^n$ and $\mathbf{\hat{d}}\in\mathbb{Z}_{\geq0}^n$ be any two valid degree sequences such that (i) $|E|=(H_{\Delta-1}\pm o(n^{\nicefrac{-1}5}))n$, (ii) $n_i=(z_i \pm o(n^{\nicefrac{-1}5}))n$, and (iii) $\hat{n}_i = (\hat{z}_i \pm o(n^{\nicefrac{-1}5}))n$ for $i\geq 0$, where $z_i$ and $\hat{z}_i$ are defined in \Cref{def:zi-main-instance}. Then, with probability $1-o(1)$, $\CM(\mathbf{d}, \mathbf{\hat{d}})$ has a matching of size at least $(1-1/\Delta -o(1))n$.
\end{lemma}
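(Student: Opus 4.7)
The plan is to apply Theorem~\ref{thm:bg} to the configuration model $\CM(\mathbf{d}, \mathbf{\hat{d}})$ with the generating functions
\[
f(x) = \frac{1}{\Delta} + \sum_{i=2}^{\Delta} \frac{x^i}{i(i-1)}, \qquad \hat{f}(x) = \sum_{i \ge 0} \frac{\alpha^i e^{-\alpha}}{i!}\, x^i = e^{\alpha(x-1)},
\]
where $\alpha := H_{\Delta-1}$. Verifying the hypotheses of Theorem~\ref{thm:bg} is routine: the telescoping identity $\frac{1}{i(i-1)} = \frac{1}{i-1} - \frac{1}{i}$ yields $f(1) = 1$ and $f'(x) = \sum_{j=1}^{\Delta-1} x^j/j$, so $f'(1) = \alpha = \hat{f}'(1) =: \mu$; $f'(2)$ is a finite sum and $\hat{f}'(2) = \alpha e^{\alpha}$, both finite; the concentration conditions (i)--(iii) are exactly the hypotheses on $\mathbf{d}, \mathbf{\hat{d}}$; and the tail condition (iv) is immediate since $z_i = 0$ for $i > \Delta$ and $\hat{z}_i = \alpha^i e^{-\alpha}/i!$ decays faster than any polynomial.

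The crux is identifying the smallest fixed point $(w_2, \hat{w}_1) \in [0,1]^2$ of the system
\[
w_2 = 1 - \frac{f'(1-\hat{w}_1)}{\mu}, \qquad \hat{w}_1 = \frac{\hat{f}'(w_2)}{\mu} = e^{\alpha(w_2-1)}.
\]
Since the second equation uniquely determines $\hat{w}_1$ from $w_2$, fixed points correspond to solutions of $w_2 = g(w_2)$, where $g(w_2) := 1 - f'(1 - e^{\alpha(w_2-1)})/\alpha$. To pin these down, I would set $y := 1 - e^{\alpha(w_2-1)}$ so that $-\ln(1-y) = \alpha(1-w_2)$, and use the identity $f'(y) = -\ln(1-y) - \sum_{j \ge \Delta} y^j/j$ (obtained from the Taylor series of $-\ln(1-y)$). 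A short manipulation then gives the clean formula
\[
g(w_2) - w_2 \;=\; \frac{1}{\alpha}\sum_{j \ge \Delta} \frac{y^j}{j}.
\]
For $w_2 \in [0,1)$ we have $y \in (0, 1-e^{-\alpha}]$, so the right-hand side is strictly positive; hence the unique (and therefore smallest) solution in $[0,1]^2$ is $(w_2,\hat{w}_1) = (1,1)$. This uniqueness is the only delicate step of the argument.

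Finally, plugging $(\hat{w}_1, w_2) = (1,1)$ into the lower bound of Theorem~\ref{thm:bg} produces a matching of size at least
\[
\bigl(f(1) - f(0) - f'(0)\cdot 1 - o(1)\bigr)\, n \;=\; \bigl(1 - \tfrac{1}{\Delta} - o(1)\bigr)\, n
\]
with probability $1-o(1)$, as claimed, using $f(0) = z_0 = 1/\Delta$ and $f'(0) = 0$ (because $z_1 = 0$). As a sanity check, the matching upper bound of Theorem~\ref{thm:bg} under the same substitution becomes $\bigl(\hat{f}(1) - \hat{f}(1) + 1 - \tfrac{1}{\Delta} + o(1)\bigr)n = (1 - \tfrac{1}{\Delta} + o(1))n$, so the two bounds coincide up to lower-order terms. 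Everything outside the fixed-point uniqueness reduces to bookkeeping, made especially clean by the telescoping structure of $f$ and the fact that $\hat{f}$ is an exponential.
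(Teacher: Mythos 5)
Your proposal is correct and follows essentially the same route as the paper: the same generating functions, the same substitution $y = 1-\hat w_1 = 1 - e^{\alpha(w_2-1)}$ reducing the fixed-point system to $\frac1\alpha\sum_{j\ge\Delta} y^j/j = 0$ (forcing $w_2=\hat w_1=1$), and the same final evaluation $(f(1)-f(0)-f'(0))n = (1-\frac1\Delta)n$. The only place the paper is more explicit is the verification of the tail condition (iv), where it carries out the Poisson tail estimate for $\sum_{i\ge M} i^2\hat z_i$ with $M=\Theta(\log n)$ rather than asserting it.
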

\begin{proof}
We apply \Cref{thm:bg} with $z_i$ and $\hat{z}_i$ as defined in \Cref{def:zi-main-instance}. Let $M=\ln(n)$, and without loss of generality, take $n>e^{\Delta+2}=O(1)$, so that $M>\Delta+2\ge H_{\Delta-1}+3$. We have, $\sum_{i=M}^\infty i^2 \hat{z}_i \leq \sum_{i=M}^{\infty} \frac{e^{-H_{\Delta-1}}(H_{\Delta-1})^{i}}{(i-3)!} = (H_{\Delta-1})^3\sum_{i=M-3}^{\infty} \frac{e^{-H_{\Delta-1}}(H_{\Delta-1})^{i}}{i!} = (H_{\Delta-1})^3 \Pr[\Poisson(H_{\Delta-1})\geq M - 3] \leq \frac{(H_{\Delta-1})^{M} \cdot e^{M-3-H_{\Delta-1}}}{(M-3)^{M-3}}  = \frac{2^{\Theta(\log n)}}{\Theta(\log n)^{\Theta(\log n)}} \leq o(n^{-\nicefrac15})$, where we used the standard tail bound $\Pr[\Poisson(\lambda)\geq x] \leq \frac{(e\lambda)^x e^{-\lambda}}{x^x}$ for $x>\lambda$ (see, e.g., \cite[Theorem 5.4]{mu05}).
We have,
\begin{align*}
    f(x) & = \frac{1}{\Delta} + \sum_{i=2}^\Delta \left(\frac{1}{i(i-1)}\cdot x^i\right)\\
    f'(x) &=\sum_{i=1}^{\Delta-1}\frac{x^i}{i} = \begin{cases}
    H_{\Delta-1} & \text{if } x=1\\
    -\ln(1-x) - \sum_{i=\Delta}^\infty \frac{x^i}{i} & \text{if } |x|<1
    \end{cases}\\
    \hat{f}(x)&= \sum_{i=0}^\infty \frac{H^i_{\Delta-1 }e^{-H_{\Delta-1}}}{i!} \cdot x^i = e^{(x-1)H_{\Delta-1}}\cdot \sum_{i=0}^\infty \frac{(x \cdot H_{\Delta-1})^i e^{-x\cdot H_{\Delta-1}}}{i!} = e^{(x-1)\cdot H_{\Delta-1}} \\
    \hat{f}'(x)&= H_{\Delta-1} \cdot e^{(x-1)\cdot H_{\Delta-1}},
\end{align*}
where we used $\sum_{i=0}^\infty \frac{\lambda^i e^{-\lambda}}{i!} = 1$ for all $\lambda$ and $\sum_{i=1}^\infty \frac{x^i}{i} = -\ln(1-x)$ for all $|x|<1$. Note that $f'(1)=\hat{f}'(1)=H_{\Delta-1} < \infty$, $\hat{f}'(2) = H_{\Delta-1} e^{H_{\Delta-1}} = O(1)$, and clearly $f'(2)=\sum_{i=1}^\Delta z_i \cdot i \cdot 2^{i-1} = O(1)$. Moreover, $|V|=|\hat{V}|=n=f(1)n=\hat{f}(1)n$. Therefore, the hypotheses of \Cref{thm:bg} are met, we now compute $\hat{w}_1$ and $w_2$ to apply the Theorem. Recall that $\hat{w}_1 = \frac{\hat{f}'(w_2)}{H_{\Delta-1}} = e^{(w_2-1) \cdot H_{\Delta-1}}$. Since $w_2\in[0,1]$, we have that $\hat{w}_1 > 0$, and therefore $1 - \hat{w}_1 \in [0,1)$. Substituting this expression for $\hat{w}_1$ into the relation $w_2=1-\frac{f'(1-\hat{w}_1)}{H_{\Delta-1}}$, we get,
\begin{align*}
    w_2 &= 1 - \frac{-\ln(e^{(w_2-1)H_{\Delta-1}}) - \sum_{i=\Delta}^\infty \frac{\left(1-e^{(w_2-1)H_{\Delta-1}}\right)^i}{i}}{H_{\Delta-1}} = w_2 + \sum_{i=\Delta}^\infty \frac{\left(1 - e^{(w_2-1)H_{\Delta-1}}\right)^i}{i\cdot H_{\Delta-1}}
\end{align*}
Equivalently,
$
\sum_{i=\Delta}^\infty \frac{\left(1 - e^{(w_2-1)H_{\Delta-1}}\right)^i}{i} = 0
$
and this equation is satisfied only for $w_2 = 1$. Indeed, for $x\in[0,1)$, $\frac{(1-e^{(x-1)H_{\Delta-1}})^\Delta}{\Delta} > 0$. Therefore, $w_2 = \hat{w}_1=1$. Now, \Cref{thm:bg} ensures that the configuration model admits a matching of size at least,
\[
(f(1) - f(0) - f'(0) - o(1))n = (1 - 1/\Delta - o(1))n.  \qedhere
\]
\end{proof}
We now apply \Cref{lem:reduction-conf-model}, \Cref{obs:graph-concentrated} and \Cref{lem:karp-sipser-conf-model}, to prove \Cref{lem:max-matching-cut-distr}, and conclude the proof of our impossibility result.
\begin{proof}[Proof of \Cref{lem:max-matching-cut-distr}]
\newcommand{\good}{\text{good}}
By \Cref{obs:graph-concentrated}, there exists $h(n)=o(n^{\nicefrac45})$ such that, in the instance of \Cref{def:modified-instance}, with probability $1-o(1)$, (i) $|n_i - nz_i| \leq h(n)$, (ii) $|\hat{n}_i - n\hat{z}_i| \leq h(n)$ for all $i\geq 0$, and (iii) $| |E| - n H_{\Delta-1}| \leq h(n)$. Let $D_n$ be the set of all sequences $\mathbf{d}\in\mathbb{Z}_{\geq0}^n$, $\mathbf{\hat{d}}\in\mathbb{Z}_{\geq0}^n$ that are compatible with conditions (i), (ii), and (iii). Note that $D_n\neq\varnothing$ given that with probability $1-o(1)$, the sampled graph respects the conditions. Let $L$ be the distribution defined in \Cref{def:modified-instance}. For a graph $G$, let $M(G)$ be the event that the maximum matching of $G$ has size at least $(1-1/\Delta-o(1))n$ (where the $o(1)$ term is the one given by \Cref{lem:karp-sipser-conf-model}), let $\good(G)$ be the event that $G$ respects conditions (i),(ii), and (iii), and let $D(G, \mathbf{d}, \mathbf{\hat{d}})$ be the event that the vertices of $G$ have degrees $\mathbf{d}$ and $\mathbf{\hat{d}}$. Note that $\{D(G,\mathbf{d}, \mathbf{\hat{d}})\}_{(\mathbf{d}, \mathbf{\hat{d}})\in D_n}$ is a partition of $\good(G)$. We have,
\begin{align*}
\Pr_{G\sim L}\left[M(G)\right] & \geq \Pr_{G\sim L}[\good(G)]\Pr_{G\sim L}[M(G) \mid \good(G)] \\
& \geq (1-o(1)) \sum_{(\mathbf{d}, \mathbf{\hat{d}})\in D_n} \Pr_{G\sim L}[M(G) \mid D(G,\mathbf{d}, \mathbf{\hat{d}})] \Pr_{G\sim L}[D(G,\mathbf{d}, \mathbf{\hat{d}})\mid \good(G)]\\
& = (1-o(1))\sum_{(\mathbf{d}, \mathbf{\hat{d}})\in D_n} \Pr_{G\sim \CM(\mathbf{d}, \mathbf{\hat{d}})}[M(G)] \Pr_{G\sim L}[D(G,\mathbf{d}, \mathbf{\hat{d}})\mid \good(G)] \\
& \geq (1-o(1))^2 \sum_{(\mathbf{d}, \mathbf{\hat{d}})\in D_n} \Pr_{G\sim L}[D(G,\mathbf{d}, \mathbf{\hat{d}})\mid \good(G)] = 1-o(1),
\end{align*}
where in the first equality we use \Cref{lem:reduction-conf-model} and in the last inequality we apply \Cref{lem:karp-sipser-conf-model}.
\end{proof}

\section{Extremality}\label{sec:tightness}

We prove in this section that the construction of \Cref{def:instance} is extremal among the irregular cuckoo hashing instances in the following sense: %
for each $d = 0, 1, 2, \ldots$, our distribution (greedily) maximizes the total number of users of degree at most $d$ while guaranteeing that a $1-o(1)$ fraction of these users can be matched to ads, %
that is, while guaranteeing that the resulting graph admits a quasi-complete matching. The distribution, then, aims to make the user degrees as small as possible (so to make the task of the  online algorithm as hard  as possible), while guaranteeing that  offline algorithms can match a $1-o(1)$ fraction of the users.

\smallskip

We will prove that, for each fixed $\Delta \ge 2$, if we increase the probability that Definition~\ref{def:modified-instance} assigns to user-degree $\Delta$  by any small enough positive constant $\epsilon > 0$, while keeping the probabilities of user-degrees $1,\ldots,\Delta-1$ fixed, then a constant fraction of the users having degree in $\{1,2,\ldots,\Delta\}$ cannot be matched. In particular, we will consider the following class of distributions.
\begin{definition}\label{def:eps-mass-instance}
Choose an integer constant $\Delta\geq 2$ and a small enough constant $\epsilon>0$. Consider the irregular cuckoo hashing instance $\irregularcuckoohashing(n,n,D_{\Delta,\epsilon})$, where $D_{\Delta,\epsilon}$ is defined as,
\begin{align*}
\Pr[D_{\Delta,\epsilon} = 0] &= \Pr[D_\Delta=0] - \epsilon\\
\Pr[D_{\Delta,\epsilon} = i] &= \Pr[D_\Delta=i] \quad\quad\quad \forall i \in \{1,2,\ldots,\Delta-1\}\\ 
\Pr[D_{\Delta,\epsilon} = \Delta] &= \Pr[D_\Delta=\Delta] + \epsilon,
\end{align*}
where $D_{\Delta}$ is given in \Cref{def:modified-instance}.
\end{definition}

We will again analyze the first phase of the Karp-Sipser algorithm by using \Cref{thm:bg}. 

\begin{lemmarep}\label{lem:eps-mass-extremality}
With probability $1-o(1)$, the instance of \Cref{def:eps-mass-instance} for any constant $\Delta\ge 2$ and for a small enough constant $\epsilon>0$ with respect to $\Delta$, does not admit matchings of size larger than
$
\left(1 - \frac{1}{\Delta}+\epsilon - \psi_\Delta(\epsilon) +o(1)\right)\cdot n
$
where $\psi_\Delta(\epsilon)>0$ for $\epsilon$ sufficiently small compared to $\Delta$. Specifically, $\psi_\Delta(\epsilon)=\frac{\Delta^{2\Delta}}{\Delta+1}\cdot \epsilon^{\Delta+1} \pm O\left(c_{\Delta} \cdot \epsilon^{\Delta+2}\right)$, where $c_{\Delta}$ depends only on $\Delta$. In particular,  with probability $1-o(1)$, each matching will leave  at least $(\psi_\Delta(\epsilon)-o(1)) \cdot n$ users of %
positive degree unmatched.%
\end{lemmarep}
\begin{appendixproof}
With a similar argument as the one used in \Cref{lem:max-matching-cut-distr}, we can reduce to studying a configuration model where (i) $|E|=(H_{\Delta-1} + \epsilon\Delta \pm o(n^{\nicefrac{-1}5}))n$, (ii) $n_0=(1/\Delta-\epsilon \pm o(n^{\nicefrac{-1}5}))n$, $n_1=0$, $n_i=(\frac{1}{i(i-1)} \pm o(n^{\nicefrac{-1}5}))n$ for $2\leq i\leq \Delta-1$, $n_{\Delta}=(\frac{1}{\Delta(\Delta-1)}+\epsilon \pm o(n^{\nicefrac{-1}{5}}))n$, and (iii) $\hat{n}_i=\left(\frac{(H_{\Delta-1} + \epsilon\Delta)^i e^{-H_{\Delta-1} - \epsilon\Delta}}{i!} \pm o(n^{\nicefrac{-1}5})\right)n$ for $i\ge 0$. We aim to upper bound the maximum matching in this configuration model via \Cref{thm:bg}. We have:
\begin{align*}
f(x) &=  \left(\frac{1}{\Delta} - \epsilon\right) +\sum_{j = 2}^{\Delta} \left(\frac1{j\cdot(j-1)} \cdot x^j\right) + \epsilon \cdot x^\Delta\\
&=\left(\frac1{\Delta}-\epsilon\right)+\left(1-x\right)\cdot \left(\ln(1-x)+x^{\Delta+1}\cdot\Phi(x,1,\Delta+1)\right) + x \cdot \left(1- \frac{x^{\Delta}}{\Delta}\right) + \epsilon \cdot x^{\Delta},\\
f'(x)&= \sum_{j = 2}^{\Delta} \left(\frac1{j-1}  x^{j-1}\right) + \Delta \epsilon  x^{\Delta-1}= \Delta \epsilon  x^{\Delta-1}+\left\{\begin{array}{ll}
-\ln(1-x)  -  \sum_{j=\Delta}^{\infty} \frac{x^j}j & \text{if } |x| < 1\\
H_{\Delta-1} & \text{if } x = 1\end{array}\right.\\
&=\Delta \epsilon  x^{\Delta-1}+\left\{\begin{array}{ll}
-\ln(1-x)  - x^{\Delta}\cdot \Phi(x,1,\Delta) & \text{if } |x| < 1\\
H_{\Delta-1} & \text{if } x = 1\end{array}\right. \\
\hat{f}(x) & = e^{-(1-x)(H_{\Delta-1} + \epsilon\Delta)}\\
\hat{f}'(x) & = (H_{\Delta-1}+\epsilon\Delta) \cdot e^{-(1-x)(H_{\Delta-1} + \epsilon\Delta)}\\
\mu & = f'(1)=\hat{f}'(1) = H_{\Delta-1} + \epsilon\Delta
\end{align*}%
where we made use of the Taylor expansion $-\ln(1-x)=\sum_{i=1}^\infty \frac{x^i}{i}$ and of the Lerch transcendent $\Phi(z, s, \alpha)=\sum_{i=0}^\infty \frac{z^i}{(i+\alpha)^s}$. We now compute $\hat{w}_1$ and $w_2$, and then apply \Cref{thm:bg}. Let $y$ and $z$ be solutions of $z=1 - \frac{f'(1-y)}{\mu}$, $y=\frac{\hat{f}'(z)}{\mu}$. Note that $(y=1,z=1)$ is the only solution where $y=1$ or $z=1$. Moreover, $(y=0, z)$ and $(y, z=0)$ are not valid solutions. Since we are interested in the smallest possible solutions, let us assume $y,z \in (0,1)$. By substituting $y=e^{-(1-z)\mu}$ into the definition of $z$, we have,
\begin{align*}
\mu \cdot z & = \mu - f'(1-y) = \mu - \Delta\epsilon(1-y)^{\Delta-1} + \ln(y) + \sum_{j=\Delta}^\infty \frac{(1-y)^j}j\\
& = \mu - \Delta\epsilon(1-y)^{\Delta-1} - (1-z)\mu + \sum_{j=\Delta}^\infty \frac{(1-y)^j}j
\end{align*}
Equivalently,
\begin{align*}
\sum_{j=\Delta}^\infty \frac{(1-y)^j}{j} &= \Delta\epsilon(1-y)^{\Delta-1} \\
\sum_{j=\Delta}^\infty \frac{(1-y)^{j-\Delta+1}}{j} &= \Delta\epsilon\\
\sum_{j=0}^\infty \frac{(1-y)^{j+1}}{j+\Delta} & = \Delta\epsilon\\
(1-y)\cdot \Phi(1-y, 1, \Delta) & = \Delta\epsilon.
\end{align*}
Note that the function $g(y)=(1-y)\cdot \Phi(1-y, 1, \Delta)$ is continuous and strictly decreasing for $y\in(0,1)$, moreover, $\lim_{y \rightarrow 0^+} g(y) = \infty$ and $g(1) = 0$. Therefore, there exists a unique value of $y\in(0,1)$ that satisfies the equation and, by definition, such value is $\hat{w}_1$. While an exact computation of $\hat{w}_1$ would be challenging, we are only interested  in small enough values of $\epsilon$ with respect to $\Delta$. We then compute $\hat{w}_1$ within an error term. Let $y_1=1-\Delta^2\epsilon + \frac{\Delta^5}{\Delta+1}\epsilon^2$ and $y_2 = y_1 - \frac{\Delta^7(\Delta^2+2\Delta-1)}{(\Delta+1)^2(\Delta+2)} \epsilon^3$. We have $y_1,y_2\in(0,1)$ for a sufficiently small $\epsilon$. In particular, we will select $\epsilon < \frac{1}{2\cdot 10^3 \cdot (3\Delta)^{7\Delta}}$. Note that $y_1 \geq y_2 \geq \nicefrac12$. Moreover,
\begin{align*}
    \sum_{j=3}^\infty \frac{(1-y_1)^{j+1}}{j+\Delta} & \leq \sum_{j=3}^\infty (1-y_1)^{j+1} = (1-y_1) \left(\sum_{j=0}^\infty (1-y_1)^j - \sum_{j=0}^2 (1-y_1)^j\right)\\
    & = (1-y_1) \left(\frac{1}{y_1} - \frac{1-(1-y_1)^3}{y_1}\right) = \frac{(1-y_1)^4}{y_1}\\
    & \leq 2(1-y_1)^4 = 2\left(\Delta^2\epsilon - \frac{\Delta^5}{\Delta+1} \epsilon^2\right)^4 \le 2\left(\Delta^2\epsilon\right)^4 \le 2\Delta^8  \epsilon^4,
\end{align*}
since $\Delta^2 \epsilon > \Delta^{4} \epsilon^2 \ge \frac{\Delta^5}{\Delta+1} \epsilon^2$ for the chosen $\epsilon$. By using that $\frac{\Delta^6(\Delta^2 + 2\Delta-1)}{(\Delta+1)^2(\Delta+2)} \geq \frac{\Delta^5}{12}$ for $\Delta\geq 1$, we have,
\begin{align*}
g(y_1) &= \frac{\Delta^2\epsilon - \frac{\Delta^5}{\Delta+1}\epsilon^2}{\Delta} + \frac{\left(\Delta^2\epsilon - \frac{\Delta^5}{\Delta+1}\epsilon^2\right)^2}{\Delta+1} + \frac{\left(\Delta^2\epsilon - \frac{\Delta^5}{\Delta+1}\epsilon^2\right)^3}{\Delta+2} + \sum_{j=3}^\infty \frac{(1-y_1)^{j+1}}{j+\Delta}\\ 
&\leq \Delta\epsilon - \frac{\Delta^4}{\Delta+1}\epsilon^2 + \frac{\left(\Delta^2\epsilon - \frac{\Delta^5}{\Delta+1}\epsilon^2\right)^2}{\Delta+1} + \frac{\Delta^6\epsilon^3}{\Delta+2} + 2\Delta^8 \epsilon^4 \\
&\leq \Delta\epsilon - \frac{\Delta^4}{\Delta+1}\epsilon^2 + \frac{\Delta^4}{\Delta+1}\epsilon^2 - \frac{2\Delta^7}{(\Delta+1)^2}\epsilon^3 + \frac{\Delta^{10}}{(\Delta+1)^3} \epsilon^4 + \frac{\Delta^6}{\Delta+2}\epsilon^3
+ 2\Delta^8 \epsilon^4 \\ 
& \leq\Delta\epsilon - \frac{\Delta^6(\Delta^2+2\Delta-1)}{(\Delta+1)^2(\Delta+2)} \cdot \epsilon^3 + 3\Delta^8 \epsilon^4\\
&\leq\Delta\epsilon - \frac{\Delta^5}{12} \cdot \epsilon^3 + 3\Delta^8 \epsilon^4 \\
& < \Delta \epsilon,
\end{align*}
where the last inequality holds because $\epsilon < \frac1{36 \cdot \Delta^3}$.
Now, observe that $\frac{2\Delta^{12}(\Delta^2 + 2\Delta-1)}{(\Delta+1)^3(\Delta+2)} \leq 6\Delta^{10}$ and $\frac{\Delta^8\left( \Delta^4 + 5\Delta^3 + 4\Delta^2-8\Delta+2\right)}{(\Delta+1)^3(\Delta+2)(\Delta+3)} \geq \frac{\Delta^7}{96}$ for $\Delta\geq 1$. Note also that $(1-x)^4 \geq (1-8x)$ for $x\in (0,1/8)$. We lower bound $g(y_2)$ by truncating the sum to the fourth term, 
\begin{align*}
g(y_2) & > \frac{\Delta^2\epsilon - \frac{\Delta^5}{\Delta+1}\epsilon^2 + \frac{\Delta^7(\Delta^2+2\Delta-1)}{(\Delta+1)^2(\Delta+2)}\epsilon^3}{\Delta} + \frac{\left(\Delta^2\epsilon - \frac{\Delta^5}{\Delta+1}\epsilon^2 + \frac{\Delta^7(\Delta^2+2\Delta-1)}{(\Delta+1)^2(\Delta+2)}\epsilon^3\right)^2}{\Delta+1} \\ 
& \quad + \frac{\left(\Delta^2\epsilon - \frac{\Delta^5}{\Delta+1}\epsilon^2 + \frac{\Delta^7(\Delta^2+2\Delta-1)}{(\Delta+1)^2(\Delta+2)}\epsilon^3\right)^3}{\Delta+2} +  \frac{\left(\Delta^2\epsilon - \frac{\Delta^5}{\Delta+1}\epsilon^2 + \frac{\Delta^7(\Delta^2+2\Delta-1)}{(\Delta+1)^2(\Delta+2)}\epsilon^3\right)^4}{\Delta+3}\\
& \geq \Delta\epsilon - \frac{\Delta^4}{\Delta+1}\epsilon^2 + \frac{\Delta^6(\Delta^2+2\Delta-1)}{(\Delta+1)^2(\Delta+2)}\epsilon^3  + \frac{\left(\Delta^2\epsilon - \frac{\Delta^5}{\Delta+1}\epsilon^2 + \frac{\Delta^7(\Delta^2+2\Delta-1)}{(\Delta+1)^2(\Delta+2)}\epsilon^3\right)^2}{\Delta+1} \\
&\quad + \frac{\left(\Delta^2\epsilon - \frac{\Delta^5}{\Delta+1}\epsilon^2 \right)^3}{\Delta+2} + \frac{\left(\Delta^2\epsilon - \Delta^4\epsilon^2 \right)^4}{\Delta+3}\\
& \geq \Delta\epsilon - \frac{\Delta^4}{\Delta+1}\epsilon^2 + \frac{\Delta^6(\Delta^2+2\Delta-1)}{(\Delta+1)^2(\Delta+2)}\epsilon^3 +\frac{\Delta^4}{\Delta+1}\epsilon^2 + \frac{\Delta^{10}}{(\Delta+1)^3}\epsilon^4 - \frac{2\Delta^7}{(\Delta+1)^2}\epsilon^3 \\
& \quad + \frac{2\Delta^9(\Delta^2+2\Delta-1)}{(\Delta+1)^3(\Delta+2)}\epsilon^4 - \frac{2\Delta^{12}(\Delta^2+2\Delta-1)}{(\Delta+1)^3(\Delta+2)} \epsilon^5 \\
& \quad + \frac{\Delta^6}{\Delta+2}\epsilon^3 -\frac{3\Delta^9}{(\Delta+1)(\Delta+2)}\epsilon^4 - \frac{\Delta^{15}}{(\Delta+1)^3(\Delta+2)} \epsilon^6 + \frac{(\Delta^2\epsilon)^4 (1-\Delta^2\epsilon)^4}{\Delta+3} \\
& \geq \Delta\epsilon + \frac{\Delta^{10}}{(\Delta+1)^3}\epsilon^4 + \frac{2\Delta^9(\Delta^2+2\Delta-1)}{(\Delta+1)^3(\Delta+2)}\epsilon^4 -\frac{3\Delta^9}{(\Delta+1)(\Delta+2)}\epsilon^4 + \frac{\Delta^8\epsilon^4(1-8\Delta^2\epsilon)}{\Delta+3} - 7\Delta^{10}\epsilon^5\\
& \geq \Delta\epsilon + \frac{\Delta^8\left( \Delta^4 + 5\Delta^3 + 4\Delta^2-8\Delta+2\right)}{(\Delta+1)^3(\Delta+2)(\Delta+3)} \cdot \epsilon^4 - 15\Delta^{10}\epsilon^5 \\
& \geq \Delta\epsilon +\frac{\Delta^7}{96} \epsilon^4 - 15\Delta^{10}\epsilon^5\\
&> \Delta\epsilon,
\end{align*}
where the last inequality follows since $\epsilon< \frac{1}{2\cdot 10^3 \cdot \Delta^3}$. Since $g(y)$ is continuous and decreasing, it must be $y_2 < \hat{w}_1 < y_1$, and therefore, %
\[
\hat{w}_1 = 1 - \Delta^2\epsilon + \frac{\Delta^5}{\Delta+1}\epsilon^2 - R(\Delta, \epsilon),
\]
where $R(\Delta,\epsilon)=y_1 - \hat{w}_1 \leq y_1 - y_2 \leq 3\Delta^6 \epsilon^3$. Moreover, by inverting the relation $\hat{w}_1=e^{-(1-w_2)\mu}$ we get, $w_2=1+\frac{\ln\hat{w}_1}{\mu}$. Observe that $w_2$ is well-defined because $y_2 \geq 1/2 > e^{-\mu}$, and thus $\frac{\ln(\hat{w}_1)}{\mu}>-1$.

Let $\alpha=f(1)+\hat{f}(1)-\hat{f}(w_2)-f'(1-\hat{w}_1)\hat{w}_1-f(1-\hat{w}_1)$. By \Cref{thm:bg} the maximum matching in our graph has size at most $(\alpha+o(1))n$. Note that $\hat{f}(w_2)=\exp\left(\frac{\ln\hat{w}_1}{\mu}\mu\right)=\hat{w}_1$, and $f(1)=\hat{f}(1)=1$. We have,
\begin{align*}
\alpha & = 2 - \hat{f}(w_2) - f'(1-\hat{w}_1)\hat{w}_1 - f(1-\hat{w}_1) \\
& = 2 - \hat{w}_1 - \Delta\epsilon\hat{w}_1(1-\hat{w}_1)^{\Delta-1}+\hat{w}_1\ln(\hat{w}_1) + \hat{w}_1(1-\hat{w}_1)^{\Delta}\cdot \Phi(1-\hat{w}_1,1,\Delta)+\\
&\quad \epsilon-\frac{1}{\Delta} - \hat{w}_1 \left( \ln \hat{w}_1 + (1-\hat{w}_1)^{\Delta+1}\cdot\Phi(1-\hat{w}_1,1,\Delta+1) \right) - (1-\hat{w}_1)+\frac{(1-\hat{w}_1)^{\Delta+1}}{\Delta}-\epsilon(1-\hat{w}_1)^\Delta\\
&=1-\Delta\epsilon\hat{w}_1(1-\hat{w}_1)^{\Delta-1} + \hat{w}_1(1-\hat{w}_1)^{\Delta}\cdot \Phi(1-\hat{w}_1,1,\Delta)+\\
&\quad\epsilon-\frac{1}{\Delta}-\hat{w}_1 (1-\hat{w}_1)^{\Delta+1}\cdot\Phi(1-\hat{w}_1,1,\Delta+1)+\frac{(1-\hat{w}_1)^{\Delta+1}}{\Delta}-\epsilon(1-\hat{w}_1)^\Delta.
\end{align*}
For each $x \in [0,1]$, and for each integer $k \ge 2$, we have $x^{k+1} \cdot \Phi(x,1,k+1) +  \frac{x^k}k= x^k \cdot \Phi(x,1,k)$. Thus, $(1-\hat{w}_1)^{\Delta}\Phi(1-\hat{w}_1,1,\Delta) = (1-\hat{w}_1)^{\Delta+1}\Phi(1-\hat{w}_1,1,\Delta+1) + \frac{(1-\hat{w}_1)^{\Delta}}{\Delta}$. Therefore, by simplifying,
\begin{align*}
\alpha & = 1 + \epsilon - \frac{1}{\Delta} -\Delta\epsilon\hat{w}_1(1-\hat{w}_1)^{\Delta-1} + \frac{\hat{w}_1(1-\hat{w}_1)^\Delta+(1-\hat{w}_1)^{\Delta+1}}{\Delta} - \epsilon(1-\hat{w}_1)^\Delta \\
& = 1 + \epsilon - \frac{1}{\Delta} -\Delta\epsilon\hat{w}_1(1-\hat{w}_1)^{\Delta-1} + \frac{(1-\hat{w}_1)^{\Delta}}{\Delta} - \epsilon(1-\hat{w}_1)^\Delta \\
&=1+\epsilon-\frac{1}{\Delta}+(1-\hat{w}_1)^{\Delta-1}\left(-\Delta\epsilon\hat{w}_1 + \frac{1-\hat{w}_1}{\Delta} - \epsilon(1-\hat{w}_1)\right)\\
&=1 +\epsilon-\frac{1}{\Delta}-(1-\hat{w}_1)^{\Delta-1}\left(\epsilon\cdot(1+(\Delta-1)\hat{w}_1) - \frac{1-\hat{w}_1}{\Delta}\right).
\end{align*}
Note that $(1-\hat{w}_1)^{\Delta-1}>0$ and moreover, we have, by using $\hat{w}_1=1-\Delta^2\epsilon+\frac{\Delta^5}{\Delta+1}\epsilon^2 - R(\Delta,\epsilon)$, 
\begin{align*}
&\epsilon(1+(\Delta-1)\hat{w}_1) - \frac{1-\hat{w_1}}{\Delta} \\
&= \epsilon\left(1 + (\Delta-1)\left(1 - \Delta^2\epsilon + \frac{\Delta^5}{\Delta+1}\epsilon^2 - R(\Delta,\epsilon)\right)\right) - \frac{\Delta^2\epsilon - \frac{\Delta^5}{\Delta+1}\epsilon^2 + R(\Delta,\epsilon)}{\Delta}\\
& = \epsilon\left(\Delta - (\Delta-1)\Delta^2\epsilon +\frac{\Delta^5(\Delta-1)}{\Delta+1}\epsilon^2 - (\Delta-1)R(\Delta,\epsilon) \right) - \Delta\epsilon + \frac{\Delta^4}{\Delta+1}\epsilon^2- \frac{R(\Delta,\epsilon)}{\Delta}\\
& =\frac{\Delta^2}{\Delta+1} \cdot \epsilon^2 - R_2(\Delta, \epsilon), 
\end{align*}
with $R_2(\Delta,\epsilon) = \frac{R(\Delta,\epsilon)}{\Delta} + \epsilon(\Delta-1)R(\Delta,\epsilon)  - \frac{\Delta^5(\Delta-1)}{(\Delta+1)}\epsilon^3$ and therefore, $|R_2(\Delta,\epsilon)|\leq 7\Delta^5\epsilon^3$.
Thus, since $\epsilon < \frac{1}{7\Delta^3(\Delta+1)}$, we have $\alpha < 1 -\frac{1}{\Delta} + \epsilon$. In particular,
\begin{align*}
\alpha &=  1 - \frac1{\Delta} + \epsilon - \left(\Delta^2 \cdot \epsilon - \frac{\Delta^5}{\Delta+1} \cdot \epsilon^2 + R(\Delta,\epsilon)\right)^{\Delta - 1} \cdot \left(\frac{\Delta^2}{\Delta+1} \cdot \epsilon^2 - R_2(\Delta,\epsilon)\right) \\
&=1 - \frac1{\Delta} + \epsilon - \left(\Delta^{2\Delta-2} \cdot \epsilon^{\Delta-1} +R_3(\Delta,\epsilon)\right) \cdot \left(\frac{\Delta^2}{\Delta+1} \cdot \epsilon^2 - R_2(\Delta,\epsilon)\right) \\
&= 1 - \frac1{\Delta} + \epsilon - \frac{\Delta^{2\Delta}}{\Delta+1} \cdot \epsilon^{\Delta+1} + R_4(\Delta,\epsilon),
\end{align*}
where $R_3(\Delta,\epsilon)$ is the sum of $3^{\Delta-1}-1$ terms, each no larger than $(3\Delta)^{6(\Delta-1)}\epsilon^{\Delta}$ in absolute value, and thus, $|R_3(\Delta,\epsilon)|\leq (3\Delta)^{7(\Delta-1)}\epsilon^\Delta$. Moreover,
\begin{align*}
|R_4(\Delta,\epsilon)| &\leq \left|\frac{\Delta^2 \epsilon^2 R_3(\Delta,\epsilon)}{\Delta+1}\right| + \left|R_3(\Delta,\epsilon)R_2(\Delta,\epsilon)\right| + \left|R_2(\Delta,\epsilon)\Delta^{2\Delta-2}\epsilon^{\Delta-1}\right|\\
& \leq (3\Delta)^{7\Delta-5}\epsilon^{\Delta+2} + 7(3\Delta)^{7\Delta-2}\epsilon^{\Delta+3} + 7\Delta^{2\Delta+3}\epsilon^{\Delta+2}\\
&\leq 15\cdot (3\Delta)^{7\Delta} \cdot \epsilon^{\Delta+2}.
\end{align*}
Thus, $\psi_\Delta(\epsilon) = \frac{\Delta^{2\Delta}}{\Delta+1} \epsilon^{\Delta+1} - R_4(\Delta,\epsilon)> 0$ since $\epsilon < \frac{1}{15\cdot (3\Delta)^{7\Delta}}$.
\end{appendixproof}
We can now use the previous Lemma to prove the main result of this Section. %
\begin{theoremrep}\label{thm:extremality-arbitrary-distr}
Suppose that there exists a constant $i \ge 1$ such that $\Pr[P\leq j] = 1- \frac1j$ for each $j \in \{1,2,\ldots,i-1\}$, and $\Pr[P\leq i] > 1 - \frac1{i}$. Then, with probability $1-o(1)$,  the instance $\irregularcuckoohashing(n,n,P)$ admits no matching that matches at least $\Pr[P\leq i] \cdot n - o(n)$ users of degree $\le i$.
\end{theoremrep}
\begin{appendixproof}
Consider first the case $i=1$. Note that, if $\Pr[P=0] > 0$, then, from a simple concentration bound, we have that, with probability $1-o(1)$, at least $\Pr[P=0] \cdot n - o(n)$ users of degree $\leq i$ cannot be matched. Consider now the case $\Pr[P=0]=0$ and $\Pr[P=1] = \epsilon > 0$. With probability $1-o(1)$, there are at least $(\epsilon - o(1))n$ users of degree 1. Let $S$ be the set of these users. Consider the generic ad $a$. We compute the probabilities that this ad has $0$ or $1$ neighbors in $S$: $\Pr[a \text{ has no neighbor in } S] = \left(1-\frac1n\right)^{|S|} = e^{-\epsilon} \pm o(1)$, and $\Pr[a \text{ has exactly 1 neighbor in } S] = |S| \cdot \frac1n \cdot \left(1-\frac1n\right)^{|S|-1} = \epsilon \cdot e^{-\epsilon} \pm o(1)$. Thus, if we let $T$ be the set of ads that have $2$ or more neighbors in $S$, we have that, with probability $1-o(1)$, $|T| \ge (1 - (\epsilon + 1)\cdot e^{-\epsilon}) \cdot n - o(n)$. Given that each user in $S$ has degree $1$, in each matching $M$, each ad in $T$ contributes to the count of users unmatched by $M$ with (at least) a unique user. Therefore, at least $(1 - (\epsilon + 1)\cdot e^{-\epsilon})\cdot n - o(n)$ users of degree $1$ will remain unmatched.

Consider now the case $i\geq 2$. Let $\rho=\Pr[P=i] - \frac{1}{i(i-1)} > 0$, and let $0<\epsilon < \rho$ be a small enough constant with respect to $i$ such that \Cref{lem:eps-mass-extremality} holds. Let $I$ be an instance sampled according to distribution $P$. While sampling $I$, we build the instance $I'$ in the following way: (i) if the sampled user has degree $>i$, we change its degree to $0$, and (ii) if the sampled user has degree exactly $i$, with probability $\frac{\rho - \epsilon}{\Pr[P=i]}$ we change its degree to $0$. Let $A$ be the set of users that had their degree changed from $i$ to $0$. Let $M(I)$ (resp. $M(I')$) denote the number of users of degree $\leq i$ that can be matched in instance $I$ (resp. $I'$). We have $M(I) \leq M(I') + |A|$. Note that instance $I'$ is distributed as the instance of \Cref{def:eps-mass-instance} with $\Delta=i$. Therefore by \Cref{lem:eps-mass-extremality}, with probability $1-o(1)$, $M(I')\leq (1-\frac{1}{i}+\epsilon-\psi_i(\epsilon) +o(1))n$ for some constant $\psi_i(\epsilon)>0$. Moreover, by a simple concentration bound (e.g., \Cref{fact:chernoff-hoeffding}), with probability $1-o(1)$, $|A|\leq (\rho - \epsilon + o(1))n$. Therefore, with probability $1-o(1)$, $M(I)\leq (1-\frac{1}{i}+\epsilon-\psi_i(\epsilon)+\rho-\epsilon+o(1))n=(\Pr[P \leq i] - \psi_i(\epsilon) + o(1))n$. %
\end{appendixproof}

Our distribution, then, is extremal among the irregular cuckoo hashing distributions that make it possible to match $n-o(n)$ users to the $n$ ads; indeed, it has $\Pr[P\le i] = 1 - \frac1i$ for each integer $i \ge 1$, and thus it greedily maximizes, for each $d = 0, 1, 2, \ldots$, the number of users of degree $d$ under the constraint that a $1-o(1)$ fraction of the users of degree at most $d$ can be matched. 

\section{Generalization to Non-Equibipartite Graphs}%
\label{sec:uneven}
In this Section, we consider a more general case, where we have $n$ ads and $u \cdot n$ users, for a constant $u \in (0,1]$. As in the instance of Definition~\ref{def:instance}, we will greedily add users of smallest degree, while guaranteeing the existence of a quasi-complete matching. Our generalized instances have varying competitive ratios; we will show that the worst competitive ratio is obtained at $u = 1$,\footnote{We do not consider explicitly the case where there are more users than ads, that is, the case $u > 1$. This case can be easily reduced to the case of $u = 1$, i.e., to the case of Definition~\ref{def:instance}. %
Indeed, if $u > 1$ and one aims to greedily add (fractions of) users of the minimum degree that allow a quasi-complete matching to exist, the first $n$ users will have the degrees given by the distribution of Definition~\ref{def:instance}. As we proved in Theorem~\ref{thm:mm}, these first $n$ users induce a matching that is quasi-perfect, and thus quasi-complete from the ads side. The next minimum user degree which guarantees a quasi-complete matching is then $0$. Thus, if $u > 1$, the greedy approach would produce $(u-1)\cdot n$ users of degree $0$, and   $\frac1{d\cdot(d-1)} \cdot n$ users of degree $d$ for each $d \ge 2$. The resulting graph will then induce online and  offline matching sizes  equal to those of Definition~\ref{def:instance}.} that is, with the original instance of Definition~\ref{def:instance} (see Figure~\ref{fig:DuCR} and Theorem~\ref{thm:optimal_u}).
\begin{definition}[Generalized Instance]\label{def:generalized-u-instance}
Let $u\in(0,1]$ be a constant. Consider the irregular cuckoo hashing instance $\irregularcuckoohashing(\lceil u \cdot n \rceil, n, D_u)$. The distribution $D_u$ is defined as,
\begin{itemize}
\item if $u=1$, $D_u$ is equal to the instance of \Cref{def:instance}, that is, $\Pr[D_u=d] = \frac{1}{d \cdot (d-1)}$ for $d\geq 2$,
\item if $u\in(0,1)$, let $\Delta$ be the minimum integer such that $u \le 1 - \frac1{\Delta}$, that is, let $\Delta = \left\lceil\frac1{1-u}\right\rceil$. Then, $\Delta \ge 2$. For each $d \in \{2,3,\ldots,\Delta-1\}$, we set $\Pr[D_u=d] = \frac{1}{u\cdot d \cdot (d-1)}$, and $\Pr[D_u=\Delta] = 1 - \frac{1}{u} \cdot \left(1-  \frac{1}{\Delta-1}\right)$.
\end{itemize}
\end{definition}
In particular, the distribution $D_u$ is obtained by conditioning the $D$ of \Cref{def:instance} on an event of probability $u$, chosen to include the smallest possible values of $D$'s support. Observe that, for each $u \in \left(0,\frac12\right]$, the distribution $D_u$ assigns probability $1$ to degree $2$. Observe, further, that as $u$ increases from $\frac{d-1}d$ to $\frac d{d+1}$, for any integer $d \ge 2$, $D_u$ assigns more and more probability to degree $d+1$.

\smallskip

The following claim is a corollary of the analysis of the maximum matching for the case $u=1$.
\begin{corollaryrep}\label{cor:quasi-complete-u}
The maximum matching of the instance of \Cref{def:generalized-u-instance} has size $u \cdot n - o(n)$ with probability $1-o(1)$; thus, the resulting graph admits a quasi-complete matching.
\end{corollaryrep}
\begin{appendixproof}
We only need to prove this for $u<1$. Let us add $n- \lceil u \cdot n\rceil$ new users: $\frac{n}{\Delta}$ users of degree 0 and $(1-u-\frac{1}{\Delta})n$ users of degree $\Delta$ (rounded to be integers). With an argument similar to \Cref{obs:graph-concentrated}, for each $d\in\{2,3,\dots, \Delta\}$ the number of users of degree $d$ is concentrated around the expected value. Conditioning on this event that happens with probability $1-o(1)$, the number of edges is $\sum_{d=2}^\Delta \frac{n}{d-1}\pm o(n^{\nicefrac45})=(H_{\Delta-1} \pm o(n^{\nicefrac{-1}5}))n$. Therefore, with probability $1-o(1)$, the degrees of the ads follow a Poisson distribution of parameter $H_{\Delta-1}$ as in \Cref{obs:graph-concentrated}. Conditioning on this event, we obtain the same configuration model as the one of \Cref{lem:karp-sipser-conf-model}, therefore the resulting graph has a matching of size at least $(1-\frac{1}{\Delta}-o(1))n$, and thus, the original $D_u$ instance has a matching of size at least $(1-\frac{1}{\Delta}- 1+u+\frac{1}{\Delta}-o(1))n=(u-o(1))n$, with probability $1-o(1)$.
\end{appendixproof}

One can prove that the $D_u$ construction is extremal using the same argument we developed in the last section for the case $u = 1$ (i.e., moving any constant mass from higher degrees to smaller degrees makes the quasi-complete matching lose a constant fraction of the edges). 

\begin{corollaryrep}
Suppose that there exists a constant $i \ge 1$ such that $\Pr[P\leq j] = \Pr[D_u \le j]$ for each $j \in \{1,2,\ldots,i-1\}$, and $\Pr[P\leq i] > \Pr[D_u \le i]$. Then in the instance $\irregularcuckoohashing(\lceil u\cdot n\rceil,n,P)$, with probability $1-o(1)$, there exists no matching that matches at least $\Pr[P \leq i] \cdot u \cdot n - o(n)$ users of degree $\le i$.
\end{corollaryrep}
\begin{appendixproof}
We only need to consider the case $u<1$. Consider the case $i=1$ first. Clearly, if $\Pr[P=0]>0$, a constant fraction of vertices of degree $\leq i$ cannot be matched. If, instead, $\Pr[P=0]=\Pr[D_u=0]=0$, and $\Pr[P=1] = \epsilon > 0$ then, with an argument similar to the one used in \Cref{thm:extremality-arbitrary-distr}, there are, with probability $1-o(1)$, $(\epsilon+o(1))\cdot u\cdot n$ users of degree $1$, and at least $(1-(u\cdot \epsilon+1)e^{-u\cdot \epsilon})n - o(n)$ of these users will not be matched. Thus, the claim is proved for $i=1$.

Consider now $i>1$. Since $\Pr[D_u \leq \Delta]=1$, it must be $1<i<\Delta$. Let $\rho = \Pr[P=i] - \Pr[D_u =i] > 0$ and let $0<\epsilon<u\cdot \rho$ be a small enough constant with respect to $i$. With probability $1-o(1)$ the number of users of degree $\{2,3,\dots, \Delta\}$ is concentrated around the expectation. Condition on this event, add $n-\lceil u\cdot n\rceil$ users of degree 0 and remove all edges incident to users of degree larger than $i$ and to $(\rho u-\epsilon \pm o(n^{\nicefrac{-1}5}))n$ users of degree $i$. The resulting graph has $(\frac{1}{d(d-1)}\pm o(n^{\nicefrac{-1}{5}}))n$ users of degree $d\in\{2,\dots, i-1\}$, $(\frac{1}{u\cdot i(i-1)}+\rho \pm o(n^{\nicefrac{-1}5}))u\cdot n - (\rho u - \epsilon \pm o(n^{\nicefrac{-1}5}))n=(\frac{1}{i(i-1)}+\epsilon\pm o(n^{\nicefrac{-1}5}))n$ users of degree $i$, and the number of users of degree 0 is,
\begin{align*}
&n-\lceil un \rceil + (\rho u - \epsilon \pm o(n^{\nicefrac{-1}5}))n + \left(1-\sum_{j=2}^{i}\frac{1}{ui(i-1)}-\rho \pm  o(n^{\nicefrac{-1}5})\right)u\cdot n\\
&= n-\lceil un \rceil + (\rho u - \epsilon \pm o(n^{\nicefrac{-1}5}))n + \left(1-\frac{1}{u}\left(1 - \frac{1}{i}\right)-\rho \pm o(n^{\nicefrac{-1}5})\right)u\cdot n\\
& = n\left(\frac{1}{i} - \epsilon \pm o(n^{\nicefrac{-1}5})\right).
\end{align*}
Thus, the resulting graph is the configuration model described in \Cref{lem:eps-mass-extremality} with $\Delta=i$ and therefore, for a small enough constant $\epsilon$ with respect to $i$, with probability $1-o(1)$ the maximum matching has size at most $(1-\frac{1}{i}+\epsilon-\psi_i(\epsilon) + o(1))n$, for $\psi_i(\epsilon)>0$. Thus, in the instance given by $P$, with probability $1-o(1)$, the number of users of degree $\leq i$ that can be matched is at most $(1-\frac{1}{i}+\epsilon-\psi_i(\epsilon) + o(1))n + (\rho u - \epsilon + o(1))n = (\Pr[P \leq i] - \frac{\psi_i(\epsilon)}{u})u\cdot n + o(n)$. Note that $0<\psi_i(\epsilon)\leq \epsilon < u\rho$ and $\nicefrac{\psi_i(\epsilon)}{u} > 0$, which concludes the proof.
\end{appendixproof}
We now move on to studying the performance of online algorithms on the generalized instance of \Cref{def:generalized-u-instance}. We start with a technical lemma that relates the expected number of users to capture a new ad with distribution $D_u$, and with distribution $D$. This Lemma will be used to %
bound the number of users needed to capture a new ad with distribution $D_u$, in terms of the number of users needed with $D$ --- we need this Lemma because the integral required to determine the number of users to capture an ad with $D_u$ is not easy to solve, while the one for $D$ has already been solved in the proof of \Cref{thm:om}.
\begin{lemmarep}\label{lem:partial_ub}
Fix $u \in \left(\frac12,1\right)$ and let $\Delta = \left\lceil\frac1{1-u}\right\rceil$, $c_i = \frac1{u \cdot i\cdot(i-1)}$ for $i \in \{2,\ldots,\Delta-1\}$ and $c_\Delta = 1-\sum_{i=2}^{\Delta-1} \frac1{u \cdot i \cdot (i-1)} = 1 - \frac{\Delta-2}{u \cdot(\Delta-1)}$.\footnote{Observe that $c_i = \Pr[D_u = i] \; \forall i \in \{2,\ldots,\Delta\}$, where $D_u$ is the distribution of Definition~\ref{def:generalized-u-instance}.}
    Moreover, let $x \in \left[0,u\right)$. Then, 
    \[
    \frac1{1- \sum_{i=2}^\Delta \left(c_i \cdot x^i\right)} \le  \frac{1-\frac{1}{8(1+\ln 2)\Delta} \cdot \left[ x \ge \frac{u}{2}\right]}{\left(1-\frac xu\right)\left(1-\ln\left(1-\frac xu\right)\right)}.
    \]
\end{lemmarep}
\begin{appendixproof}
Observe that $u > \frac12$ implies $\Delta \ge 3$. Then,
\begin{align*}
    \frac1{1- \sum_{i=2}^\Delta \left(c_i \cdot x^i\right)} &=
     \frac1{1- \frac1u \cdot \sum_{i=2}^{\Delta-1} \frac{x^i}{i\cdot(i-1)} -  c_\Delta \cdot x^{\Delta} } \\
    &=\frac1{1-\frac 1u\cdot\left(\frac{x^2}2+\sum_{i=3}^{\Delta-1} \frac{x^i}{i \cdot (i-1)}+u \cdot c_\Delta \cdot x^{\Delta}\right)}\\
    &=\frac1{1-\frac 1u\cdot\left(\frac{x^2}2+\sum_{i=3}^{\Delta-1} \frac{x^i}{i \cdot (i-1)}+\left(u - \frac{\Delta-2}{\Delta-1}\right) \cdot x^{\Delta}\right)}.
\end{align*}
Observe that $\Delta \ge \frac1{1-u}$ so that $1-u \ge \frac1{\Delta}$ and $u \le  \frac{\Delta-1}{\Delta}$. Therefore, $u - \frac{\Delta-2}{\Delta-1} \le \frac{\Delta-1}{\Delta} - \frac{\Delta-2}{\Delta-1} =\frac1{\Delta\cdot(\Delta-1)}$. Then,
\begin{align*}
    \frac1{1- \sum_{i=2}^\Delta \left(c_i \cdot x^i\right)} &\le \frac1{1-\frac 1u\cdot\left(\frac{x^2}2+\sum_{i=3}^{\Delta-1} \frac{x^i}{i \cdot (i-1)}+\frac1{\Delta\cdot(\Delta -1)}\cdot x^{\Delta}\right)}\\
    &=\frac1{1-\frac 1u\cdot\left(\frac{x^2}2+\sum_{i=3}^{\Delta} \frac{x^i}{i \cdot (i-1)}\right)}\\
    &=\frac1{1+\frac{(x/u)^2 -x^2/u}2-\frac{(x/u)^2}2-\frac1u\sum_{i=3}^{\Delta} \frac{x^i}{i \cdot (i-1)}}\\
    &\le\frac1{1+\frac{(x/u)^2 -x^2/u}2-\frac{(x/u)^2}2-\sum_{i=3}^{\Delta} \frac{(x/u)^i}{i \cdot (i-1)}}\\
     &=\frac1{1+\frac{(x/u)^2 -x^2/u}2-\sum_{i=2}^{\Delta} \frac{(x/u)^i}{i \cdot (i-1)}}\\
     &=\frac1{1+\frac{u^{-2} - u^{-1}}2 \cdot x^2-\sum_{i=2}^{\Delta} \frac{(x/u)^i}{i \cdot (i-1)}}\\
     &\le\frac1{1-\sum_{i=2}^{\infty} \frac{(x/u)^i}{i \cdot (i-1)}+\frac{u^{-2} - u^{-1}}2 \cdot x^2}\\
     &=\frac1{\left(1-\frac xu\right)\left(1-\ln\left(1-\frac xu\right)\right)+\frac{u^{-2} - u^{-1}}2 \cdot x^2}
\end{align*}
where the last step follows from $1 - \sum_{i=2}^{\infty} \frac{y^i}{i\cdot(i-1)} = (1-y) (1-\ln(1-y))$ for $y\in(0,1)$ (see Equation~\ref{eqn:qi}). %
We have $\frac{u^{-2} - u^{-1}}{2} \cdot x^2 \ge 0$ for each $x \ge 0$, and $\frac{u^{-2} - u^{-1}}{2} \cdot x^2 \ge \frac{1-u}{8}$ for each $x \ge \frac{u}{2}$, thus, $\frac{u^{-2} - u^{-1}}{2} \cdot x^2 \ge \frac{1-u}{8} \cdot \left[x \ge \frac{u}{2}\right]$.  Then,
\begin{align*}
    \frac1{1- \sum_{i=2}^\Delta \left(c_i \cdot x^i\right)}
    &\le\frac1{\left(1-\frac xu\right)\left(1-\ln\left(1-\frac xu\right)\right)+\frac{1-u}{8} \cdot \left[x \ge \frac{u}{2}\right]}.
\end{align*}
Moreover, given that $\left(1- \frac xu\right)\left(1-\ln\left(1-\frac xu\right)\right)$ is decreasing in $\left[0,u\right)$, we have that $\left(1-\frac xu\right)\left(1-\ln\left(1-\frac xu\right)\right) \le \frac{1 + \ln 2}2$ for $x \in \left[\frac u2, u\right)$; moreover, given that $u \le 1-\frac1{\Delta}$, we have that $\frac{1-u}8 \ge \frac1{8 \Delta}$. It follows that, for $x \in \left[\frac u2,u\right)$, $\frac{1-u}8 \ge  \left(1-\frac xu\right)\left(1-\ln\left(1-\frac  xu\right)\right) \cdot \frac2{1+\ln 2} \cdot \frac1{8\Delta}$. Consequently, for each $x \in [0,u)$,
\begin{align*}
    \frac1{1- \sum_{i=2}^\Delta \left(c_i \cdot x^i\right)}
    &\le \frac{1}{\left(1-\frac xu\right)\left(1-\ln\left(1-\frac xu\right)\right)\left(1+ \frac{1}{4(1+\ln 2)\Delta} \cdot \left[x \ge \frac{u}{2}\right]\right)}\\
    &\le\frac{1-\frac1{8(1+\ln 2)\Delta} \cdot \left[x \ge \frac{u}{2}\right]}{\left(1-\frac xu\right)\left(1-\ln\left(1-\frac xu\right)\right)},
\end{align*}
where the last inequality follows from $\frac1{1+a} \le 1-a/2$, for each $a \in [0,1]$.
\end{appendixproof}
We now study the performance of  online algorithms on the instance $D_u$ for $u<1$.
\begin{lemmarep}\label{lem:online-matching-u-instance}
Fix $u \in (0,1)$. Then, the online greedy algorithm, if run on the $D_u$ distribution of Definition~\ref{def:generalized-u-instance}, will match at least $\left(1-e^{1-e}+\frac{1-u}{500} \right)  \cdot u \cdot  n - O(\sqrt{n \ln n})$ ads in expectation. %
\end{lemmarep}
\begin{appendixproof}
We analyze the greedy algorithm that, whenever possible, matches the current user to any available ad. Let $\alpha=u(1-e^{1-e})$, $m=\lfloor \alpha \cdot n \rfloor$. Imagine to keep sampling users until we match $m$ of them, and let $S$ be the total number of users sampled to match $m$ of them. Define, for $i\in\{0,1,\dots,m-1\}$, $x_i=\frac{i}{n}$. Let $c_i=\frac{1}{u\cdot i(i-1)}$ for $i\in\{2,\dots, \Delta-1\}$, and $c_{\Delta}=1 - \frac{\Delta-2}{u(\Delta-1)}$. Define $q_i=1-\sum_{j=2}^\Delta c_j\cdot x_i^j$ for $i\in\{0,\dots, m-1\}$.

We start by showing that $S$ is concentrated around the expected value. Of course, $S\geq m$, and therefore $\E[S] \geq m$. Note that $S=\sum_{i=0}^{m-1} S_i$, where $S_i\sim\geom(q_i)$ for $i\in\{0,\dots, m-1\}$, and $\min_{i\in\{0,\dots, m-1\}}(q_i) = q_{m-1}$. Let us lower bound $q_{m-1}$.
\begin{align*}
q_{m-1} &= 1 - \sum_{i=2}^{\Delta-1} \frac{1}{u\cdot i \cdot (i-1)}\cdot \left(\frac{m-1}{n}\right)^i - \left(1-\frac{\Delta-2}{u(\Delta-1)}\right) \left(\frac{m-1}{n}\right)^\Delta\\
& \geq  1 - \sum_{i=2}^{\Delta-1} \frac{1}{u\cdot i \cdot (i-1)}\cdot \alpha^i - \left(1-\frac{\Delta-2}{u(\Delta-1)}\right) \alpha^\Delta\\
&\geq 1 - \sum_{i=2}^{\Delta-1} \frac{1}{i \cdot (i-1)}\cdot (1-e^{1-e})^i -\alpha^\Delta  \geq 1 - \sum_{i=2}^{\infty} \frac{1}{i \cdot (i-1)}\cdot (1-e^{1-e})^i -\alpha^\Delta\\
&=e^{2-e} -\alpha^\Delta \geq e^{2-e} - (u(1-e^{1-e}))^{\frac{1}{1-u}}  > \frac{1}{4},
\end{align*}
where in the last equality we use that $1-\sum_{i\geq 2}\frac{x^i}{i(i-1)} = (1-x)(1-\ln(1-x))$ for $x\in(0,1)$ (see \Cref{eqn:qi}), and the last inequality follows from the fact that the maximum of the function $f(x)=(x(1-e^{1-e}))^{\frac{1}{1-x}}$ is not larger than $1/5$ in $(0,1)$. We can now apply \Cref{fact:geom-concentration}, where we choose $\lambda=1+\epsilon$ for $\epsilon=\sqrt{\frac{16\cdot \ln n}{m}}\in(0,1)$. It holds,
\begin{align*}
\Pr[S\geq (1+\epsilon)\E[S]] &\leq e^{-q_{m-1}\cdot \E[S] \cdot (\epsilon - \ln(1+\epsilon))} \leq e^{-\frac{m \cdot \epsilon^2}{16}}  \leq \frac{1}{n},
\end{align*}
where the second inequality holds because $\epsilon-\ln(1+\epsilon)\geq \frac{\epsilon^2}{4}$ for $\epsilon\in(0,1)$.

We now aim to upper bound the expected value of $S$. Since it is the sum of geometric random variables, we have,
\begin{align*}
\frac{\E[S]}{n} & = \frac{1}{n} \sum_{i=0}^{m-1} \frac{1}{q_i}= \frac{1}{n}\sum_{i=0}^{m-1}\frac{1}{1-\sum_{j=2}^\Delta c_j \cdot x_i^j}\\
& \leq \int_0^{m/n}\frac{1}{1-\sum_{j=2}^\Delta c_j \cdot x^j} \, dx  \leq \int_0^{\alpha}\frac{1}{1-\sum_{j=2}^\Delta c_j \cdot x^j} \, dx,
\end{align*}

where for the inequalities we use that $\frac{1}{n}\sum_{i=0}^{m-1}\frac{1}{q_i}$ is a left Riemann sum over $[0,\frac{m}{n}]$ and $\frac{1}{1-\sum_{j=2}^\Delta c_j x^j}$ is positive and increasing for $x\in[0,\frac{m}{n}]$. 

We now study the integral $\int_0^{\alpha}\frac{1}{1-\sum_{i=2}^\Delta c_i \cdot x^i} \, dx$. Consider the case $u\in(0,1/2]$ first, so that $\Delta=2$, we have,
\[
\int_0^{\alpha}\frac{1}{1-\sum_{i=2}^\Delta c_i \cdot x^i} \, dx = \int_0^{\alpha} \frac{1}{1-x^2} \, dx = [\arctanh(x)]_0^{\alpha} = \arctanh\left(u(1-e^{1-e})\right) \leq u\left(1-\frac{0.03}{\Delta}\right),
\]
where the last inequality follows from $\arctanh(x(1-e^{1-e})) \leq x(1 - 0.03)$ for $x\in[0,\frac12]$.\footnote{This can be proved by checking the condition at $x=0$ and $x=\frac12$ and by observing that $x(1 - 0.03)-\arctanh(x(1-e^{1-e}))$ is concave over $[0,\frac12]$.} Let us now consider the case $u\in(1/2, 1)$. By \Cref{lem:partial_ub}, we have, 
\begin{align*}
\int_0^{\alpha} \frac1{1- \sum_{i=2}^\Delta \left(c_i \cdot x^i\right)} \, dx 
& \le 
    \int_0^{\alpha} \frac{1-\frac{1}{8(1+\ln 2)\Delta} \cdot \left[ x \ge \frac{u}{2}\right]}{\left(1-\frac xu\right)\left(1-\ln\left(1-\frac xu\right)\right)} \, dx\\
& = u\cdot 
    \int_0^{\alpha \cdot \frac 1u} \frac{1-\frac{1}{8(1+\ln 2)\Delta} \cdot \left[ x \ge \frac12\right]}{\left(1- x\right)\left(1-\ln\left(1-x\right)\right)} \, dx\\
& = u\cdot 
    \int_0^{1-e^{1-e}} \frac{1-\frac{1}{8(1+\ln 2)\Delta} \cdot \left[ x \ge \frac12\right]}{\left(1- x\right)\left(1-\ln\left(1-x\right)\right)} \, dx\\
& = u\cdot \left(\int_0^{\frac12} \frac{1}{\left(1- x\right)\left(1-\ln\left(1-x\right)\right)} \, dx + \int_{\frac12}^{1-e^{1-e}} \frac{1-\frac{1}{8(1+\ln 2)\Delta}}{\left(1- x\right)\left(1-\ln\left(1-x\right)\right)} \, dx\right)\\
& = u \cdot \left( \ln(1+\ln 2) + \left(1-  
\frac{1}{8(1+\ln 2)\Delta}\right) \cdot \left(1 - \ln(1+\ln2)\right)\right)\\
&= u \cdot \left( 1 -\frac{1}{8(1+\ln 2)\Delta} \cdot \left(1 - \ln(1+\ln2)\right)\right)\\
&= u \cdot \left(1 - \frac{1- \ln(1+\ln2)}{8(1+\ln2)} \cdot \frac1\Delta\right)\\
&\le  u \cdot \left(1 - \frac{0.03}\Delta\right),
\end{align*}
where the first equality follows by a change of variable and the fourth equality follows from $\int\frac{1}{(1-x)(1-\ln(1-x))} \, dx = \ln(1-\ln(1-x))+c$. Thus, we have proved that $\E[S] \leq n\cdot u \cdot (1-\frac{0.03}{\Delta})$.

We are now ready to bound the expected number of matched ads. Let $A_k$ be the number of ads matched after sampling $k$ users (allowing also $k>u\cdot n$). Note $S\leq k \iff A_{\lfloor k\rfloor} \geq m$. Moreover, for $b \leq a+c$, $A_a+c \geq A_{b}$. Let $X = A_{\lceil n \cdot u \cdot (1 - \frac{0.03}{\Delta}) \rceil}$ be the number of ads matched with the first $\lceil n \cdot u \cdot (1 - \frac{0.03}{\Delta}) \rceil$ users. We have,
\begin{align*}
\Pr\left[X \geq m - \epsilon \cdot u \cdot n\left(1 - \frac
{0.03}{\Delta}\right)\right] 
& = \Pr\left[A_{\lceil n \cdot u \cdot (1 - \frac{0.03}{\Delta}) \rceil} + \epsilon u n\left(1- \frac{0.03}{\Delta}\right) \geq m\right] \\
& \geq \Pr\left[A_{\lfloor n \cdot u \cdot (1 - \frac{0.03}{\Delta}) + \epsilon u n(1- \frac{0.03}{\Delta}) \rfloor} \geq m\right]\\
& \geq \Pr[A_{\lfloor (1+\epsilon)\E[S] \rfloor} \geq m] \\
& = \Pr[S\leq (1+\epsilon)\E[S]]\\
& \geq 1 - \frac{1}{n}.
\end{align*}
Therefore, $\E[X] \geq u\cdot n \cdot (1-e^{1-e}) - O(\sqrt{n \ln (n)})$. 
Consider the next $\lceil nu\rceil - \lceil n \cdot u \cdot (1 - \frac{0.03}{\Delta}) \rceil \geq \lfloor n\cdot u \cdot \frac{0.03}{\Delta} \rfloor$ users. Each of these users has a probability of at least $\left(1 - \frac{X+\frac{0.03}{\Delta}\cdot u\cdot n}{n}\right)$ of being matched to an ad, let $Y$ be the total number of these users that get matched. Finally, we have,

\begin{align*}
\E[A_{\lceil u n\rceil}] & = \E[X]  +\E[Y]\\
& = \E[X] + \E[\E[Y\mid X]] \\
&\geq \E[X] +\E\left[n\cdot u\cdot \frac{0.03}{\Delta} \cdot \left(1 - \frac{X+\frac{0.03}{\Delta}\cdot u\cdot n}{n}\right) - 1\right]\\
& = \E[X] + \E\left[\left(n - X - \frac{0.03}{\Delta}\cdot u \cdot n\right) \frac{0.03}{\Delta}u\right] - 1\\
& = \E[X] + \left(1-\frac{0.03}{\Delta}u\right)\frac{0.03}{\Delta}un - \frac{0.03}{\Delta}\cdot u \cdot \E[X] - 1\\
&= \left(1-\frac{0.03}{\Delta} u\right)\E[X] + \left(1-\frac{0.03}{\Delta}u\right)\frac{0.03}{\Delta}un -1\\
&\geq \left(1-\frac{0.03}{\Delta}u\right)un(1-e^{1-e}) + \left(1-\frac{0.03}{\Delta}u\right)\frac{0.03}{\Delta}un - O\left(\sqrt{n\ln n}\right)\\
& = \left(1-\frac{0.03}{\Delta}u\right)\cdot \left(1 -e^{1-e} + \frac{0.03}{\Delta}\right) \cdot u \cdot n - O\left(\sqrt{n\ln n}\right)\\
&= \left( (1-e^{1-e}) + \frac{0.03}\Delta -\frac{0.03u}\Delta(1-e^{1-e})  - \frac{0.0009u}{\Delta^2}\right) \cdot u \cdot n - O\left(\sqrt{n\ln n}\right)\\
&\ge \left( (1-e^{1-e}) + \frac{0.03}\Delta -\frac{0.03}\Delta(1-e^{1-e})  - \frac{0.0009}{\Delta}\right) \cdot u \cdot n - O\left(\sqrt{n\ln n}\right)\\
&= \left( (1-e^{1-e}) + \frac{0.03 \cdot e^{1-e}}\Delta   - \frac{0.0009}{\Delta}\right) \cdot u \cdot n - O\left(\sqrt{n\ln n}\right)\\
&\ge \left( 1-e^{1-e} + \frac{0.004}\Delta \right) \cdot u \cdot n - O\left(\sqrt{n\ln n}\right)\\
&\ge \left( 1-e^{1-e} + \frac{0.004}{2\cdot \frac1{1-u}}\right) \cdot u \cdot n - O\left(\sqrt{n\ln n}\right)\\
&= \left( 1-e^{1-e} + 0.002 \cdot (1-u)\right) \cdot u \cdot n - O\left(\sqrt{n\ln n}\right),
\end{align*}
where in the last inequality we use $\Delta=\left\lceil \frac{1}{1-u}\right\rceil \leq 2\cdot \frac{1}{1-u}$ for $u\in(0,1)$.
\end{appendixproof}

We can now conclude that $u = 1$ gives the strongest bound of $1-\frac{e}{e^e}$:%
\begin{theoremrep}\label{thm:optimal_u}
The strongest impossibility result for the competitive ratio of online matching that can be obtained with the limiting constructions of \Cref{def:generalized-u-instance} is $1 - \frac{e}{e^e} \pm o(1)$. This bound can be obtained by setting $u = 1$, that is, with the construction of \Cref{def:instance}.   
\end{theoremrep}
\begin{appendixproof}
If $u \in (0,1)$, by \Cref{lem:online-matching-u-instance}, the competitive ratio with the distribution $D_u$ of \Cref{def:generalized-u-instance} is strictly larger --- that is, better --- than $1 - \frac e{e^e} + c \cdot (1-u)$, for a fixed constant $c > 0$. 
    
If $u = 1$, %
by \Cref{lem:analysis-online-u1-tight} and \Cref{thm:mm}, the competitive ratio with $u=1$ cannot be smaller than $1-e^{1-e}-o(1)$. Moreover, by \Cref{thm:mainlb}, the competitive ratio is no larger than $1- e^{1-e} + o(1)$. 
\end{appendixproof}

\section{Conclusion}\label{sec:conclusion}
We proved that the optimal competitive ratio of several online bipartite matching problems cannot be larger than $1-\frac{e}{e^e}$. The simplicity of our upper bound expression, and its similarity to the optimal competitive ratio for adversarial graphs that are also adversarially permuted, naturally raise the question of its potential optimality. We leave as open questions whether this upper bound is tight in (i) irregular cuckoo hashing instances, in (ii) IID instances with known distribution, in (iii) IID instances with unknown distribution, and in the hardest case, that of (iv) adversarial graphs with users permuted uniformly at random.

\section*{Acknowledgments}
We thank Will Ma, David Wajc, Pan Xu, and the anonymous reviewers for several comments and suggestions.%

\bibliographystyle{plain}
\bibliography{main}

\newpage
\appendix

\end{document}